\documentclass{article}

\usepackage{fullpage}
\usepackage{parskip}
\usepackage{amsmath, amsthm, amsfonts, mathdots, thm-restate, mathrsfs, mathtools}
\usepackage{graphicx, tablefootnote, hhline, wrapfig, float, etoolbox}
\usepackage{booktabs, subcaption, multicol, multirow, vwcol, xspace, soul}
\usepackage[title]{appendix}
\usepackage[T1]{fontenc}
\usepackage[dvipsnames]{xcolor}
\usepackage[numbers]{natbib}
\usepackage{hyperref}
\hypersetup{
    colorlinks=true,
    linkcolor=blue,
    urlcolor=blue,
    citecolor=blue,
}

\AfterEndEnvironment{wrapfigure}{\setlength{\intextsep}{0pt}}
\usepackage{pgfplots,tikz}
\pgfplotsset{compat=1.15}

\usepackage[ruled,linesnumbered,noend]{algorithm2e}

\SetAlFnt{\small}
\SetAlCapFnt{\small}
\SetAlCapNameFnt{\small}
\SetAlCapHSkip{0pt}
\IncMargin{-\parindent}
\SetKwInput{KwData}{Input}
\SetKwInput{KwResult}{Output}

\newtheorem{theorem}{Theorem}

\newtheorem{lemma}{Lemma}
\newtheorem{corollary}{Corollary}
\newtheorem{observation}{Observation}
\newtheorem{claim}{Claim}

\theoremstyle{plain}
\newtheorem{definition}{Definition}

\allowdisplaybreaks


\newcommand{\OPT}{\mathrm{OPT}}
\newcommand{\xOPT}{x^{\OPT}}

\newcommand{\poly}{\mathrm{poly}}
\newcommand{\polylog}{\mathrm{polylog}}

\newcommand\mc{\mathcal}



\newcommand{\conv}{\mathrm{conv}}


\newcommand{\R}{\mathbb{R}}

\newcommand{\Z}{\mathbb{Z}}

\newcommand{\RO}{\R_{\ge 0}}

\newcommand{\ZO}{\Z_{\ge 0}}



\newcommand{\ov}{\overline}

\newcommand{\domain}{\mathcal{D}}   
\newcommand{\class}{\mathbf{C}}     
\newcommand{\da}{\downarrow}        
\newcommand{\Perm}{\mathrm{Perm}}

\newcommand{\simplex}[1]{\mathbf{\Delta}_{#1}}
\newcommand{\topk}{\mathbf{Top}}
\newcommand{\ordered}{\mathbf{Ord}}

\newcommand{\smn}{\mathbf{Sym}}
\newcommand{\prob}[1]{\textup{\textsc{#1}}\xspace}

\newcommand{\clustering}{$k$-\prob{Clustering}}
\newcommand{\ufl}{\prob{Uncapacitated-Facility-Location}}
\newcommand{\completiontimes}{\prob{Completion-Times}}

\newcommand{\machineloadsidenticaljobs}{\prob{Machine-Loads-Identical-Jobs}}
\newcommand{\mlij}{\prob{MLIJ}}
\newcommand{\osc}{\prob{Ordered-Set-Cover}}
\newcommand{\ovc}{\prob{Ordered-Vertex-Cover}}
\newcommand{\coveringpolyhedra}{\prob{Covering-Polyhedron}}

\newcommand{\orderedtsp}{\prob{Ordered-TSP}}
\newcommand{\composable}{\prob{Composable}}
\newcommand{\ordsat}{\prob{Ordered-Satisfaction}}

\newcommand{\orderandcount}{\texttt{OrderAndCount}\xspace}
\newcommand{\iterativeordering}{\texttt{IterativeOrdering}\xspace}
\newcommand{\iterativeclustering}{\texttt{IterativeClustering}\xspace}


\usepackage{authblk}

\begin{document}

\title{Balancing Notions of Equity: Trade-offs Between Fair Portfolio Sizes and Achievable Guarantees}

\date{}
\author[1]{Swati Gupta}
\author[2]{Jai Moondra}
\author[2]{Mohit Singh}

\affil[1]{\small Massachusetts Institute of Technology \protect\\ {\small \tt swatig@mit.edu}}
\affil[2]{\small Georgia Institute of Technology \protect \\
{\small \tt jmoondra3@gatech.edu, mohit.singh@isye.gatech.edu}}

\maketitle

\begin{abstract}
    Motivated by fairness concerns, we study the `portfolio problem': given an optimization problem with set $\domain$ of feasible solutions, a class $\class$ of fairness objective functions on $\domain$, and an approximation factor $\alpha \ge 1$, a set $X \subseteq \domain$ of feasible solutions is an \emph{$\alpha$-approximate portfolio} if for each objective $f \in \class$, there is an $\alpha$-approximation for $f$ in $X$. Choosing the classes of top-$k$ norms, ordered norms, and symmetric monotonic norms as our equity objectives, we study the trade-off between the size $|X|$ of the portfolio and its approximation factor $\alpha$ for various combinatorial problems. For the problem of scheduling identical jobs on unidentical machines, we characterize this trade-off for ordered norms and give an exponential improvement in size for symmetric monotonic norms over the general upper bound. We generalize this result as the \orderandcount framework that obtains an exponential improvement in portfolio sizes for covering polyhedra with a constant number of constraints. Our framework is based on a novel primal-dual counting technique that may be of independent interest.
    We also introduce a general \iterativeordering framework for simultaneous approximations or portfolios of size $1$ for symmetric monotonic norms, which generalizes and extends existing results for problems such as scheduling, $k$-clustering, set cover, and routing.
\end{abstract}


\paragraph*{Acknowledgements.} This work was supported by NSF Grants CCF-2106444, CCF-1910423, 2112533, NSF CAREER Grant 2239824, and the Georgia Tech ARC-ACO Fellowship.

\section{Introduction}\label{sec: introduction}

With rapid adoption and proliferation of data-driven decisions, widespread inequalities exist in our society in various forms, often perpetuated by optimized decisions to problems in practice. For example, the existence of food deserts is well-documented across the world \cite{usda_usda_2023, battersby_africas_2014, cummins_food_2002, gartin_food_2012}. The US Department of Agriculture \cite{usda_usda_2023} defines a food desert as a low-income census tract where families below the poverty line do not have a large\footnote{``Large'' is defined as a store with at least $\$ 2$ million annual profit and containing all traditional food departments.} grocery chain within 1 mile of their location in urban areas or 10 miles in rural areas. Gupta et al. \cite{gupta_which_2023} similarly show that medical deserts -- regions with a significant fraction of the population below the poverty line, but far off from the nearest medical facility --  disproportionately affect racial minorities in the US. The decisions to open such facilities are driven by demand, and therefore, optimized decisions tend to overlook sparsely populated regions with vulnerable populations.

As another example, over the last decade, many retailers have adopted scheduling
optimization systems \cite{bernhardt2021data}. These systems draw on a variety of data to predict customer demand and make decisions about the most efficient workforce schedule. Some systems, e.g. Percolata, estimate sales productivity scores for each worker and create schedules based
on these scores. Concerns about fairness of workload again arise, as such optimizations result in highly variable,
unpredictable, and discordant schedules for workers. Further, there is evidence of workload inequity in many work environments, including academia \cite{o2021equity}, last-mile delivery drivers \cite{lyu2022towards}, and hospital workers \cite{rooddehghan2015nurses}.

In such applications, the decision is often to maximize the efficiency in the system, however, this results in unequal costs borne by various groups of people. A large number of fairness notions have been proposed in the literature that attempt at ``balancing'' such costs across groups or individuals, such as minimizing some norm of the distances traveled by groups of people \cite{chakrabarty_approximation_2019, chlamtac_approximating_2022, gupta_which_2023, patton_submodular_2023}, finding simultaneous solutions \cite{kumar_fairness_2000, goel_simultaneous_2006, golovin_all-norms_2008}, balancing statistical outcomes in machine learning \cite{chouldechova_fair_2017, feldman_certifying_2015, hardt_equality_2016}, and balancing allocations in social welfare problems \cite{conitzer_group_2019}.
However, even these notions of fairness can be fundamentally incompatible in the sense that a single solution may not be fair with respect to two or more notions of fairness \cite{kleinberg_inherent_2018, gupta_which_2023}. One workaround is to understand the possibilities offered by a (small) set of solutions, called {\it portfolios}, so that there is some representative solution achieving approximate fairness for any single notion of fairness \cite{gupta_which_2023}. Motivated by the practice of selecting organ transplantation policies, \cite{gupta_which_2023} define the \emph{portfolio} problem as follows: {\it given an optimization problem with a set or domain of feasible solutions $\domain$, a class $\class$ of objective functions that represent various equity notions, an approximation factor $\alpha$, and size $s$, find a {\it portfolio} $X \subseteq \domain$ of $s$ solutions, so that for any objective $f \in \class$ there exists a solution $x\in X$ that $\alpha$-approximates $\min_{x\in \mathcal{D}} f(x)$.} X is called an $\alpha$-approximate portfolio. The case $s = 1$ captures simultaneous approximations \cite{kumar_fairness_2000, goel_simultaneous_2006, chandra_worst-case_1975, azar_all-norm_2004, golovin_all-norms_2008}.

For various combinatorial problems and different classes of objectives, it is not clear what the minimum size of an $\alpha$-approximate portfolio needed to achieve a given approximation factor is. Larger portfolios are needed for better approximations, and the goal is to keep size $s$ of the portfolio small. Further, as the set $\class$ of equity objectives grows larger, small portfolios may not even exist. We study portfolios for various combinatorial problems where feasible solutions induce a vector of loads or costs on individuals, such as scheduling, covering, facility location, and routing problems. For the class of equity objectives, we study
\begin{enumerate}
    \item \emph{Top-$k$ norms} \cite{goel_simultaneous_2006, chakrabarty_interpolating_2018}, where the top-$k$ norm of a vector $x \in \R^d$ is the sum of the $k$ highest coordinates of $x$ by absolute value. Top-$k$ norms generalize the $L_1$ and $L_\infty$ norms.
    \item \emph{Ordered norms} \cite{chakrabarty_approximation_2019, patton_submodular_2023}, where given a non-zero \emph{weight vector} $w \in \R_{\ge 0}^d$ with decreasing weights $w_1 \ge \dots \ge w_d \ge 0$, the ordered norm of $x \in \R_{\ge 0}^d$ is the weighted sum of coordinates of $x$ with the $k$th highest coordinate of $x$ weighted by the $k$th highest weight $w_k$. Ordered norms generalize top-$k$ norms and have a natural fairness interpretation of minimizing the cost of the most mistreated individuals when $x$ is a vector of individual costs.
    \item \emph{Symmetric monotonic norms} \cite{kumar_fairness_2000, goel_simultaneous_2006, golovin_all-norms_2008, chakrabarty_approximation_2019}, which are norms that are (i) invariant to the permutation of coordinates and (ii) nondecreasing in each coordinate. $L_p$ norms, top-$k$ norms, and ordered norms are all symmetric monotonic norms.\footnote{Ordered norms are fundamental to symmetric monotonic norms in two aspects: each symmetric monotonic norm (1) is $O(\log d)$-approximated by some ordered norm \cite{patton_submodular_2023}, and (2) is the supremum of some set of ordered norms \cite{chakrabarty_approximation_2019}.}
\end{enumerate}

In this work, we partially answer the question:
\begin{center}
{\it ``What is the trade-off between achievable portfolio size and corresponding approximation factors for various combinatorial optimization problems? Is there a general recipe for constructing small portfolios for ordered and symmetric monotonic norms?''}
\end{center}

In particular, we focus on three general combinatorial problems: scheduling, covering, and facility location, motivated by workplace scheduling and access to critical facilities.
While much effort has gone into determining the best possible simultaneous approximations (portfolio of size $1$), little is known about the construction of portfolios of size greater than 1. For top-$k$ norms, Goel and Meyerson \cite{goel_simultaneous_2006} essentially obtain a $(1 + \epsilon)$-approximate portfolio of size $O\left(\frac{\log d}{\epsilon}\right)$; the same bound also holds for $L_p$ norms \cite{golovin_all-norms_2008, gupta_which_2023}.

However, for ordered norms, only a general construction of $\poly(d^{1/\epsilon})$-sized $(1 + \epsilon)$-approximate portfolios was known before this work, due to Chakrabarty and Swamy \cite{chakrabarty_approximation_2019}, while no bound was known for symmetric monotonic norms. We observe that their result generalizes to symmetric monotonic norms (Lemma \ref{obs: polynomial-portfolios-for-smn}). It was also known that a solution that is simultaneously $\alpha$-approximate for all top-$k$ norms is, in fact, simultaneously $\alpha$-approximate for all symmetric monotonic norms \cite{goel_simultaneous_2006}.
    {\it This property is no longer true for portfolios of size greater than 1} (e.g., see our Example 1, Theorem \ref{thm: portfolios-for-top-k-norms-are-not-portfolios-for-ordered-norms}, or Theorem \ref{thm: portfolio-size-lower-bound-smn}). In particular, we show that the approximation guarantee of a portfolio for top-$k$ norms and ordered norms can differ by a factor polynomial in $d$. Consequently, we cannot restrict to constructing portfolios only for top-$k$ norms and need new techniques for the much larger sets of ordered norms and symmetric monotonic norms. We show that there exist sets $\domain \subseteq \R^d$ for which the portfolio size must be $d^{\Omega(1/\log\log d)}$ (i.e., nearly polynomial in $d$) for ordered and symmetric monotonic norms even for approximation as large as $O(\log d)$ (see Theorem \ref{thm: portfolio-size-lower-bound-smn}).

\begin{figure}[t]
    \begin{minipage}{0.54\textwidth}
        \centering
        \includegraphics[width=0.8\linewidth]{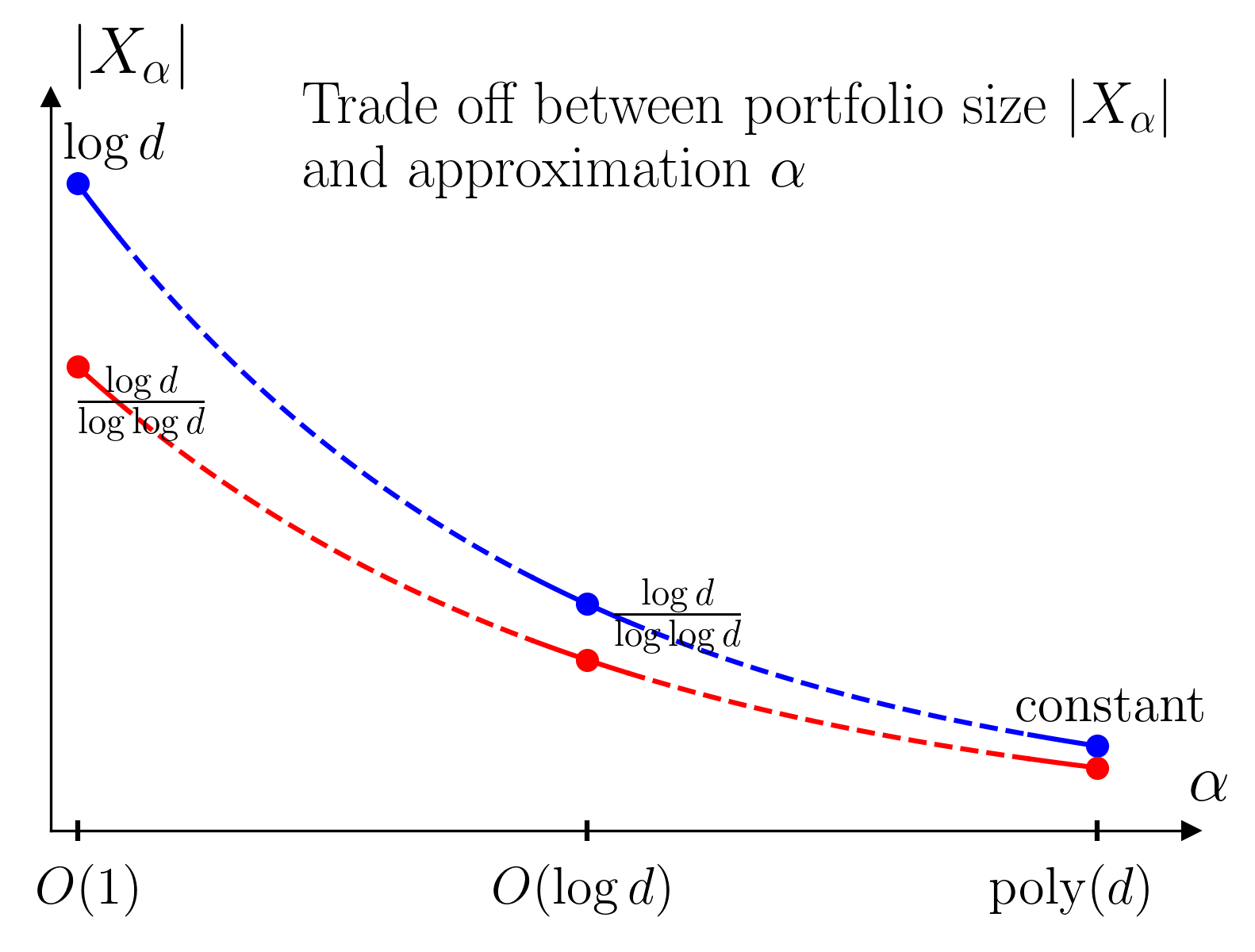}
    \end{minipage}
    \hfill
    \begin{minipage}{0.45\textwidth}
        \caption{A qualitative plot to illustrate the trade-off between approximation $\alpha$ and the smallest portfolio size $|X_\alpha|$ for the \textsc{Machine-Loads-Identical-Jobs} problem for ordered norms. The worst-case lower bound $|X_\alpha| = \Omega\left(\frac{\log d}{\log \alpha + \log\log d}\right)$ is illustrated in red, and the {upper bound} $|X_\alpha| = O\left(\frac{\log d}{\log(\alpha/4)}\right)$ is illustrated in blue. The two bounds converge for $\alpha = \Omega(\log d)$.}
        \label{fig: trade-off-chart}
    \end{minipage}
\end{figure}

\subsection{Our contributions}

To obtain smaller portfolios for covering and scheduling problems, we develop a general framework called \orderandcount. We obtain size-$\polylog(d)$ portfolios using this framework, an \emph{exponential} improvement over the general size bound (see Table \ref{tab: portfolios-size-more-than-1}). In addition, we consider simultaneous approximations where we unify and generalize previously known algorithms as the \iterativeordering framework. Table \ref{tab: portfolios-size-1} summarizes our new results obtained using this framework. We detail these contributions next.

\subsubsection{Characterizing trade-off for \machineloadsidenticaljobs} As our first result, we consider the {\sc Machine-Loads-Identical-Jobs} (\mlij) problem where $n$ identical jobs must be scheduled on $d$ unidentical machines to minimize some norm of the vector of machine loads. This is a simple model for workload distribution among $d$ workers with different processing speeds, and various norms correspond to various fairness criteria for fair distribution of jobs. We prove the following result for this setting:

\begin{restatable}{theorem}{identicaljobsscheduling}\label{thm: identical-jobs-portfolio}
There is a polynomial-time algorithm that given any instance of the \textsc{\textup{Machine-Loads-Identical-Jobs}} \textup{(\mlij)} problem with $d$ machines and any $\alpha > 4$, finds a portfolio $X$ of size
\[
    |X| = O\left(\frac{\log d}{\log (\alpha/4)}\right)
\]
that is (i) $\alpha$-approximate for ordered norms and (ii) $O(\alpha \log d)$-approximate for symmetric monotonic norms. Further, there exists a family of instances of  \mlij for which the size of any $\alpha$-approximate portfolio for ordered norms is lower bounded by $\Omega\left(\frac{\log d}{\log \alpha + \log\log d}\right)$. This characterizes the size-approximation trade-off for $\alpha>4$.
\end{restatable}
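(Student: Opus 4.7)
The plan is to take a portfolio of \emph{balanced} assignments with geometrically spaced support sizes. For each $m \in [d]$, let $y_m$ denote the (integer) MLIJ assignment obtained by rounding the fractional solution that distributes $n$ jobs among the top $m$ fastest machines so as to equalize their loads; its fractional common load is $n/H_m$, where $H_m := \sum_{i \le m} 1/p_i$ and $p_i$ is the processing time of the $i$-th fastest machine. Setting $c := \alpha/4$ and $K := \lceil \log d / \log c \rceil$, define
\[
    X := \{\, y_{m_j} : j = 0, 1, \dots, K \,\}, \qquad m_j := \min\bigl(d,\, \lceil c^j \rceil\bigr),
\]
so $|X| = O(\log d / \log(\alpha/4))$.

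The analysis proceeds in two steps. \emph{Step 1 (reduction to balanced solutions).} I would show that, for every ordered norm $f_w$, there exists $m^* \in [d]$ with $f_w(y_{m^*}) = O(1) \cdot f_w(x^*)$, where $x^*$ is any MLIJ-optimum of $f_w$. Ordered norms are Schur-convex in the load vector, so smoothing the loads of $x^*$ on its support $S$ of size $m$ can only decrease $f_w$; replacing $S$ by the top-$m$ machines then only further decreases the balanced common load, since $H_m \ge H(S)$; and integer rounding costs a constant factor. \emph{Step 2 (interpolation).} For any $m^* \in [d]$ and any $m' \in [m^*/c,\, c\cdot m^*]$,
\[
    \frac{f_w(y_{m'})}{f_w(y_{m^*})} \;=\; \frac{H_{m^*}}{H_{m'}}\cdot\frac{\sum_{i \le m'} w_i}{\sum_{i \le m^*} w_i} \;\le\; c,
\]
because each factor is bounded by $\max(m^*/m',\, m'/m^*)$: monotonicity of $1/p_i$ in $i$ gives $H_m/H_{m'} \le m/m'$ for $m \ge m'$ (the top-$m$ machines have maximal harmonic speed), and the same bound on partial sums of $w$ follows from monotonicity of $w_i$. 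Since $X$ contains some $y_{m_j}$ with $m_j \in [m^*/c,\, c m^*]$, composing Steps 1 and 2 yields the $\alpha$-approximation. The extension to symmetric monotonic norms is immediate from the known fact (recalled in the introduction) that every symmetric monotonic norm on $\R_{\ge 0}^d$ is $O(\log d)$-approximated by some ordered norm: $X$ is then $O(\alpha \log d)$-approximate for symmetric monotonic norms.

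\textbf{Lower bound.} For the matching lower bound I would construct instances whose machine speeds fall into $L$ geometrically separated scales, together with a family of $L$ ``scale'' ordered norms whose optima have essentially disjoint supports. The core argument is that any single MLIJ solution can be $\alpha$-approximate for at most $O(\log \alpha + \log \log d)$ consecutive scales: if the solution effectively balances over $m$ machines, then for scales too far from $m$ either too little load falls in the relevant support, or too much mass is wasted on spurious coordinates, and the ordered-norm value degrades beyond the factor $\alpha$. This forces any $\alpha$-approximate portfolio to have $\Omega(L/(\log \alpha + \log \log d))$ distinct elements; choosing $L$ so that the instance has $d$ machines yields $\Omega(\log d/(\log \alpha + \log \log d))$.

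\textbf{Main obstacle.} I expect the lower bound to be the hardest part: both the careful design of the adversarial instance and the family of scale-norms, and the tight accounting for the additive $\log\log d$ term in the denominator---which reflects the gap between the target approximation $\alpha$ and the per-step base $\alpha/4$ used in the upper bound. On the upper-bound side, the subtler point is making precise the reduction to balanced solutions (Step 1) and controlling the loss from rounding the fractional balanced assignments to integer ones in every ordered norm simultaneously.
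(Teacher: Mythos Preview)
Your high-level plan for the upper bound---take the balanced assignments $y_m$ on the $m$ fastest machines and keep only geometrically spaced values of $m$---is exactly the paper's, and your Step~2 interpolation bound is essentially the paper's final majorization step. The genuine gap is in Step~1. The Schur-convexity move does not do what you claim: smoothing $x^*$ on its support $S$ replaces each coordinate by the \emph{average} $\bar x=\tfrac{1}{|S|}\sum_{i\in S}x^*_i$, which has smaller ordered norm but is \emph{not} a feasible load vector (it does not satisfy $\sum_i x_i/p_i=n$). The feasible ``balanced common load'' on $[m]$ is $n/H_m$, and you give no reason why $n/H_m\le \bar x$; in fact, when $x^*$ happens to be coordinate-wise decreasing, Chebyshev's sum inequality gives the opposite direction. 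So the chain ``smooth, then replace $S$ by the top-$m$ machines'' does not establish $f_w(y_m)\le O(1)\,f_w(x^*)$. A second, related gap is the rounding: ``integer rounding costs a constant factor'' is true only when the fractional vertex $y_m$ is \emph{good}, i.e.\ $n/H_m\ge p_m$ so every used machine gets at least one fractional job; you do not establish this, and without it rounding can blow up by an unbounded factor.

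The idea you are missing is the paper's \emph{doubling} trick: first round every $p_i$ to the nearest power of~$2$, losing a factor~$2$. On a doubling instance an exchange argument shows that the optimum for any symmetric monotone norm satisfies $x^{\mathrm{OPT}}_1\ge\cdots\ge x^{\mathrm{OPT}}_d$. This monotonicity places $x^{\mathrm{OPT}}$ inside the polytope $\{x:\sum_i x_i/p_i=n,\ x_1\ge\cdots\ge x_d\ge 0\}$, whose vertices are exactly the fractional balanced $y_1,\dots,y_d$, and simultaneously forces the relevant vertex $y_l$ (with $l$ at most the support size of $x^{\mathrm{OPT}}$) to be good. That yields Step~1 with a clean constant and makes the rounding loss exactly~$2$, so the total is $2\cdot 2\cdot(\alpha/4)=\alpha$ rather than an unspecified $O(\alpha)$. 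For the lower bound your plan is in the right spirit, but the accounting is off: the paper does not show that one solution covers $O(\log\alpha+\log\log d)$ scales out of $\log d$; rather it takes only $L=\Theta\bigl(\log d/(\log\alpha+\log\log d)\bigr)$ scales with ratio $S\approx\alpha L$ (so $S^L\approx d$) and shows each scale needs its own solution. The $\log\log d$ term comes from solving $L=\log_S d$ with $S=\Theta(\alpha L)$, not from a per-solution coverage count.
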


\begin{table}[t]
    \centering
    \caption{Approximations for size $>1$ portfolios for ordered norms and symmetric monotonic norms, for arbitrary $\epsilon \in (0, 1]$. Previously, only a $\poly(d^{1/\epsilon})$-sized portfolio was known \cite{chakrabarty_approximation_2019} for $(1 + \epsilon)$-approximation for
    ordered norms, for dimension $d$ problems.}
    \label{tab: portfolios-size-more-than-1}
    \footnotesize
    \hspace*{-1em}
    \begin{tabular}{ccccc}
        \hline
        \multirow{4}{*}{\begin{tabular}[c]{@{}c@{}} Problem or set of \\
        feasible vectors $\domain$ \end{tabular}} &
        \multirow{4}{*}{\begin{tabular}[c]{@{}c@{}} Worst-case \\ approximation factor \\ for simultaneous\\ approximation\end{tabular}} &
        \multicolumn{3}{c}{Guarantees for portfolio of size $> 1$} \\ \cmidrule{3-5} & &
        \multicolumn{1}{c}{Size} &
        \multicolumn{1}{c}{\begin{tabular}[c]{@{}c@{}}Approximation\\ for ordered norms\end{tabular}} &
        \begin{tabular}[c]{@{}c@{}}Approximation\\ for symmetric \\ monotonic norms\end{tabular} \\ \hline
        \begin{tabular}[c]{@{}c@{}} \textsc{Machine-Loads-Identical-Jobs}\\ $d$ machines \\ (Theorem \ref{thm: identical-jobs-portfolio}) \end{tabular} &
        $\Omega(\sqrt{d})$ &
        \multicolumn{1}{c}{{$O\left(\frac{\log d}{\epsilon}\right)$}} &
        \multicolumn{1}{c}{$4 + \epsilon$} &
        $O(\log d)$ \\ \hline
        \begin{tabular}[c]{@{}c@{}}\coveringpolyhedra \\
        with $r$ constraints:\\ $\{x \in \R_{\ge 0}^d: Ax \ge b \}$,\\ $A \in \R_{\ge 0}^{r \times d}, b \in \R_{\ge 0}^r$ \\ (Theorem \ref{thm: covering-polyhedron})\end{tabular} &
        $\Omega(\sqrt{d})$ &
        \multicolumn{1}{c}{$\left(\frac{\log (d/\epsilon)}{\epsilon}\right)^{O(r^2)}$} &
        \multicolumn{1}{c}{$1 + \epsilon$} &
        $O(\log d)$ \\ \hline
    \end{tabular}
\end{table}

Note that the above result completely characterizes the trade-off between portfolio sizes and achievable approximation factors (up to $\log \log$ factor) for the {\sc Machine-Loads-Identical-Jobs} problem (See Figure \ref{fig: trade-off-chart}). To obtain this result, we use our \texttt{OrderAndCount} approach, which exploits the fact that each ordered norm, while convex in general, is a \emph{linear function} when restricted to a region where all vectors satisfy the same order of coordinate values. That is, if vector $x \in \R^d$ satisfies $x_{\pi(1)} \geq x_{\pi(2)} \ge \hdots \ge x_{\pi(d)} \ge 0$ for some order $\pi$ on $[d]$, the ordered norm $\|x\|_{(w)}$ is the linear function $\sum_k w_k x_{\pi(k)}$. This gives the following algorithm to obtain portfolios for ordered norms: for each order $\pi$, we can restrict to the set $\domain_{\pi}$ of vectors in $\domain$ that satisfy order $\pi$, and collect the set of extreme points of $\domain_{\pi}$. This in general results in exponentially many solutions (there are exponentially many orders $\pi$ and potentially exponentially many extreme points of each $\domain_{\pi}$). We show that for \mlij, (i) it suffices to restrict to a specific order $\pi^*$ (that depends on the problem instance), and that (ii) there are at most $d$ extreme points of $\domain_{\pi^*}$. We give a rounding algorithm to show that these $d$ extreme points can further be $\alpha$-approximated by a subset of $O(\log_{\alpha/4} d)$ \emph{integral} points, which results in the desired portfolio.

\subsubsection{Exponential improvement in portfolios for covering}

Next, we consider the \coveringpolyhedra problem, which simply includes $r$ covering constraints of the form: $a^\top x \geq b$ (for $a \in \mathbb{R}_{\ge 0}^{d}$, and $b \in \mathbb{R}_{\ge 0}$) together with nonnegativity $x \ge 0$. This generalizes the \mlij problem above, and models many natural scenarios for workload distribution.

Many problems can be modeled as the covering polyhedron, for example, a fair centralized server that must balance the workload on $d$ machines, each with $r$ parallel processing units \cite{nikolic_survey_2022}. This load-balancing problem also appears in the context of volunteer-dependent non-profit organizations, such as HIV social care centers, blood donation drives, food recovery organizations \cite{manshadi_online_2020}, etc. Numerous studies have been conducted on the reasons for the attrition of volunteers, and overburdening by the amount of demands placed on them is one of the key ones \cite{locke2003hold, knapp1995volunteers}. This work can help balance workloads in volunteer organizations, to help mitigate its impact on attrition.

Back to the machine load scheduling terminology, if $b_j$ units for the $j$th job type need to be scheduled, and the machine $i \in [d]$ has processing speed $A_{j, i}$ for the $j$th type of job, then the total loads $x_i, i \in [d]$ on the machines must satisfy $\sum_{i \in [d]} A_{j, i} x_i \ge b_j$. For a given norm $\|\cdot\|_f$ or fairness criterion, this translates into minimizing $\|x\|$ over the covering polyhedron $\{x \in \R^d: Ax \ge b, x \ge 0\}$.

The challenge in extending \texttt{OrderAndCount} to such problems is (i) bounding the number of possible orders that the optimal solution $x^*$ might satisfy, and then (ii) selecting a subset of corresponding extreme points for each order that must be included in the portfolio. For the first challenge, we develop a novel {\it primal-dual counting technique} which allows us to count the number of possible orders in an appropriate dual space that is structurally much simpler (Section \ref{sec: covering-polyhedra}). For the second challenge, we show that a sparsification procedure allows us to reduce the number of extreme points for each order. Together, using \texttt{OrderAndCount}, we give poly-logarithmic sized portfolios for \coveringpolyhedra for constant $r$:

\begin{restatable}{theorem}{coveringpolyhedron}\label{thm: covering-polyhedron}
For \coveringpolyhedra in $d$ dimensions and $r$ constraints, for any $\epsilon \in (0, 1]$, there is a portfolio $X$ of size
\[
    |X| = O\big(\log (d/\epsilon)/\epsilon\big)^{3r^2 - 2r},
\]
which is (i) $(1 + \epsilon)$-approximate for ordered norms, and (ii) $O(\log d)$-approximate for symmetric monotonic norms. The running time of the algorithm is polynomial in $d$ and  $(\log(d)/\epsilon)^{r^2}$.
\end{restatable}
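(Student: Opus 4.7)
The plan is to extend the \orderandcount framework from \mlij to the covering polyhedron $P = \{x \in \R_{\ge 0}^d : Ax \ge b\}$, addressing the two obstacles flagged in the text: (i) bounding the number of coordinate orderings $\pi$ that a near-optimal vertex can satisfy, and (ii) producing a small set of candidates within each order. The starting structural observation is that $P$ is cut out by only $r + d$ halfspaces, so every vertex of $P$ has at most $r$ nonzero coordinates. Hence the only ordering information that matters is (a) the subset $S \subseteq [d]$ of positive coordinates, with $|S| \le r$, and (b) the internal ordering $\pi$ of $S$, since zero coordinates are irrelevant to every ordered norm.

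For obstacle (i), I would invoke the primal-dual counting technique developed in Section~\ref{sec: covering-polyhedra}. Fix a support $S$ and internal order $\pi$; on $P_{S,\pi} := P \cap \{x : x_i = 0 \text{ for } i \notin S,\; x_{\pi(1)} \ge \dots \ge x_{\pi(|S|)}\}$ every ordered norm is a \emph{linear} objective, and LP duality characterizes its optimum by only $r$ dual multipliers, one per covering constraint. Discretizing these dual vectors on a geometric $(1 + \Theta(\epsilon/r))$-grid whose range spans $\polylog(d/\epsilon)$ yields $(\log(d/\epsilon)/\epsilon)^{O(r)}$ distinct grid cells, and each cell, via complementary slackness, induces a single primal ordering that serves as a $(1+\epsilon)$-representative for every ordered norm whose dual optimum falls in that cell. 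Because the support $S$ is itself read off from the primal variables that become positive at the discretized dual optimum, summing over grid points bounds the total number of relevant $(S,\pi)$ pairs by $(\log(d/\epsilon)/\epsilon)^{O(r^2)}$.

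For obstacle (ii), within each surviving order $\pi$ the set $P_{S,\pi}$ is an at most $r$-dimensional polytope, so I would snap each of the $\le r$ positive coordinates of a vertex to the nearest power of $1 + \epsilon/r$, rounding upward to preserve feasibility $Ax \ge b$. The feasible coordinate range is bounded by $\poly(d/\epsilon)$ times a ratio of entries of $A$ and $b$, so each coordinate admits $O(\log(d/\epsilon)/\epsilon)$ discretized values, giving $(\log(d/\epsilon)/\epsilon)^{O(r)}$ candidate points per order at a $(1 + \epsilon/r)^r \le 1 + O(\epsilon)$ cost blow-up. Taking the union over all orders from Step~(i) produces a portfolio of size $(\log(d/\epsilon)/\epsilon)^{O(r^2 + r)}$, matching the stated $O(\log(d/\epsilon)/\epsilon)^{3r^2 - 2r}$. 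The $O(\log d)$ bound for symmetric monotonic norms then follows by composing with Lemma~\ref{obs: polynomial-portfolios-for-smn}, which approximates every symmetric monotonic norm by an ordered norm within $O(\log d)$.

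The main obstacle is the primal-dual counting in Step~(i): unlike \mlij, where a single order $\pi^*$ sufficed, different ordered norms over $P$ can genuinely prefer different coordinate orderings, and one must pin down the correct dual polyhedron in which to quantize, then show that two ordered norms whose dual optima share a grid cell are in fact simultaneously $(1+\epsilon)$-served by a common primal ordering. This requires a careful perturbation analysis showing that small multiplicative changes in dual coordinates translate to at most multiplicative $(1+\epsilon)$ changes in primal ordered-norm cost, so that the $d$-dimensional ordering information can be genuinely compressed into $r$-dimensional dual information without loss.
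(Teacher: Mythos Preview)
Your proposal rests on a false structural claim. You write that ``every vertex of $P$ has at most $r$ nonzero coordinates,'' and then conclude that the ordered-norm minimizer has support of size at most $r$. The first statement is true but irrelevant: ordered norms are convex, not linear, on $P$, so the minimizer $x(w) = \arg\min_{x \in P} \|x\|_{(w)}$ is \emph{not} in general a vertex of $P$. It is a vertex of $P_\pi$ for some order $\pi$, and those vertices can have full support. A one-constraint example already shows this: on $P = \{x \ge 0 : x_1 + x_2 + \dots + x_d \ge d\}$ the $L_\infty$ minimizer is $(1,\dots,1)$, which has $d$ nonzero coordinates. So your entire framework of enumerating supports $S$ with $|S| \le r$ and their internal orderings collapses at the outset; the ``only ordering information that matters'' is genuinely $d$-dimensional, not $r$-dimensional.

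The paper circumvents this by first \emph{sparsifying} $A$ so that its columns take at most $N^r$ distinct values, $N = O(\log(d/\epsilon)/\epsilon)$, at a $(1+\epsilon)$ cost (Lemma~\ref{lem: sparsification-approximation}). This collapses the effective dimension to $N^r$ and yields ``reduced orders'' on $[N^r]$ rather than on $[d]$. Only then does a dual argument become tractable: the paper's primal--dual counting (Lemma~\ref{lem: primal-dual-counting}) does not discretize dual multipliers on a geometric grid as you propose, but instead uses the ordered Cauchy--Schwarz equality condition (Lemma~\ref{lem: ordered-cauchy-schwarz}) to show that $x(w)$ and $A^\top \lambda(w)$ satisfy the \emph{same} reduced order, and then counts reduced orders satisfied by some $A^\top\lambda$, $\lambda \in \simplex{r}$, via a hyperplane-arrangement bound in $\simplex{r}$. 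Your geometric-grid discretization of duals, besides resting on the broken sparsity premise, offers no argument for why the dual range is $\polylog(d/\epsilon)$; in the paper this boundedness is a \emph{consequence} of the sparsification step, not an a priori fact. Finally, the $O(\log d)$ guarantee for symmetric monotonic norms comes from Lemma~\ref{lem: portfolios-ordered-norms-to-all-norms}, not Lemma~\ref{obs: polynomial-portfolios-for-smn} as you cite.
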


The above result shows that the trade-off between $\epsilon$ and $X_{1 + \epsilon}$ is that $|X_{1 + \epsilon}|^{1/\Omega(r^2)} \cdot \epsilon$ remains nearly a constant.
For all $r = o\left(\frac{\sqrt{\log d}}{\log\log d}\right)$, this result is the first {\it exponential improvement} over the current best bound of $\poly(d^{1/\epsilon})$ \cite{chakrabarty_approximation_2019}, to the best of our knowledge.

\begin{table}[t]
    \centering
    \small
    \caption{A summary of simultaneous approximations for symmetric monotonic norms, obtained using the \iterativeordering framework. Here, a bicriteria $(\alpha, \beta)$-approximation to a \clustering problem opens at most $\beta k$ facilities, while being within factor $\alpha$ of the optimum (see Appendix \ref{sec: clustering-and-facility-location}). $\gamma$ is a parameter for composable problems (see Section \ref{sec: iterative-ordering}).}
    \label{tab: portfolios-size-1}
    \begin{tabular}{ccccc}
        \hline
        \multicolumn{2}{c}{\begin{tabular}[c]{@{}c@{}}Domain/set $\domain$\\ of feasible vectors\end{tabular}} &
        \multicolumn{1}{c}{\begin{tabular}[c]{@{}c@{}}Existence\\ simultaneous\\ approximation\end{tabular}} &
        \begin{tabular}[c]{@{}c@{}}Polynomial-time\\ simultaneous\\ approximation\end{tabular} &
        Reference \\ \hline
        \multicolumn{1}{c}{\begin{tabular}[c]{@{}c@{}}$\gamma$-\composable\\ problem\end{tabular}} &
        \textbf{\begin{tabular}[c]{@{}c@{}}This\\ work\end{tabular}} &
        \multicolumn{1}{c}{$(\sqrt{\gamma} + 1)^2$} &
        - &
        Theorem \ref{thm: iterative-ordering} \\ \hline
        \multicolumn{1}{c}{\begin{tabular}[c]{@{}c@{}}\completiontimes\\ $(\gamma = 1)$\end{tabular}} &
        \textbf{\begin{tabular}[c]{@{}c@{}}This\\ work\end{tabular}} &
        \multicolumn{1}{c}{$4$} &
        $8$ &
        Theorem \ref{thm: iterative-ordering} \\ \hline
        \multicolumn{1}{c}{\multirow{5}{*}{\begin{tabular}{c} \orderedtsp\\ $(\gamma = 2)$ \end{tabular}}} &
        \multirow{3}{*}{\begin{tabular}[c]{@{}c@{}}Previous\\ work\end{tabular}} &
        \multicolumn{2}{c}{$16$} &
        \cite{golovin_all-norms_2008} \\ \cmidrule{3-5}
        \multicolumn{1}{c}{} &
        &
        \multicolumn{2}{c}{$8$} &
        \cite{farhadi_traveling_2021} \\ \cmidrule{2-5}
        \multicolumn{1}{c}{} &
        \textbf{\begin{tabular}[c]{@{}c@{}}This\\ work\end{tabular}} &
        \multicolumn{1}{c}{{$3 + 2 \sqrt{2} \simeq 5.83$}} &
            {$6 + 4 \sqrt{2} \simeq 11.66$} &
        Theorem \ref{thm: iterative-ordering} \\ \hline
        \multicolumn{1}{c}{\multirow{2}{*}{\begin{tabular}[c]{@{}c@{}}\osc\\ (on ground set\\ of $n$ elements) \\ ($\gamma = 1$)\end{tabular}}} &
        \begin{tabular}[c]{@{}c@{}}Previous\\ work\end{tabular} &
        \multicolumn{2}{c}{$O(\log n)$} &
        \cite{golovin_all-norms_2008} \\ \cmidrule{2-5}
        \multicolumn{1}{c}{} &
        \textbf{\begin{tabular}[c]{@{}c@{}}This\\ work\end{tabular}} &
        \multicolumn{1}{c}{$4$} &
        - &
        Theorem \ref{thm: iterative-ordering} \\ \hline
        \multicolumn{1}{c}{\multirow{5}{*}{\begin{tabular}[c]{@{}c@{}}\clustering\\ (on $n$ points,\\ bicriteria\\ approximations)\end{tabular}}} &
        \multirow{3}{*}{\begin{tabular}[c]{@{}c@{}}Previous\\ work\end{tabular}} &
        \multicolumn{1}{c}{$\left(3 + \epsilon, O((\log n) + 1/\epsilon)\right)$} &
        $\left(9 + \epsilon, O((\log n) + 1/\epsilon) \right)$ &
        \cite{kumar_fairness_2000} \\ \cmidrule{3-5}
        \multicolumn{1}{c}{} &
        &
        \multicolumn{1}{c}{$\left(1 + \epsilon, O\left(\frac{\log n}{\epsilon}\right)\right)$} &
        $\left(6 + \epsilon, O\left(\frac{\log n}{\epsilon}\right)\right)$ &
        \cite{goel_simultaneous_2006} \\ \cmidrule{2-5}
        \multicolumn{1}{c}{} &
        \textbf{\begin{tabular}[c]{@{}c@{}}This\\ work\end{tabular}} &
        \multicolumn{1}{c}{$\left(1 + \epsilon, O\left(\frac{\log n}{\epsilon}\right)\right)$} &
        $\left(3 + \epsilon, O\left(\frac{\log n}{\epsilon}\right)\right)$ &
        Theorem \ref{thm: clustering} \\ \hline
    \end{tabular}
\end{table}

\subsubsection{Improved approximations using \iterativeordering}

We next turn our attention to portfolios of size-1, i.e., simultaneous approximations \cite{kumar_fairness_2000, goel_simultaneous_2006} for symmetric monotonic norms, and show stronger approximation guarantees for specific problems. We develop an \iterativeordering framework for problems that minimize the symmetric monotonic norm over the vector of times each element is ``satisfied''. For example, in scheduling problems, a job is satisfied when it is completed, in set cover problems, an element is satisfied when it is covered, and in routing problems, a vertex is satisfied when it is visited, etc. We recursively solve the problem, by dividing it into smaller subproblems and stitching the subproblem solution together to get an approximation. The guarantees on satisfaction times are preserved \emph{pointwise}\footnote{That is, if $\tilde{x}$ is the (sorted) approximate vector and $x^*$ is the (sorted) optimal vector, then we show coordinate-wise bounds such as $\tilde{x}_i \leq \alpha x^*_i \;\; \forall \; i \in [d]$.}, leading to simultaneous approximation guarantees for all symmetric monotonic norms. This generalizes the approach of many previous papers, e.g., \cite{blum_minimum_1994, golovin_all-norms_2008, farhadi_traveling_2021} for the traditionally studied notion of polynomial-time computable simultaneous approximations, while also providing novel guarantees on the \emph{existence} of certain simultaneous approximations.
The key improvements we obtain due to the \texttt{Iterative-Ordering} framework for symmetric monotonic norms (summarized in Table \ref{tab: portfolios-size-1}) are:

\begin{itemize}
    \item \textsc{Completion-Times}: For minimizing symmetric monotonic norms of the \emph{completion times} of jobs (e.g., jobs on the cloud computing servers \cite{wang_multi-resource_2015}) in a scheduling problem, we show the existence of simultaneous $4$-approximation and polynomial-time simultaneous $8$-approximation. These are the first constant-factor results for this problem, to the best of our knowledge. Note the contrast with the previously discussed problem of minimizing \emph{machine loads}, where a size-$1$ portfolio may not even be $o(\sqrt{m})$-approximate for all symmetric monotonic norms, even for the case of identical jobs (see Theorem \ref{thm: identical-jobs-portfolio}). For \completiontimes, we also give an instance (see Appendix \ref{sec: lower-bounds}) where no simultaneous $1.13$-approximation exists.

    \item \osc: For minimizing symmetric monotonic norms of covering time of $n$ elements of a ground set, we show the existence of a simultaneous $4$-approximation, although previously only a polynomial time $O(\log n)$-approximation was known \cite{golovin_all-norms_2008}, which up to constants is the best possible assuming P $\neq$ NP \cite{feige_threshold_1998}. This result highlights the difference between existence and polynomial time computable simultaneous approximation.

    \item \orderedtsp: For minimizing symmetric monotonic norms over the time each vertex of a given graph is visited in a Hamiltonian tour, we show the existence of a $5.83$-approximation. The previously-known lower bound on the existence of a simultaneous approximation is $1.78$ \cite{farhadi_traveling_2021}, therefore, ours bridges the gap in the existence of simultaneous approximations for {\sc Ordered-TSP}; although the best polynomial-time approximation remains the $8$-approximation of \cite{farhadi_traveling_2021}.

    \item {\sc $k$-\textsc{Clustering}}: For finding $k$ facilities that minimize symmetric monotonic norms of client distances to open facilities, we give for any $\epsilon \in (0, 1]$ a polynomial-time bicriteria approximation which (a) has objective value within factor $3 + \epsilon$ of the optimal for any symmetric monotonic norm, and (b) opens at most $O\left(\frac{\log n}{\epsilon}\right) \cdot k$ facilities. This improves upon the previous bicriteria approximation of \citet{goel_simultaneous_2006} that has objective value bound $6 + \epsilon$ with the same bound on the number of facilities.
\end{itemize}

The rest of the paper is organized as follows: we give related work in Section \ref{sec: related-work} and preliminaries in Section \ref{sec: preliminaries}. \machineloadsidenticaljobs is discussed in Section \ref{sec: identical-jobs-scheduling}, \coveringpolyhedra is discussed in Section \ref{sec: covering-polyhedra}, and \iterativeordering with corresponding results is discussed in Section \ref{sec: iterative-ordering}. We discuss open problems and conclude in Section \ref{sec: discussion}.

\section{Related work}\label{sec: related-work}

Portfolios were explicitly first studied by Gupta et al. \cite{gupta_which_2023} who studied them for facility location problems.
Similar notions were implicit in other previous works: \cite{goel_simultaneous_2006} essentially constructed $O(\log d)$-size $O(1)$-approximate portfolios for top-$k$ norms in dimension $d$, \cite{golovin_all-norms_2008} used the structure of $L_p$ norms to get a similar bound, and \cite{chakrabarty_approximation_2019} essentially constructed $\poly(d)$-size $O(1)$-approximate portfolios for ordered norms.
All three techniques rely on counting the number of unique norms (up to $O(1)$-approximation). In contrast, our methods rely on counting vectors in the set $\domain$ of feasible vectors. This shift is useful, for example, in obtaining polynomial-size portfolios for symmetric monotonic norms (see Appendix \ref{sec: missing-proofs-1}).

Portfolios of size-$1$ or simultaneous approximations have been very well-studied, with the earliest results going as far back as \cite{chandra_worst-case_1975}, who study the scheduling problem of minimizing loads on identical machines and show that Graham's \cite{graham_bounds_1966}'s greedy algorithm is a $1.5$-approximation for all $L_p$ norms. Azar and Taub \cite{azar_all-norm_2004} improve this to $1.388$-approximation, and for all symmetric monotonic norms. Note that this is in contrast to our results
to minimize machine loads on \emph{unidentical machines} and \emph{identical jobs}, where (see Theorem \ref{thm: identical-jobs-portfolio} and Table \ref{tab: portfolios-size-more-than-1}) a simultaneous $O(1)$-approximation may not exist. Kumar and Kleinberg \cite{kumar_fairness_2000} studied simultaneous approximations for all symmetric monotonic norms for clustering, scheduling, and flow problems. In particular, for \clustering, they obtained a $(9 + \epsilon, O(\log n) + \epsilon^{-1})$-approximation in polynomial time. \cite{goel_simultaneous_2006} improved this to $(6 + \epsilon, O((\log n)/\epsilon)$.

\cite{kumar_fairness_2000, goel_simultaneous_2006, golovin_all-norms_2008} all studied general techniques to obtain simultaneous approximations that often involve (implicitly) obtaining portfolios and combining them into one solution. Goel and Meyerson \cite{goel_simultaneous_2006} proved that a simultaneous $\alpha$-approximation for top-$k$ norms is a simultaneous $\alpha$-approximation for symmetric monotonic norms. Golovin et al. \cite{golovin_all-norms_2008} observed that the basic structure of \cite{blum_minimum_1994}'s algorithm for the Traveling Salesman Problem (\orderedtsp) can be applied to many other problems, obtaining logarithmic or constant-factor approximate simultaneous approximations. Farhadi et. al \cite{farhadi_traveling_2021} further improved the approximation factor for \orderedtsp to $8$. Our \iterativeordering framework generalizes this fundamental idea, combining these algorithms into one algorithm.

Optimizing for a \emph{fixed} non-standard objective has been widely considered in the literature, and the list is too long to fit here.
\cite{chakrabarty_approximation_2019} studied ordered norm and symmetric monotonic norm objectives for scheduling and clustering problems and proved that any symmetric monotonic norm is the supremum of some ordered norms, thus establishing ordered norms as fundamental to the study of symmetric convex functions.
\cite{patton_submodular_2023} proved that any symmetric monotonic norm can be $O(\log d)$-approximated by an ordered norm, further strengthening this connection.

\section{Preliminaries}\label{sec: preliminaries}

We give formal definitions and useful preliminary results in this section. Omitted proofs are included in Appendix \ref{sec: missing-proofs-1}. Throughout, we assume that $\domain \subseteq \mathbb{R}_{\ge 0}^d$ is a set of nonnegative feasible vectors to a combinatorial problem with each coordinate representing the cost to individuals/groups (e.g., distances to open facilities in facility location problems or machine loads in scheduling problems). First, we define portfolios formally:

\begin{definition}[Portfolios]\label{def: portfolios}
Given a set $\domain$ of nonnegative \emph{feasible} vectors, a class of objectives $\class: \domain \to \RO$, and an approximation parameter $\alpha \ge 1$, a portfolio $X \subseteq \domain$ is a set of vectors such that for all objectives $f \in \class$,  \[ \min_{x \in X} f(x) \le \alpha \min_{x \in \domain} f(x).\]
When the portfolio has size $1$, it is called a \emph{simultaneous $\alpha$-approximation} \cite{goel_simultaneous_2006, kumar_fairness_2000}.
\end{definition}

Our first lemma shows that portfolios can be composed in different ways:

\begin{lemma}[Portfolio composition]\label{lem: portfolio-composition}
Given class $\class$ of functions over $\domain \subseteq \R_{\ge 0}^d$
\begin{enumerate}
    \item If $X_1$ is an $\alpha_1$-approximate portfolio for $\class$ over $\domain$ and $X_2$ is an $\alpha_2$-approximate portfolio for $\class$ over $X_1$, then $X_2$ is an $\alpha_1\alpha_2$-approximate portfolio for $\class$ over $\domain$.
    \item If $\domain = \bigcup_{i \in [n]} \domain_i$ and $X_i$ is an $\alpha$-approximate portfolio for $\class$ over $\domain_i$ for each $i \in [n]$, then $\bigcup_{i \in [n]} X_i$ is an $\alpha$-approximate portfolio for $\class$ over $\domain$.
\end{enumerate}
\end{lemma}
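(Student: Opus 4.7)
The plan is to unpack Definition \ref{def: portfolios} in each case and chain the approximation inequalities. Since the statement is essentially a ``transitivity'' and a ``union'' property of the portfolio relation, both parts should follow directly from the definition with no structural argument needed; I do not expect a genuine obstacle, only careful bookkeeping of the minima.

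For part (1), I would fix an arbitrary objective $f \in \class$ and apply the defining inequality of each portfolio in turn. By assumption on $X_2$ (as a portfolio over the ``ground set'' $X_1$), we have $\min_{x \in X_2} f(x) \le \alpha_2 \min_{x \in X_1} f(x)$, and by assumption on $X_1$ (as a portfolio over $\domain$), $\min_{x \in X_1} f(x) \le \alpha_1 \min_{x \in \domain} f(x)$. Composing these yields $\min_{x \in X_2} f(x) \le \alpha_1 \alpha_2 \min_{x \in \domain} f(x)$, which is exactly the portfolio condition for $X_2$ over $\domain$ with factor $\alpha_1\alpha_2$. Since $f$ was arbitrary in $\class$, this completes part (1).

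For part (2), I would again fix $f \in \class$ and let $x^\star \in \arg\min_{x \in \domain} f(x)$ (or an $\epsilon$-approximate minimizer if the infimum is not attained, with $\epsilon \to 0$ at the end). Because $\domain = \bigcup_{i \in [n]} \domain_i$, there is an index $j$ with $x^\star \in \domain_j$, which immediately gives $\min_{x \in \domain_j} f(x) \le f(x^\star) = \min_{x \in \domain} f(x)$. Applying the portfolio guarantee of $X_j$ over $\domain_j$ then yields
\[
\min_{x \in \bigcup_i X_i} f(x) \;\le\; \min_{x \in X_j} f(x) \;\le\; \alpha \min_{x \in \domain_j} f(x) \;\le\; \alpha \min_{x \in \domain} f(x),
\]
where the first inequality is just the fact that $X_j \subseteq \bigcup_i X_i$. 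This establishes that $\bigcup_{i \in [n]} X_i$ is an $\alpha$-approximate portfolio for $\class$ over $\domain$, completing the lemma.

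The only subtlety worth flagging is whether minima are attained; since the paper works with finite sets $\domain$ throughout (combinatorial problems with finitely many feasible vectors) this is never an issue, so the proof above suffices as stated and no limiting argument is required.
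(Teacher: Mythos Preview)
Your proof is correct and follows essentially the same approach as the paper: both parts simply chain the defining inequalities, with part (1) composing the two portfolio bounds and part (2) using that the minimum over a union is the minimum over the pieces. Your remark on attainment is a reasonable caveat, though note that $\domain$ is not always finite in the paper (e.g., \coveringpolyhedra); still, minima are attained there as well, so no limiting argument is needed.
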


For vector $x \in \R^d$, we denote $x^\da$ as the vector with coordinates of $x$ sorted in decreasing order. We also denote $\mathbf{1}_k \in \R^d$ as the vector with $k$ ones followed by zeros.\\

\begin{definition}[Norm classes]\label{def: norms} Given a vector $x \in \R^d$, \begin{enumerate}
                                                                                    \item for $k \in [d]$, the top-$k$ norm of $x$, denoted $\|x\|_{\mathbf{1}_k}$, is the sum $\sum_{i \in [k]} |x|_i^\da$ of $k$ highest coordinates of $|x|$. The class of top-$k$ norms is denoted $\topk$;
                                                                                    \item given a nonzero \emph{weight vector} $w \in \R^d$ such that $w_1 \ge \dots \ge w_d \ge 0$, the ordered norm $\|x\|_{(w)}$ is defined as $w^\top |x|^\da$. The class of ordered norms is denoted $\ordered$;
                                                                                    \item a symmetric monotonic norm is a norm that is monotone in each coordinate and invariant to the permutation of coordinates. The class of symmetric monotonic norms is denoted $\smn$, and an arbitrary norm in $\smn$ is denoted $\|\cdot\|_f$.
\end{enumerate}
\end{definition}

Note that $\topk \subseteq \ordered \subseteq \smn$. For nonnegative $x, y \in \RO^d$, we say that $y$ majorizes $x$ or $x \preceq y$ if $\|x\|_{\mathbf{1}_k} \le \|y\|_{\mathbf{1}_k}$ for all $k \in [d]$. The following lemma shows that majorization implies monotonicity for any symmetric monotonic norm.

\begin{lemma}[\cite{hardy_inequalities_1952}]\label{lem: majorization-norm-order-2}
If $x \preceq y$, then $\|x\|_f \le \|y\|_f$ for any $\|\cdot\|_f \in \smn$.
\end{lemma}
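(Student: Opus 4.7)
The plan is to reduce to the classical Hardy--Littlewood--Polya theorem on strong majorization. By the permutation invariance of $\|\cdot\|_f$, assume without loss of generality that $x = x^\da$ and $y = y^\da$ are already sorted in decreasing order. The hypothesis $x \preceq y$ then reads $\sum_{i=1}^k x_i \le \sum_{i=1}^k y_i$ for every $k \in [d]$, which is the standard definition of weak majorization of $x$ by $y$. In particular, taking $k = d$ gives $s := \sum_i y_i - \sum_i x_i \ge 0$.

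Next, I would construct an intermediate vector $z \in \RO^d$ with the two properties (i) $z \ge x$ coordinatewise and (ii) $y$ strongly majorizes $z$, i.e., $\sum_{i=1}^k z_i^\da \le \sum_{i=1}^k y_i^\da$ for every $k < d$ and $\sum_{i=1}^d z_i = \sum_{i=1}^d y_i$. Concretely, one can raise the coordinates of $x$ from the bottom upward, distributing the slack $s$ while preserving sortedness and the partial-sum upper bounds: add as much as possible to $x_d$ until either the slack is exhausted or $x_d$ reaches $x_{d-1}$; in the latter case continue raising $x_{d-1}$ and $x_d$ in tandem, then the bottom three, and so on. The resulting $z$ is sorted, dominates $x$, and has the desired partial sums. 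That such a $z$ always exists is a standard fact in majorization theory.

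Once $z$ is in hand, the conclusion follows in two steps. By monotonicity of $\|\cdot\|_f$ in each coordinate, $\|x\|_f \le \|z\|_f$. By the Hardy--Littlewood--Polya theorem, the strong majorization $z \preceq y$ with equal total sums implies $z = Dy$ for some doubly stochastic matrix $D$; by Birkhoff--von Neumann, $D$ is a convex combination of permutation matrices, so $z$ is itself a convex combination of permutations of $y$. Applying the triangle inequality (convexity of the norm) together with permutation invariance of $\|\cdot\|_f$ gives $\|z\|_f \le \|y\|_f$. Chaining the two inequalities yields $\|x\|_f \le \|y\|_f$.

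The only delicate point is verifying that the intermediate $z$ really exists with both properties simultaneously; the rest is a direct appeal to classical results. I expect a clean write-up either to give the greedy bottom-up construction above in a few lines, or, even more briefly, to cite the standard equivalence (e.g.\ from Marshall--Olkin--Arnold) that weak majorization of $x$ by $y$ is equivalent to the existence of such a dominating, strongly-majorized $z$.
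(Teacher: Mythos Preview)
The paper does not actually prove this lemma: it is stated in the preliminaries with a citation to Hardy--Littlewood--P\'olya and used as a black box throughout. So there is no ``paper's own proof'' to compare against; your write-up supplies what the paper deliberately omits.

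Your argument is correct and is one of the standard routes to this inequality. Reducing weak majorization to strong majorization via an intermediate vector $z$ with $x \le z$ coordinatewise and $\sum_i z_i = \sum_i y_i$, then invoking Birkhoff--von Neumann plus convexity and permutation invariance of the norm, is exactly the textbook approach (and is, for instance, how Marshall--Olkin--Arnold handle it). The bottom-up water-filling construction you sketch for $z$ does work: raising the smallest coordinates in tandem keeps the vector sorted, only increases the partial sums $\sum_{i \le k} z_i$ at indices $k$ where the water level has reached coordinate $k$, and at those indices one checks that the partial sum is still at most $\sum_{i \le k} y_i$ because the remaining slack is nonnegative. If you want to avoid even that verification, you can simply cite the equivalence ``$x$ is weakly majorized by $y$ iff there exists $z$ with $x \le z$ and $z$ strongly majorized by $y$'' as a known lemma, which is what you suggest at the end.
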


The above lemma helps obtain a simultaneous approximation for $\smn$ using those for $\topk$ norms. Given a set $\domain$, if $x^*$ is simultaneously $\alpha$-approximate for $\topk$ then $\|x^*\|_{\mathbf{1}_k} \le \alpha \|y\|_{\mathbf{1}_k} $ for all $k \in [d]$ and $y \in \domain$, i.e., that $x^* \preceq \alpha y$ for all $y \in \domain$. As an immediate consequence:

\begin{lemma}[\cite{goel_simultaneous_2006}, Theorem 2.3]\label{thm: top-k-implies-all-norm}
For any $\domain$, if $x^*$ is a simultaneous $\alpha$-approximation for $\topk$, then $x^*$ is a simultaneous $\alpha$-approximation for $\smn$.
\end{lemma}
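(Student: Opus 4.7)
The plan is to unpack the definition of simultaneous $\alpha$-approximation for $\topk$, recast it as a majorization statement, and then invoke Lemma \ref{lem: majorization-norm-order-2} to transfer the bound to an arbitrary symmetric monotonic norm. This essentially follows the sketch already provided in the paragraph preceding the statement, so the argument is short; the main conceptual step is noticing that approximation against every top-$k$ norm is exactly equivalent to majorization by the scaled vector, which is the hypothesis required by Lemma \ref{lem: majorization-norm-order-2}.

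Concretely, first I would fix any $\|\cdot\|_f \in \smn$ and any comparison vector $y \in \domain$. Since $x^*$ is a simultaneous $\alpha$-approximation for $\topk$, for every $k \in [d]$ we have
\[
    \|x^*\|_{\mathbf{1}_k} \le \alpha \|y\|_{\mathbf{1}_k} = \|\alpha y\|_{\mathbf{1}_k},
\]
where the equality uses that top-$k$ norms are positively homogeneous. By the definition of majorization given just before Lemma \ref{lem: majorization-norm-order-2}, this means $x^* \preceq \alpha y$.

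Now I would apply Lemma \ref{lem: majorization-norm-order-2} to the symmetric monotonic norm $\|\cdot\|_f$, yielding
\[
    \|x^*\|_f \le \|\alpha y\|_f = \alpha \|y\|_f,
\]
again using positive homogeneity of norms. Taking $y$ to be a minimizer of $\|\cdot\|_f$ over $\domain$ gives $\|x^*\|_f \le \alpha \min_{y \in \domain} \|y\|_f$. Since $\|\cdot\|_f \in \smn$ was arbitrary, $x^*$ is a simultaneous $\alpha$-approximation for $\smn$, as required.

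The main (and essentially only) obstacle is appealing to the right prior result: once one recognizes that the top-$k$ approximation hypothesis is precisely the majorization inequality $x^* \preceq \alpha y$, Lemma \ref{lem: majorization-norm-order-2} does all the work. No additional combinatorial or probabilistic machinery is needed.
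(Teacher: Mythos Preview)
Your proof is correct and follows essentially the same approach as the paper: the paragraph immediately preceding the lemma already sketches exactly this argument (top-$k$ approximation against every $y \in \domain$ is majorization $x^* \preceq \alpha y$, then apply Lemma~\ref{lem: majorization-norm-order-2}), and you have simply written it out in full.
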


For any $\domain$, it is possible to construct a  $(1 + \epsilon)$-approximate portfolio of size $O((\log d)/\epsilon)$ for $\topk$, by simply considering the minimizers of top-$k$ norms for $k = \lfloor 1 + \epsilon \rfloor, \lfloor (1 + \epsilon)^2 \rfloor, \dots$. One may wonder if this portfolio is also $(1 + \epsilon)$-approximate portfolio for symmetric monotonic norms; however, this is not the case.

\noindent \textbf{Example 1.} Consider the set $\domain = \left\{x, y, z\right\} \in \R^d$ of three feasible vectors $x = (\sqrt{d}, 0, \dots 0), y = (1, \dots, 1)$, and $z = d^{1/3}\left(1, \frac{1}{\sqrt{2}}, \dots, \frac{1}{\sqrt{d}}\right)$. Then, given a top-$k$ norm,
$$
\|x\|_{\mathbf{1}_k} = \sqrt{d}, \quad \|y\|_{\mathbf{1}_k} = k, \quad  \|z\|_{\mathbf{1}_k} = d^{1/3}\sum_{i \in [k]} \frac{1}{\sqrt{i}} = \Theta(d^{1/3} \sqrt{k}).
$$
For each top-$k$ norm, either $x$ or $y$ is optimal, i.e., $\{x, y\}$ is an optimal portfolio for $\topk$. However, consider the ordered norm for weight vector $w = \left(1, \frac{1}{\sqrt{2}}, \dots, \frac{1}{\sqrt{d}}\right)$:
$$
\|x\|_{(w)} = \sqrt{d}, \quad \|y\|_{(w)} = \Theta(\sqrt{d}), \quad \|z\|_{(w)} = d^{1/3} \sum_{i \in [d]} \frac{1}{i} = \Theta(d^{1/3} \log d).
$$
Then both $x$ and $y$ are $\Omega\left(\frac{d^{1/6}}{\log d}\right)$-approximations for $\|\cdot\|_{(w)}$, i.e., $\{x, y\}$ is at best a $\poly(d)$-approximate portfolio for ordered norms. \qed

We can say even more: despite the above Lemma \ref{thm: top-k-implies-all-norm} for $\topk$, the best-known upper bound on the size of a $(1 + \epsilon)$-approximate portfolio for $\smn$ is polynomial in $d^{1/\epsilon}$ (our proof is a slight generalization of \cite{chakrabarty_approximation_2019}; see Appendix \ref{sec: missing-proofs-1}):

\begin{lemma}\label{obs: polynomial-portfolios-for-smn}
For any $\domain$ and $\epsilon \in (0, 1]$, there is a $(1 + \epsilon)$-approximate portfolio of size $\poly(d^{1/\epsilon})$ for $\smn$ over $\domain$.
\end{lemma}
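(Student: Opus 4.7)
Proof plan.

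My plan leverages Lemma~\ref{lem: majorization-norm-order-2}, which says $x \preceq \alpha y \Rightarrow \|x\|_f \le \alpha \|y\|_f$ for every $\|\cdot\|_f \in \smn$. So it suffices to build a portfolio $X \subseteq \domain$ with the stronger \emph{top-$k$ dominance} property: for every $y \in \domain$ there is some $x \in X$ with $\|x\|_{\mathbf{1}_k} \le (1+\epsilon)\|y\|_{\mathbf{1}_k}$ for all $k \in [d]$. Such an $X$ is automatically a $(1+\epsilon)$-approximate portfolio for $\smn$ by applying the dominance property to the minimizer of any $\|\cdot\|_f\in\smn$.

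To construct $X$, I would assign to each nonzero $y \in \domain$ a signature $s(y) \in \{0,1,\ldots,L\}^d$, where $L := \lceil \log_{1+\epsilon} d\rceil = O(\log d/\epsilon)$, by
\[
    s(y)_k \;:=\; \bigl\lfloor \log_{1+\epsilon}\bigl(\|y\|_{\mathbf{1}_k}/\|y\|_{\mathbf{1}_1}\bigr) \bigr\rfloor.
\]
Since $\|y\|_{\mathbf{1}_k}$ is non-decreasing in $k$ and $\|y\|_{\mathbf{1}_k}/\|y\|_{\mathbf{1}_1} \in [1,d]$ (using $\|y\|_{\mathbf{1}_d} \le d\|y\|_{\mathbf{1}_1}$), the signature $s(y)$ is non-decreasing in $k$ with $s(y)_1=0$. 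Partition $\domain$ by signature; from each nonempty class, choose a representative minimizing $\|\cdot\|_{\mathbf{1}_1}$ (replacing by a near-minimizer if the infimum is unattained). Let $X$ be the union of these representatives, adding $0$ if $0\in\domain$.

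To verify the portfolio property, fix $\|\cdot\|_f \in \smn$, let $y^* \in \arg\min_{y\in\domain}\|y\|_f$, and let $x\in X$ be the representative of its signature class, so $s(x) = s(y^*)$ and $\|x\|_{\mathbf{1}_1} \le \|y^*\|_{\mathbf{1}_1}$. Both ratios $\|x\|_{\mathbf{1}_k}/\|x\|_{\mathbf{1}_1}$ and $\|y^*\|_{\mathbf{1}_k}/\|y^*\|_{\mathbf{1}_1}$ lie in $\bigl[(1+\epsilon)^{s(y^*)_k},(1+\epsilon)^{s(y^*)_k+1}\bigr)$ for every $k$, yielding $\|x\|_{\mathbf{1}_k} \le (1+\epsilon)^{s(y^*)_k+1}\|x\|_{\mathbf{1}_1} \le (1+\epsilon)\|y^*\|_{\mathbf{1}_k}$. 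Hence $x\preceq (1+\epsilon)y^*$, and Lemma~\ref{lem: majorization-norm-order-2} gives $\|x\|_f\le(1+\epsilon)\|y^*\|_f$. Counting non-decreasing sequences in $\{0,\ldots,L\}^{d-1}$ gives $|X| \le \binom{d-1+L}{L} = \poly(d^{1/\epsilon})$. The only technical wrinkle is selecting a well-defined representative per class when the infimum of $\|\cdot\|_{\mathbf{1}_1}$ is not attained (e.g., for $\domain$ defined by strict inequalities); this is handled by a routine limit argument absorbing an extra $(1+O(\epsilon))$ factor, or by WLOG assuming $\domain$ is closed.
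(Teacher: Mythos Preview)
Your overall strategy---bucketing vectors by their discretized top-$k$ norm profile and then invoking Lemma~\ref{lem: majorization-norm-order-2}---is the same as the paper's. However, the final size bound is wrong: the quantity $\binom{d-1+L}{L}$ with $L = \Theta((\log d)/\epsilon)$ is \emph{not} $\poly(d^{1/\epsilon})$. Already for $\epsilon = 1$ (so $L = \Theta(\log d)$) the standard estimate $\binom{n}{k} \ge (n/k)^k$ gives
\[
\binom{d-1+L}{L} \;\ge\; \bigl((d-1)/L\bigr)^{L} \;=\; d^{\Theta(\log d)},
\]
which is super-polynomial in $d = d^{1/\epsilon}$. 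The culprit is that your signature has one coordinate for \emph{every} $k\in[d]$.

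The idea you are missing---and this is the main content of the paper's proof---is that it suffices to record the signature only at geometrically spaced indices $k \in \{c_1,\dots,c_T\}$ with $c_i \approx (1+\epsilon)^i$ and $T = O((\log d)/\epsilon)$. The signature then has length $T$ with entries in an interval of length $O(T)$, so the number of non-decreasing signatures is $2^{O(T)} = d^{O(1/\epsilon)}$. Of course one must then argue that matching on these $T$ sampled values controls \emph{all} top-$k$ norms; this uses the elementary interpolation
\[
\|x\|_{\mathbf{1}_{c_i}} \;\le\; \|x\|_{\mathbf{1}_k} \;\le\; \frac{k}{c_i}\,\|x\|_{\mathbf{1}_{c_i}} \;\le\; (1+\epsilon)\,\|x\|_{\mathbf{1}_{c_i}} \qquad (c_i \le k < c_{i+1}),
\]
which is absent from your argument. (A secondary difference: the paper normalizes by the global quantity $v^* = \min_{x\in\domain}\|x\|_\infty$ after first restricting to $\{x:\|x\|_\infty \le d v^*\}$, whereas you normalize per vector by $\|y\|_{\mathbf{1}_1}$ and then pick the $\|\cdot\|_\infty$-minimizer in each class. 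Your normalization is fine and arguably cleaner; the signature-length issue above is the real gap.)
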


Further, we show in Theorem \ref{thm: portfolio-size-lower-bound-smn} that this bound is nearly tight for ordered norms and symmetric monotonic norms: there exist sets $\mathcal{D} \subseteq \R^d$ where any $\alpha$-approximate portfolios must have size $d^{1/\Omega(\log\log d)}$ even for approximation $\alpha$ as large as $O(\log d)$.

Next, it is known that any symmetric monotonic norm $\|\cdot\|_f$ on $\R^d$ can be $O(\log d)$-approximated by an ordered norm on $\R^d$ \cite{patton_submodular_2023}. Consequently, the same bound also holds for portfolio approximations:

\begin{lemma}\label{lem: portfolios-ordered-norms-to-all-norms}
For any $\domain$, an $\alpha$-approximate portfolio $X \subseteq \domain$ for $\ordered$ is an $O(\alpha \log d)$-approximate portfolio for $\smn$.
\end{lemma}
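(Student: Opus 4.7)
The plan is to reduce the problem directly to the norm-approximation result of \cite{patton_submodular_2023}: any symmetric monotonic norm on $\R^d$ admits a two-sided approximation by some ordered norm up to a factor $c = O(\log d)$. That is, for any $\|\cdot\|_f \in \smn$ there exist an ordered norm $\|\cdot\|_{(w)}$ and a scaling such that $\|x\|_f \le \|x\|_{(w)} \le c\,\|x\|_f$ for every $x \in \RO^d$. Once this is in hand, the portfolio property of $X$ for $\ordered$ does the rest.

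Concretely, I would proceed as follows. Fix an arbitrary $\|\cdot\|_f \in \smn$ and let $x^* \in \arg\min_{x \in \domain} \|x\|_f$. Apply the Patton et al.\ result to pick an ordered norm $\|\cdot\|_{(w)}$ approximating $\|\cdot\|_f$ within a factor $c = O(\log d)$ on all of $\RO^d$. Since $X$ is an $\alpha$-approximate portfolio for $\ordered$, there is some $\hat{x} \in X$ with
\[
\|\hat{x}\|_{(w)} \;\le\; \alpha \min_{y \in \domain} \|y\|_{(w)} \;\le\; \alpha \|x^*\|_{(w)}.
\]
Chaining the inequalities gives
\[
\|\hat{x}\|_f \;\le\; \|\hat{x}\|_{(w)} \;\le\; \alpha \|x^*\|_{(w)} \;\le\; \alpha c \|x^*\|_f \;=\; O(\alpha \log d) \min_{y \in \domain} \|y\|_f.
\]
Since $\|\cdot\|_f$ was arbitrary, $X$ is an $O(\alpha \log d)$-approximate portfolio for $\smn$.

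There is no real obstacle here beyond invoking the black-box approximation result correctly; the only thing to be careful about is that the $O(\log d)$-approximation from \cite{patton_submodular_2023} is a \emph{two-sided} bound (after appropriate scaling of $w$), so that both the step bounding $\|\hat{x}\|_f$ by $\|\hat{x}\|_{(w)}$ and the step bounding $\|x^*\|_{(w)}$ by $c\|x^*\|_f$ are valid. Nothing in the argument depends on structure of $\domain$ beyond feasibility, so the same reasoning applies uniformly to every symmetric monotonic norm with the \emph{same} portfolio $X$.
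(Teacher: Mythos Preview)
Your proposal is correct and is exactly the argument the paper has in mind: the lemma is stated as an immediate consequence of the \cite{patton_submodular_2023} result that every symmetric monotonic norm on $\R^d$ is $O(\log d)$-approximated by an ordered norm, and the paper does not even spell out the chaining of inequalities you give. There is nothing to add.
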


Finally, we characterize the class of duals to ordered norms and state the corresponding Cauchy-Schwarz inequality, which will be used in our \orderandcount framework. An order $\pi$ on a finite set $X$ is a bijection between $X$ and $\{1, \dots, |X|\}$; for simplicity we denote the set of all orders on $X$ as $\Perm(X)$ or as $\Perm(d)$ when $X = [d]$. We say that a vector $x \in \R_{\ge 0}^d$ \emph{satisfies} an order $\pi \in \Perm(d)$ if $x_{\pi(1)} \ge \dots \ge x_{\pi(d)}$.

\begin{restatable}[Dual ordered norms]{lemma}{dualorderednorm}\label{lem: dual-ordered-norm}
Given a weight vector $w \in \R^d$, the dual norm $\|\cdot\|_{(w)}^*$ to ordered norm $\|\cdot\|_{(w)}$ is given by
\begin{align*}
    \|y\|_{(w)}^* = \max_{k \in [d]} \frac{\|y\|_{\mathbf{1}_k}}{\|w\|_{\mathbf{1}_k}}.
\end{align*}
\end{restatable}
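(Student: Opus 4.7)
} The plan is to prove the two inequalities $\|y\|_{(w)}^* \le \max_k \|y\|_{\mathbf{1}_k}/\|w\|_{\mathbf{1}_k}$ and $\|y\|_{(w)}^* \ge \max_k \|y\|_{\mathbf{1}_k}/\|w\|_{\mathbf{1}_k}$ separately, using the variational definition $\|y\|_{(w)}^* = \sup\{y^\top x : \|x\|_{(w)} \le 1\}$. Throughout, since $w \ne 0$ and its coordinates are sorted in decreasing order, every partial sum $\|w\|_{\mathbf{1}_k}$ is strictly positive, so the ratios appearing on the right-hand side are well defined.

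For the lower bound, let $k^*$ achieve $\max_k \|y\|_{\mathbf{1}_k}/\|w\|_{\mathbf{1}_k}$, let $S \subseteq [d]$ be the indices of the $k^*$ largest coordinates of $|y|$, and define $x$ by $x_i = \operatorname{sign}(y_i)/\|w\|_{\mathbf{1}_{k^*}}$ for $i \in S$ and $x_i = 0$ otherwise. Then $|x|^\da = \mathbf{1}_{k^*}/\|w\|_{\mathbf{1}_{k^*}}$, so $\|x\|_{(w)} = 1$, and $y^\top x = \|y\|_{\mathbf{1}_{k^*}}/\|w\|_{\mathbf{1}_{k^*}}$, giving the desired inequality.

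For the upper bound, fix any $x$ with $\|x\|_{(w)} \le 1$. The rearrangement inequality yields $y^\top x \le \sum_i |y_i|\,|x_i| \le \sum_i |y|_i^\da\,|x|_i^\da$, so setting $u = |x|^\da$ and $v = |y|^\da$, it suffices to bound $v^\top u$ under the constraints that $u$ is nonnegative, weakly decreasing, and satisfies $w^\top u \le 1$. The key step is an Abel (summation-by-parts) transformation: writing $\delta_i = u_i - u_{i+1} \ge 0$ with $u_{d+1} = 0$, one gets
\[
    v^\top u = \sum_{i=1}^d \|y\|_{\mathbf{1}_i}\,\delta_i, \qquad w^\top u = \sum_{i=1}^d \|w\|_{\mathbf{1}_i}\,\delta_i \le 1.
\]
Now factor $\|y\|_{\mathbf{1}_i} = (\|y\|_{\mathbf{1}_i}/\|w\|_{\mathbf{1}_i})\cdot \|w\|_{\mathbf{1}_i}$ and pull out the maximum ratio:
\[
    v^\top u \;\le\; \Bigl(\max_{k \in [d]} \tfrac{\|y\|_{\mathbf{1}_k}}{\|w\|_{\mathbf{1}_k}}\Bigr) \sum_{i=1}^d \|w\|_{\mathbf{1}_i}\,\delta_i \;\le\; \max_{k \in [d]} \tfrac{\|y\|_{\mathbf{1}_k}}{\|w\|_{\mathbf{1}_k}},
\]
completing the upper bound and hence the lemma.

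The only step that requires any care is the reduction to the sorted, nonnegative case: one must note that flipping $x_i$'s sign to match $y_i$ preserves $|x|^\da$ (and thus $\|x\|_{(w)}$) while only increasing $y^\top x$, and that permuting the coordinates of $x$ likewise leaves $\|x\|_{(w)}$ unchanged, so the rearrangement inequality can be applied without cost. The Abel transformation then turns the problem into a one-dimensional LP in the nonnegative variables $\delta_i$, whose optimum is trivially attained at a single coordinate and equals the claimed maximum ratio.
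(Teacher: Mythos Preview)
Your proof is correct and follows essentially the same approach as the paper's: both directions use the rearrangement inequality together with an Abel/summation-by-parts identity (the paper writes this as $v^\top u = (\sigma v)^\top(\Delta u)$, you write it via $\delta_i = u_i - u_{i+1}$), and the lower bound is witnessed by the same test vector $x = \mathbf{1}_{k^*}/\|w\|_{\mathbf{1}_{k^*}}$ up to signs. The only cosmetic difference is that the paper phrases the argument as equality of unit balls while you work directly with the variational definition of the dual norm.
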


\begin{restatable}[Ordered Cauchy-Schwarz]{lemma}{orderedcauchyschwarz}\label{lem: ordered-cauchy-schwarz}
For all $x, y \in \RO^d$, $\|x\|_{(w)} \|y\|_{(w)}^* \ge x^\top y.$
Further, equality holds if and only if
\begin{enumerate}
    \item there is some order $\pi \in \mathrm{Perm}(d)$ such that $x, y$ both satisfy $\pi$.
    \item for each $k \in [d]$ either $x^\da_k = x^\da_{k + 1}$ or $\frac{\|y\|_{\mathbf{1}_k}}{\|w\|_{\mathbf{1}_k}} = \|y\|_{(w)}^*$.
\end{enumerate}
\end{restatable}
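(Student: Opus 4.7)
The plan is to prove the inequality in two reduction steps: first use the classical rearrangement inequality to pass to the case where $x$ and $y$ are both sorted in a common order (this is what handles equality condition~(1)), and then exploit the majorization-flavored characterization of $\|\cdot\|_{(w)}^*$ from Lemma~\ref{lem: dual-ordered-norm} via Abel summation (summation by parts).

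\textbf{Step 1 (Reduction to a common order).} Since both $\|\cdot\|_{(w)}$ and $\|\cdot\|_{(w)}^*$ depend on the input only through its sorted version, we have $\|x\|_{(w)}\|y\|_{(w)}^* = \|x^\da\|_{(w)}\|y^\da\|_{(w)}^*$. The rearrangement inequality gives $x^\top y \le (x^\da)^\top y^\da$, with equality if and only if there exists $\pi \in \Perm(d)$ such that $x_{\pi(1)} \ge \dots \ge x_{\pi(d)}$ and $y_{\pi(1)} \ge \dots \ge y_{\pi(d)}$, i.e., both $x$ and $y$ satisfy $\pi$. Thus it suffices to prove the inequality and identify the remaining equality condition under the assumption $x = x^\da$ and $y = y^\da$.

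\textbf{Step 2 (Abel summation against the dual characterization).} Write $\lambda := \|y\|_{(w)}^*$, and let $Y_k := \sum_{i \le k} y_i$ and $W_k := \sum_{i \le k} w_i$. By Lemma~\ref{lem: dual-ordered-norm}, $Y_k \le \lambda W_k$ for every $k \in [d]$ (i.e., $y \preceq \lambda w$ in the sorted setting). Using $y_k = Y_k - Y_{k-1}$ with the convention $Y_0 = 0$, Abel summation gives
\[
    x^\top y \;=\; \sum_{k=1}^{d} x_k y_k \;=\; x_d Y_d + \sum_{k=1}^{d-1} (x_k - x_{k+1})\, Y_k.
\]
Because $x = x^\da$ is nonnegative and nonincreasing, the coefficients $x_d$ and $x_k - x_{k+1}$ are all nonnegative, so replacing each $Y_k$ by the upper bound $\lambda W_k$ and then reversing the Abel identity yields
\[
    x^\top y \;\le\; \lambda\left( x_d W_d + \sum_{k=1}^{d-1} (x_k - x_{k+1}) W_k \right) \;=\; \lambda \sum_{k=1}^{d} x_k w_k \;=\; \|x\|_{(w)}\|y\|_{(w)}^*.
\]

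\textbf{Equality tracking.} The inequality in Step~2 is tight if and only if each term's slack vanishes, i.e., $x_d(\lambda W_d - Y_d) = 0$ and $(x_k - x_{k+1})(\lambda W_k - Y_k) = 0$ for $k < d$. Adopting the natural convention $x^\da_{d+1} := 0$, these collapse to exactly condition~(2) of the lemma: for every $k \in [d]$ either $x^\da_k = x^\da_{k+1}$ or $Y_k/W_k = \|y\|_{(w)}^*$. Combined with condition~(1) from Step~1, this gives the full equality characterization. The only subtlety I anticipate is bookkeeping at the boundary $k = d$ (handled by the convention $x^\da_{d+1} = 0$) and verifying that the equality in rearrangement really does correspond to the existence of a common order $\pi$ rather than merely to the sorted versions coinciding pointwise; this is standard but worth writing out carefully, since ties in $x$ or $y$ allow multiple valid $\pi$'s.
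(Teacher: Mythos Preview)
Your proof is correct and follows essentially the same route as the paper: rearrangement inequality to reduce to a common order, then Abel summation against the prefix-sum bound $Y_k \le \lambda W_k$ coming from Lemma~\ref{lem: dual-ordered-norm}, with equality tracked termwise. The paper packages the summation-by-parts step using the operators $\sigma$ and $\Delta$ (so the identity reads $v^\top u = (\sigma v)^\top(\Delta u)$), but the argument and the equality analysis are identical to yours.
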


\section{\orderandcount for \machineloadsidenticaljobs (\mlij)}\label{sec: identical-jobs-scheduling}

In this section, we introduce the \orderandcount framework and prove Theorem \ref{thm: identical-jobs-portfolio} for the \textsc{Machine-Loads-Identical-Jobs} (\mlij) problem. Recall that we seek to assign $n$ copies of a job among $d$ processors/machines with different processing times $p_i, i \in [d]$. This is the simplest model for workload distribution where some tasks must be distributed among individuals in a workplace: processors correspond to individuals, processing times represent their efficiencies, and balancing loads on machines corresponds to managing the workloads of the individuals.
Given a norm $\|\cdot\|_f$ on $\R^d$, the goal is to schedule the jobs to minimize the norm of the vector of machine loads. We seek a portfolio of solutions (i.e. schedules) for ordered norms $\ordered$ and symmetric monotonic norms $\smn$.

To see why size$>$1-portfolios are necessary at all, we observe a simple example where no solution is simultaneous $o(\sqrt{d})$-approximation: suppose there are $n = d$ jobs and $p_1 = 1$ while $p_2 = \dots = p_d = \sqrt{d}$. The optimal solution for $L_\infty$ (i.e. maximum load) minimization assigns one job per machine to get maximum load $\sqrt{d}$. The optimal solution for $L_1$ (i.e. total load) minimization assigns all jobs to the most efficient machine, i.e., machine $1$, for a total load of $d$. Therefore, any assignment with $< d/2$ jobs on machine $1$ is an $\Omega(\sqrt{d})$-approximation for $L_1$ norm, and any assignment with $\ge d/2$ jobs on machine $1$ is an $\Omega(\sqrt{d})$-approximation for $L_\infty$ norm. This motivates us to increase the portfolio size.

In Section \ref{sec: identical-jobs-scheduling-upper-bound}, we prove the upper bound on portfolio size in Theorem \ref{thm: identical-jobs-portfolio}, guaranteeing for each $\alpha > 4$ a size-$O\left(\frac{\log d}{\log(\alpha/4)}\right)$ portfolio that is $\alpha$-approximate for $\ordered$ and $O(\alpha \log d)$-approximate for $\smn$. We prove the lower bound showing that any $\alpha$-approximate portfolio for ordered norms must have size $\Omega\left(\frac{\log d}{\log \alpha + \log\log d}\right)$ in Section \ref{sec: identical-jobs-scheduling-lower-bound}. We will also prove (Theorem \ref{thm: portfolios-for-top-k-norms-are-not-portfolios-for-ordered-norms}) that there are instances of \mlij with optimal portfolio of size $2$ for $\topk$ but with no $O(1)$-approximate portfolio of size $o\left(\frac{\log d}{\log\log d}\right)$ for $\ordered$.

We start with some notation. Since all jobs are identical, we can identify a schedule by the number of jobs on each machine. If $n_i \in \ZO$ jobs are scheduled on machine $i$, then $\sum_{i \in [d]} n_i = n$, and the load vector is $x = x(n) = (n_1 p_1, \dots, n_d p_d)$. Therefore, the set of feasible vectors is $\domain = \{x \in \R_{\ge 0}^d: x_i = n_i p_i \ \forall \ i \in [d], \sum_{i} n_i = n\}$.
We can relabel the machine indices and assume without loss of generality that $0 < p_1 \le \dots \le p_d$.

\subsection{Portfolio upper bound}\label{sec: identical-jobs-scheduling-upper-bound}

At a high level, we show that special instances of \mlij that we call \emph{doubling instances} -- those where each $p_i$ is a power of $2$ -- satisfy two key properties: (i) any instance of \mlij is $2$-approximated by some doubling instance (Lemma \ref{lem: job-times-doubling-instances}), and (ii) the optimal solution $x^{\OPT}$ to a doubling instance satisfies $x^{\OPT}_1 \ge x^{\OPT}_2 \ge \dots \ge x^{\OPT}_d$ (Lemma \ref{lem: identical-jobs-opt-monotonic}), i.e., must satisfy a specific order of coordinates.
These inequalities allow us to relax the integrality constraints and consider the polyhedron $\mc{P} = \{x: \sum_i \frac{x_i}{p_i} = n; x_1 \ge \dots \ge x_d \ge 0\}$, where the coordinate-wise inequality constraints can be put in for doubling instances. This sets up
\orderandcount: there is only one possible order for vectors $x \in \mc{P}$, which is $x_1 \ge \dots \ge x_d \ge 0$.
Each ordered norm $\|x\|_{(w)} = w^\top x$ is a linear function over $\mc{P}$, and so the set of vertices of $\mc{P}$ form an optimal portfolio for ordered norms over $\mc{P}$ for the doubling instance and a $2$-approximate portfolio for the original instance. We show that we can restrict to $O(\log_{\alpha/4} d)$ of these vertices, losing factor $\alpha/4$. Finally, we lose another factor $2$ in rounding fractional solutions to integral ones, to get an overall approximation factor $\alpha$ for ordered norms.

\begin{wrapfigure}{r}{0.45\textwidth}
    \centering
    \includegraphics[width=0.40\textwidth]{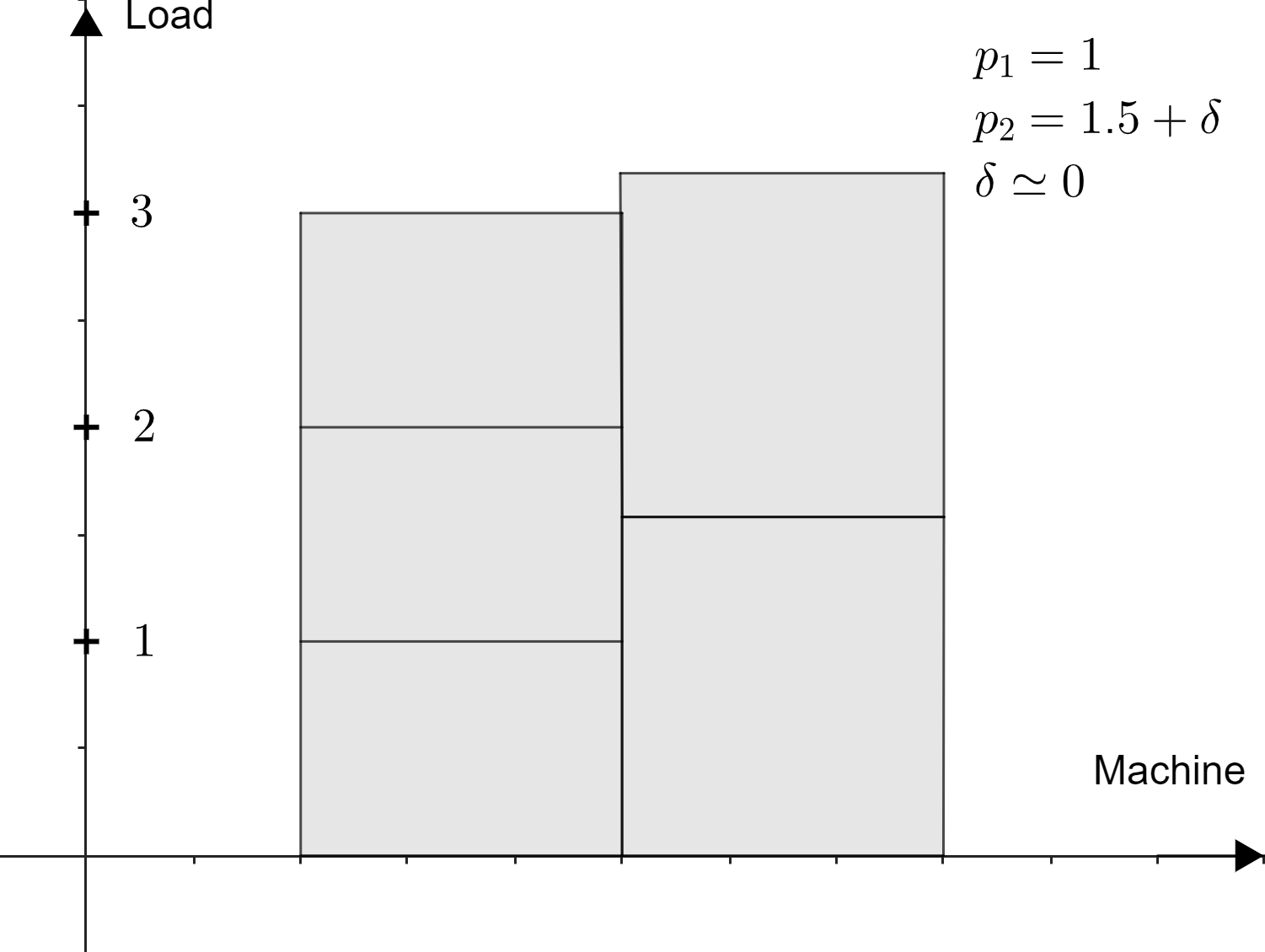}
    \caption{An example for makespan minimization with $2$ machines and $5$ jobs where $\xOPT_1 < \xOPT_2$ for optimal load vector $\xOPT$.}
    \label{fig: nonmonotonic-example}
    \vspace{-4em}
\end{wrapfigure}

\begin{lemma}\label{lem: job-times-doubling-instances}
Given an instance of \mlij with $d$ machines and $n$ copies of a job, we can get an instance of the problem with $d$ machines and $n$ jobs such that: for any load vector $x'$ for this modified instance, the corresponding load vector $x$ for the original instance satisfies
\[
    \frac{1}{\sqrt{2}} x \le x' \le \sqrt{2} x.
\]
\end{lemma}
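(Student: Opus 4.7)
The plan is to construct the modified instance by rounding each processing time $p_i$ to its nearest integer power of $2$ on the logarithmic scale. Specifically, for each $i \in [d]$, let $k_i \in \Z$ be the unique integer with $p_i \in [2^{k_i - 1/2}, 2^{k_i + 1/2})$, and set $p'_i := 2^{k_i}$. The modified instance will have the same $d$ machines and $n$ identical jobs, but with processing times $p'_1, \dots, p'_d$. By construction every $p'_i$ is an integer power of $2$, so the modified instance is a doubling instance, which is the form needed by the monotonicity-of-optimum lemma that follows.

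The ratio $p'_i / p_i$ then lies in the interval $[2^{-1/2}, 2^{1/2}] = [1/\sqrt{2}, \sqrt{2}]$ for every $i \in [d]$. Since a schedule is determined by multiplicities $(n_1, \dots, n_d) \in \ZO^d$ with $\sum_i n_i = n$, the corresponding load vectors are $x = (n_1 p_1, \dots, n_d p_d)$ for the original instance and $x' = (n_1 p'_1, \dots, n_d p'_d)$ for the modified one. Coordinate-wise, $x'_i / x_i = p'_i / p_i$, so the per-coordinate bound on the ratio transfers directly, giving $x / \sqrt{2} \le x' \le \sqrt{2}\, x$ as claimed.

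There is no real obstacle here; the argument is essentially a one-step geometric rounding. The factor $\sqrt{2}$ is tight, attained when some $p_i$ sits at the geometric midpoint $2^{k + 1/2}$ between two consecutive powers of $2$. The content of the lemma lies downstream: composing the bound in both directions (original $\to$ modified $\to$ original) shows that any instance of \mlij is $2$-approximated by a doubling instance, and it is this reduction to doubling instances that lets the subsequent monotonicity result pin down a single coordinate order for the optimum and set up the \orderandcount reduction on the polyhedron $\mc{P}$.
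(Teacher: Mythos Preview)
Your proof is correct and is essentially identical to the paper's: round each $p_i$ to the nearest power of $2$, observe that $p_i'/p_i \in [1/\sqrt{2},\sqrt{2}]$, and propagate this coordinate-wise to the load vectors $x = (n_i p_i)_i$ and $x' = (n_i p_i')_i$. The paper states exactly this in three sentences; your additional remarks about tightness and the downstream use are accurate but not part of the proof itself.
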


\begin{proof}
    To construct the new instance, round each $p_i$ to its closest power of $2$, say $p_i'$. Then $\frac{1}{\sqrt{2}} p_i' \le p_i \le \sqrt{2} p_i'$. When $n_i$ jobs are scheduled on processor $i$, corresponding load vectors $x = (n_1 p_1, \dots, n_d p_d)$ and $x' = (n_1 p_1', \dots, n_d p_d')$ are within factor $\sqrt{2}$ of each other.
\end{proof}

\begin{corollary}\label{cor: identical-jobs-doubling-to-general}
For $\ordered$, an $\alpha$-approximate portfolio for an instance of \mlij\ can be obtained from a $\frac{\alpha}{2}$-approximate portfolio for the corresponding doubling instance.
\end{corollary}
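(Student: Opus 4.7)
The plan is to transfer a portfolio from the doubling instance back to the original instance by taking the same schedules (i.e., the same assignment vectors $n = (n_1, \dots, n_d)$), and to use the coordinate-wise sandwich from Lemma \ref{lem: job-times-doubling-instances} together with monotonicity of ordered norms on the nonnegative orthant. First, for each schedule represented in $X'$, I form the corresponding original load vector $z_i = n_i p_i$; this gives a candidate portfolio $X$ of the same size as $X'$ for the original instance.

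Next, fix an arbitrary ordered norm $\|\cdot\|_{(w)} \in \ordered$ and let $x^*$ be its minimizer over the original instance with schedule $n^*$. Let $x'^*$ denote the load vector that the schedule $n^*$ induces on the doubling instance. By Lemma \ref{lem: job-times-doubling-instances}, $x'^* \le \sqrt{2}\, x^*$ coordinatewise, and since ordered norms are monotone on $\RO^d$ (the sorted coordinates of a nonnegative vector are monotone under coordinatewise inequality, and the weights $w_k$ are nonnegative), we get $\|x'^*\|_{(w)} \le \sqrt{2}\,\|x^*\|_{(w)}$. Hence the optimum $y^*$ for $\|\cdot\|_{(w)}$ on the doubling instance satisfies $\|y^*\|_{(w)} \le \sqrt{2}\,\|x^*\|_{(w)}$.

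Now apply the $\frac{\alpha}{2}$-approximation guarantee of $X'$: there exists $z' \in X'$ with $\|z'\|_{(w)} \le \frac{\alpha}{2}\,\|y^*\|_{(w)} \le \frac{\alpha}{\sqrt{2}}\,\|x^*\|_{(w)}$. Let $z \in X$ be the original-instance load vector produced by the same schedule as $z'$. The other direction of Lemma \ref{lem: job-times-doubling-instances} gives $z \le \sqrt{2}\, z'$ coordinatewise, and monotonicity of ordered norms yields $\|z\|_{(w)} \le \sqrt{2}\,\|z'\|_{(w)} \le \alpha\,\|x^*\|_{(w)}$. Since $\|\cdot\|_{(w)}$ was arbitrary, $X$ is an $\alpha$-approximate portfolio for $\ordered$ on the original instance.

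The only real ingredients are the two-sided inequality from Lemma \ref{lem: job-times-doubling-instances} and the fact that for nonnegative vectors $x \le y$ implies $x^\da \le y^\da$ coordinatewise, which makes ordered norms monotone; no step is delicate. The ``main obstacle,'' if any, is simply being careful that the monotonicity argument uses nonnegativity of the load vectors (which is immediate since loads and weights are nonnegative) so that the factor $\sqrt{2}$ multiplies cleanly on both sides to give $\sqrt{2}\cdot\sqrt{2} = 2$.
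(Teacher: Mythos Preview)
Your proposal is correct and is exactly the argument the paper has in mind: the corollary is stated immediately after Lemma~\ref{lem: job-times-doubling-instances} without proof, and your write-up simply spells out the obvious transfer of schedules together with the two applications of the $\sqrt{2}$ sandwich and monotonicity of ordered norms on $\RO^d$.
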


Here is the first main idea of \orderandcount: we show next that for doubling instances, optimal load vector $\xOPT$ for any norm always satisfies the order $\xOPT_1 \ge \dots \ge \xOPT_d$. This is false if the instance is not doubling; see Figure \ref{fig: nonmonotonic-example}.

\begin{lemma}\label{lem: identical-jobs-opt-monotonic}
Suppose $x^{\OPT}$ is the optimal load vector for some symmetric monotonic norm $\|\cdot\|_f$ for a doubling instance. We can assume without loss of generality that $x^{\OPT}_1 \ge x^{\OPT}_2 \ge \dots \ge x^{\OPT}_d$.
\end{lemma}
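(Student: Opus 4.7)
The plan is a local exchange argument with a potential function. Suppose $x^{\OPT}$ is optimal but not monotone, so there is an inversion $i<j$ with $x^{\OPT}_i < x^{\OPT}_j$. Since the instance is doubling and $i<j$, we have $p_j = 2^k p_i$ for some integer $k\ge 0$, and crucially $(x^{\OPT}_j - x^{\OPT}_i)/p_i = n_j 2^k - n_i$ is a nonnegative integer. I move $m := \min\{n_j,\, n_j 2^k - n_i\}$ jobs from machine $j$ to machine $i$, obtaining a feasible load vector $x'$ with $x'_i = x^{\OPT}_i + m p_i$, $x'_j = x^{\OPT}_j - m p_j$, and $x'_l = x^{\OPT}_l$ otherwise. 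I will argue (a) $\|x'\|_f \le \|x^{\OPT}\|_f$, so $x'$ remains optimal, and (b) the potential $\Phi(x) := \sum_{l \in [d]} l\cdot x_l$ strictly decreases. Since $\domain$ is finite and $\Phi$ strictly decreases per step, iterating terminates at a monotone optimum.

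For (a), I first establish the pairwise inequalities
\[
x'_i \le x^{\OPT}_j, \qquad x'_j \le x^{\OPT}_i, \qquad x'_i + x'_j \le x^{\OPT}_i + x^{\OPT}_j
\]
by splitting on which term realizes the minimum. If $m = n_j 2^k - n_i$ (so $n_i \ge n_j(2^k-1)$) then the loads at $i,j$ essentially swap: $x'_i = x^{\OPT}_j$, and the bound $x'_j = (n_i - n_j(2^k-1)) p_j \le x^{\OPT}_i$ reduces algebraically to $n_i \le n_j 2^k$, which is exactly the inversion hypothesis. If instead $m = n_j$ (so $n_i < n_j(2^k-1)$) then $x'_j = 0$ and $x'_i = (n_i+n_j) p_i$; the case hypothesis directly gives $x'_i \le n_j 2^k p_i = x^{\OPT}_j$. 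The sum inequality in both cases reduces to $p_i \le p_j$.

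I then lift these pairwise bounds to the global majorization $x' \preceq x^{\OPT}$, at which point Lemma~\ref{lem: majorization-norm-order-2} yields $\|x'\|_f \le \|x^{\OPT}\|_f$. For each $k \in [d]$, take a size-$k$ subset $S$ maximizing $\sum_{l \in S} x'_l$ and case on $S \cap \{i,j\}$: if $S$ contains both or neither index, apply $x'_i+x'_j \le x^{\OPT}_i+x^{\OPT}_j$; if only $i \in S$, replace $i$ by $j$ in $S$ to get a size-$k$ set $T$ for which $\sum_{l \in S} x'_l - \sum_{l \in T} x^{\OPT}_l = x'_i - x^{\OPT}_j \le 0$; if only $j \in S$, use the trivial bound $x'_j \le x^{\OPT}_j$. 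For (b), a direct expansion yields $\Phi(x')-\Phi(x^{\OPT}) = m(ip_i - jp_j) < 0$, since $i<j$ and $p_i \le p_j$ force $ip_i < jp_j$ (and $m \ge 1$). The main delicacy is the two-case verification of the three pairwise inequalities; the doubling property is used precisely to guarantee that $m$ is a nonnegative integer, so that the exchange is a valid feasible operation on identical jobs.
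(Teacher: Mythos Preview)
Your proof is correct and follows the same exchange-plus-majorization strategy as the paper. The paper's version is mechanically simpler: it picks an \emph{adjacent} inversion $x^{\OPT}_i < x^{\OPT}_{i+1}$ and moves a \emph{single} job from machine $i+1$ to machine $i$, using the doubling property only via the divisibility $p_i \mid p_{i+1}$ to guarantee $x^{\OPT}_{i+1} - x^{\OPT}_i \ge p_i$ and hence $\max(x_i,x_{i+1}) \le x^{\OPT}_{i+1}$ after the swap. Your batch move over a general pair $i<j$ forces the two-case analysis you carry out, but in return gives you the explicit potential $\Phi(x)=\sum_l l\cdot x_l$ and a clean termination argument; the paper leaves termination implicit. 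Both routes land on $x' \preceq x^{\OPT}$ and invoke Lemma~\ref{lem: majorization-norm-order-2}.
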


\begin{proof}
    Suppose $\xOPT_i < \xOPT_{i + 1}$ for some $i$. Transfer one job from machine $i + 1$ to machine $i$, to get the new load vector $x$ defined as:
    \[
        x_{l} = \begin{cases}
                    \xOPT_l & \text{if} \: l \neq i, i + 1, \\
                    \xOPT_i + p_i & \text{if} \: l = i, \\
                    \xOPT_{i + 1} - p_{i + 1} & \text{if} \: l = i + 1.
        \end{cases}
    \]
    Since $p_i$ divides $p_{i + 1}$ and $\xOPT_{i + 1} > \xOPT_i$, we get that $\xOPT_{i + 1} - \xOPT_i \ge p_i$. Therefore,
    \[
        \max(x_i, x_{i + 1}) = \max\left(\xOPT_i + p_i, \xOPT_{i + 1} - p_{i + 1}\right) \le \xOPT_{i + 1} = \max\left(\xOPT_i, \xOPT_{i + 1}\right).
    \]
    Further, $x_i + x_{i + 1} < \xOPT_i + \xOPT_{i + 1}$. That is, $(x_i, x_{i + 1}) \prec (\xOPT_i , \xOPT_{i + 1})$. Since all other coordinates of $x$ and $\xOPT$ are equal, a simple inductive argument shows that $x \preceq \xOPT$. Lemma \ref{lem: majorization-norm-order-2} implies that $\|x\|_f \le \|\xOPT\|_f$, finishing the proof.
\end{proof}

For the rest of this section, we restrict ourselves to doubling instances; we will give an $\alpha/2$-approximate portfolio of size $\le 1 + \log_{\alpha/4} d$ for ordered norms over doubling instances.
For any weight vector $w$, Lemma \ref{lem: identical-jobs-opt-monotonic} allows us to relax the integer program (\ref{LP: identical-jobs-ip}) to a linear program: while not every load vector forms a feasible solution to \ref{LP: identical-jobs-ip}, Lemma \ref{lem: identical-jobs-opt-monotonic} shows that there is an optimal solution that is feasible for this IP.

\begin{minipage}{\textwidth}
    \begin{minipage}{0.48\textwidth}
        \begin{align}
            \label{LP: identical-jobs-ip}\tag{IP1} \min & \: w^\top x & \text{s.t.} \\
            \sum_i \frac{x_i}{p_i} &= n, \\
            x_i &\ge x_{i + 1} & \forall \: i \in [d - 1], \\
            \frac{x_i}{p_i} &\in \Z_{\ge 0} & \forall \: i \in [d],
        \end{align}
    \end{minipage} %
    \hfill
    \begin{minipage}{0.48\textwidth}
        \begin{align}
            \label{LP: identical-jobs-relaxation}\tag{LP1} \min & \: w^\top x & \text{s.t.} \\
            \sum_i \frac{x_i}{p_i} &= n, \\
            x_i &\ge x_{i + 1} & \forall \: i \in [d - 1], \\
            x &\ge 0.
        \end{align}
    \end{minipage}
\end{minipage}
\smallskip

Our next lemma characterizes the $d$ vertices of the constraint polytope $\mathcal{P} := \{x: \sum_i \frac{x_i}{p_i} = n; x_1 \ge \dots \ge x_d \ge 0\}$ of \ref{LP: identical-jobs-relaxation}. We omit the straightforward proof.

\begin{lemma}\label{lem: identical-jobs-vertices}
For any weight vector $w$, the optimal solution $x^*$ to \ref{LP: identical-jobs-relaxation} satisfies for some $l \in [d]$ that
\begin{align*}
    x_1^* = \dots = x_l^* = \frac{n}{\sum_{i \in [l]} \frac{1}{p_i}}, \quad x^*_{l + 1} &= \dots = x^*_d = 0.
\end{align*}
\end{lemma}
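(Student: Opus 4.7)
The plan is to invoke standard LP theory: a linear function attains its minimum over a bounded polytope at a vertex, so it suffices to show that the $d$ points described in the lemma statement are exactly the vertices of $\mathcal{P}$. I would first observe that $\mathcal{P}$ is bounded (from $x_i \ge 0$ and $\sum_j x_j / p_j = n$ we get $x_i / p_i \le n$, so $x_i \le n p_i$), hence $\mathcal{P}$ is a polytope, and by LP duality/Bauer's maximum principle the optimum of \ref{LP: identical-jobs-relaxation} is attained at a vertex.

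Next, I would enumerate the vertices by counting active constraints. The polytope $\mathcal{P} \subseteq \R^d$ is defined by one equality $\sum_i x_i / p_i = n$ together with $d$ inequality constraints, namely $x_i \ge x_{i+1}$ for $i \in [d-1]$ and $x_d \ge 0$. A vertex must have $d$ linearly independent active constraints; since the equality is always active, exactly $d - 1$ of the $d$ inequality constraints must be tight. For each $l \in [d]$, leaving out the $l$-th inequality constraint (with the convention that $l = d$ corresponds to leaving out $x_d \ge 0$) and making the others tight yields $x_1 = \dots = x_l$ and $x_{l+1} = \dots = x_d = 0$. The equality then forces the common positive value to be $n / \sum_{i \in [l]} (1/p_i)$, which is precisely the candidate vertex in the statement.

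Finally, I would verify that each such candidate lies in $\mathcal{P}$ (it does, since the common value is nonnegative and the ordering $x_1 \ge \dots \ge x_d \ge 0$ is satisfied) and that these $d$ choices exhaust the vertices (any other selection of $d-1$ inequalities, combined with the equality, produces either the same point or a linearly dependent system). Applying Bauer's principle gives the lemma.

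There is no real obstacle here: the only thing to be careful about is the bookkeeping in the $l = d$ case (no ``zero'' coordinates) and observing that the equality constraint alone rules out the all-zero vector, so all $d$ vertex candidates are well-defined and feasible.
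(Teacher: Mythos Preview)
Your proposal is correct and is exactly the straightforward vertex-enumeration argument the paper has in mind (the paper omits the proof, calling it ``straightforward''). The only cosmetic remark is that invoking ``LP duality/Bauer's maximum principle'' is overkill---it is enough to say that a linear objective over a polytope is minimized at a vertex---but the substance is fine.
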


For $l \in [d]$, denote the $l$th vertex as $x(l) := \frac{n}{\sum_{i \in [l]} \frac{1}{p_i}} \mathbf{1}_l$, with $l$ non-zero entries. Call $x(l)$ \emph{good} if
\begin{equation}\label{eqn: scheduling-good-vertex}
\frac{n}{\sum_{i \in [l]} \frac{1}{p_i}} \ge p_l,
\end{equation}
i.e., the value of each non-zero coordinate is at least the processing time corresponding to the last non-zero coordinate. Clearly, $x(1)$ is good since $n \ge 1$, and if $x(l)$ is good then $x(l - 1)$ is also good. The next lemma says that if $x(l)$ is good, then it can be rounded to an integral load vector:

\begin{lemma}\label{lem: identical-jobs-good-rounding}
If $x(l)$ is good, then it can be rounded to $\hat{x}(l)$ that is feasible for \ref{LP: identical-jobs-ip} and $\frac{1}{2} x(l) \le \hat{x}(l) \le 2 x(l)$.
\end{lemma}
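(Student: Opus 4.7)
Let $c := n/\sum_{i\in[l]} 1/p_i$, so $x(l) = c\mathbf{1}_l$ and, by goodness, $c \ge p_l \ge p_i$ for each $i \in [l]$. My plan is to build $\hat{x}(l)$ by choosing integer job counts $n_i \ge 0$ per machine, with $n_i = 0$ for $i > l$, such that $\sum_{i \in [l]} n_i = n$, each load $n_i p_i \in [c/2, 2c]$, and the loads are non-increasing in $i$.

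Start with the floor assignment $n_i^0 := \lfloor c/p_i \rfloor$ on $[l]$. Because $c/p_i \ge 1$, the elementary inequality $\lfloor y \rfloor \ge y/2$ for $y \ge 1$ yields $L_i^0 := n_i^0 p_i \in [c/2, c]$. The doubling structure means that $p_i$ divides $p_{i+1}$, so $c \bmod p_i \le c \bmod p_{i+1}$, and hence $L_i^0 = c - (c \bmod p_i)$ is non-increasing. The number of jobs placed is $\sum_i n_i^0 = n - \delta$, where $\delta := \sum_{i \in [l]} \{c/p_i\}$ is a non-negative integer strictly less than $l$. To place the remaining $\delta$ jobs, I plan a greedy left-to-right sweep: at step $i = 1,\ldots,l$, increment $n_i^0$ by the largest integer $k_i \ge 0$ such that (a) $\sum_{j \le i} k_j \le \delta$, (b) $L_i^0 + k_i p_i \le L_{i-1}^0 + k_{i-1} p_{i-1}$ (monotonicity, with the convention that the right side is $+\infty$ at $i = 1$), and (c) $L_i^0 + k_i p_i \le 2c$. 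Each extra job on machine $i$ raises the load by $p_i \le p_l \le c$, so starting from $L_i^0 \le c$ the cap $2c$ leaves room for at least one extra; consequently the final $\hat{x}(l)$ satisfies the coordinate bounds, monotonicity, and integer divisibility $\hat{x}_i/p_i \in \Z_{\ge 0}$ by construction.

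The main technical obstacle is verifying that the greedy exhausts the deficit, i.e.\ $\sum k_i = \delta$. I plan to argue by contradiction: if $\sum k_i < \delta$ at termination, then (a) is never binding, so at every $i$ either (b) or (c) saturates; since (b) is vacuous at $i = 1$, constraint (c) must bind there, yielding $L_1 > 2c - p_1$. Chaining the saturation along the divisibility chain $p_1 \mid p_2 \mid \cdots \mid p_l$ of the doubling instance, and noting that when (b) binds at $i$, the integer constraint $L_i/p_i \in \Z$ combined with $L_i \in (L_{i-1} - p_i, L_{i-1}]$ forces $L_i$ to be $L_{i-1}$ rounded down to the nearest multiple of $p_i$, one derives a lower bound $\sum_i L_i/p_i > n$ (using goodness $c \ge p_l$ together with the geometric control on $\sum_{j \le i} p_j/p_i$ arising from the power-of-two structure, treating the equal-$p$ case separately via the forced equality $L_i = L_{i-1}$). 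This contradicts $\sum_i L_i/p_i = (n - \delta) + \sum k_i < n$, so the greedy in fact exhausts the deficit and produces $\hat{x}(l)$ feasible for \ref{LP: identical-jobs-ip} with $\tfrac12 x(l) \le \hat{x}(l) \le 2 x(l)$.
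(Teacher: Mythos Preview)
Your proposal takes a substantially more elaborate route than the paper. The paper's proof is a two-liner: set $n_i=c/p_i$, choose each $\hat n_i\in\{\lfloor n_i\rfloor,\lceil n_i\rceil\}$ so that $\sum_i\hat n_i=n$, and observe that goodness gives $n_i\ge 1$, whence $\tfrac12 n_i\le\hat n_i\le 2n_i$. That is all.

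You are right that the paper's argument does not verify the monotonicity constraint $\hat x_i\ge\hat x_{i+1}$ in \ref{LP: identical-jobs-ip}, and indeed a naive floor/ceiling rounding can violate it (e.g.\ $p=(1,2)$, $n=5$, $c=10/3$: rounding to $\hat n=(3,2)$ gives loads $(3,4)$). So strictly speaking the paper leaves a small gap relative to the literal statement. However, the monotonicity of $\hat x(l)$ is never used downstream: Lemma~\ref{lem: identical-jobs-large-portfolio} only needs that $\hat x(l)$ is a valid integral load vector (i.e.\ lies in $\domain$) with $\tfrac12 x(l)\le\hat x(l)\le 2x(l)$. Your extra work to enforce monotonicity is therefore not required for the paper's purposes.

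More importantly, your termination argument for the greedy sweep has a genuine gap. From the saturation conditions you correctly derive $L_1>2c-p_1$ and $L_i>L_{i-1}-p_i$ for $i\ge 2$ (since $L_{i-1}\le 2c$ makes the $(c)$-case imply the $(b)$-case inequality). Telescoping and grouping equal $p_i$'s into blocks with distinct values $q_1<\cdots<q_m$ and multiplicities $t_s$ yields
\[
\sum_i \frac{L_i}{p_i} \;>\; 2n \;-\; \sum_{s} t_s \sum_{r\le s}\frac{q_r}{q_s}\;\ge\; 2n-2l,
\]
using the power-of-two bound $\sum_{r\le s}q_r/q_s<2$. But this only contradicts $\sum_i L_i/p_i<n$ when $n\ge 2l$, whereas goodness only guarantees $n\ge l$. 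Your appeal to ``geometric control'' does not close the case $l\le n<2l$, and it is not clear how to do so without further ideas. If you want a rigorous monotone rounding, it would be cleaner to argue directly that the deficit $\delta<l$ can always be absorbed at machine $1$ alone under the cap $2c$ (which fails in general), or to abandon monotonicity and fall back on the paper's simple rounding, noting that \ref{LP: identical-jobs-ip}-feasibility in the strict sense is not actually needed.
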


\begin{proof}
    Denote $n_i = \frac{x(l)_i}{p_i}$ for all $i \in [d]$, then $n_{l + 1} = \dots = n_d = 0$ and $\sum_{i \in [d]} n_i = n$. Then one can assign either $\hat{n}_i = \lfloor n_i \rfloor$ or $\hat{n}_i = \lceil n_i \rceil$ jobs to machine $i \in [d]$, while ensuring that $\sum_{i \in [d]} \hat{n}_i = n$. The load on machine $i \in [d]$ in this new schedule is $\hat{x}(l)$, with $\hat{x}(l)_i = p_i \hat{n}_i$.

    By definition of good vertices, $x(l)_i \ge p_l \ge p_i$ for each $i \in [l]$. Therefore, we get $n_i \ge 1$, thus implying $\frac{1}{2} n_i \le \lfloor n_i \rfloor \le n_i$ and $n_i \le \lceil n_i \rceil \le 2 n_i$ for all $i \in [l]$. This implies $\frac{1}{2} n_i \le \hat{n}_i \le 2n_i$ for all $i \in [d]$. Since $n_i = \frac{x(l)_i}{p_i}$ and $\hat{n}_i = \frac{\hat{x}(l)_i}{p_i}$, we get the result.
\end{proof}

Let $L$ be the largest index such that $x(L)$ is good.
Our next lemma shows that rounding good vertices gives a $2$-approximate portfolio for ordered norms:

\begin{lemma}\label{lem: identical-jobs-large-portfolio}
$\{\hat{x}(1), \dots, \hat{x}(L)\}$ is a $2$-approximate portfolio for $\ordered$ over the doubling instance.
\end{lemma}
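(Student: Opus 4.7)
The plan is to fix an arbitrary decreasing weight vector $w \in \R_{\ge 0}^d$, let $\xOPT$ be the optimal integer-feasible load vector for $\|\cdot\|_{(w)}$, and exhibit some $j^* \le L$ for which $\|\hat{x}(j^*)\|_{(w)} \le 2\,\|\xOPT\|_{(w)}$. Write $c_l := n / \sum_{i \in [l]} (1/p_i)$ so that $x(l) = c_l \mathbf{1}_l$. Because we are in a doubling instance, Lemma \ref{lem: identical-jobs-opt-monotonic} lets me assume $\xOPT_1 \ge \cdots \ge \xOPT_d$, so $\|\xOPT\|_{(w)} = w^\top \xOPT$. Let $l' = \max\{i : \xOPT_i > 0\}$; since $\xOPT_{l'}$ is a positive integer multiple of $p_{l'}$, I have $\xOPT_{l'} \ge p_{l'}$.

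The first step is to show $l' \le L$, i.e., $x(l')$ is automatically good. Using $\xOPT_i \ge \xOPT_{l'} \ge p_{l'}$ for every $i \in [l']$ in the load constraint,
\[
    n \;=\; \sum_{i \in [l']} \frac{\xOPT_i}{p_i} \;\ge\; p_{l'} \sum_{i \in [l']} \frac{1}{p_i},
\]
which rearranges to $c_{l'} \ge p_{l'}$, matching (\ref{eqn: scheduling-good-vertex}). The second step is to find $j^* \in [l']$ with $w^\top x(j^*) \le w^\top \xOPT$. Consider the restriction of \ref{LP: identical-jobs-relaxation} to vectors supported on $[l']$ (by appending $x_i = 0$ for $i > l'$); a direct adaptation of Lemma \ref{lem: identical-jobs-vertices} shows its vertices are exactly $\{x(j) : j \in [l']\}$, and $\xOPT$ is feasible, so the minimizer $j^*$ of the linear objective $w^\top x$ satisfies $w^\top x(j^*) \le w^\top \xOPT$.

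The third step transfers the bound to $\hat{x}(j^*)$. By Lemma \ref{lem: identical-jobs-good-rounding}, $\hat{x}(j^*)$ has exactly $j^*$ positive coordinates, each in $[c_{j^*}/2,\, 2c_{j^*}]$, so the sorted vector $(\hat{x}(j^*))^\da$ is dominated coordinate-wise by $2 c_{j^*}\mathbf{1}_{j^*} = 2\,x(j^*)$, which is itself already sorted. With $w \ge 0$ this gives
\[
    \|\hat{x}(j^*)\|_{(w)} \;=\; w^\top (\hat{x}(j^*))^\da \;\le\; 2\,w^\top x(j^*) \;\le\; 2\,w^\top \xOPT \;=\; 2\,\|\xOPT\|_{(w)},
\]
as required. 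The subtlest point is this last step: $\hat{x}(j^*)$ is in general not sorted, so its ordered norm is not directly $w^\top \hat{x}(j^*)$; the key is that rounding preserves the support pattern and inflates entries by at most a factor of $2$, so the sorted version can be compared to the already-sorted $x(j^*)$. The other essential ingredient is Step 1: integer feasibility of $\xOPT$ (forcing $\xOPT_{l'} \ge p_{l'}$) certifies that the relevant vertex $x(l')$ already lies in the good regime, so the LP-style argument in Step 2 never produces a non-good vertex to round.
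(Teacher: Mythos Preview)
Your proof is correct and follows essentially the same approach as the paper: both fix a weight vector, let $l'$ be the largest index with $\xOPT_{l'} > 0$, use integrality plus Lemma~\ref{lem: identical-jobs-opt-monotonic} to show $x(l')$ is good (hence $l' \le L$), restrict the LP to $\{x_{l'+1} = \cdots = x_d = 0\}$ to find a vertex $x(j^*)$ with $w^\top x(j^*) \le w^\top \xOPT$, and then round via Lemma~\ref{lem: identical-jobs-good-rounding}. The paper handles your Step~3 more tersely, using the coordinate-wise bound $\hat{x}(j^*) \le 2\,x(j^*)$ and monotonicity of ordered norms directly; your explicit sorting argument is a valid (slightly more hands-on) way to reach the same conclusion.
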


\begin{proof}
    Fix a weight vector $w$. Let $x^{\OPT}$ be the (integral) optimal load vector for $\|\cdot\|_{(w)}$, and let $l$ be the largest index such that $\xOPT_{l} > 0$. We will first show that there exists an index $l' \le l$ such that (i) $x(l')$ is good, and (ii) $\|x(l')\|_{(w)} \le \|\xOPT\|_{(w)}$. Together with Lemma \ref{lem: identical-jobs-good-rounding}, this implies that $\|\hat{x}(l')\|_{(w)} \le 2\|\xOPT\|_{(w)}$, implying the lemma.

    We note first that $x(l)$ is good: since $\xOPT$ is integral and $\xOPT_l \neq 0$, we have $\xOPT_l \ge p_l$. From Lemma \ref{lem: identical-jobs-opt-monotonic}, we have $\xOPT_1 \ge \dots \ge \xOPT_l \ge p_l$. Since $\sum_{i \in [l]} \frac{\xOPT_i}{p_i} = n$, we get $n \ge \sum_{i \in [l]} \frac{p_l}{p_i} = p_l \sum_{i \in [l]} \frac{1}{p_i}$. That is, $x(l)$ is good.

    In particular, this implies that $x(l')$ is good for each $l' \le l$, so it is now sufficient to show that there is some $l' \le l$ such that $\|x(l')\|_{(w)} \le \|\xOPT\|_{(w)}$. Consider the following linear program:
    \begin{align}
        \label{LP: identical-jobs-relaxation-2}\tag{LP2} \min & \: w^\top x & \text{s.t.} \\
        \label{LP: constraint-total-jobs} \sum_i \frac{x_i}{p_i} &= n, \\
        x_i &\ge x_{i + 1} & \forall \: i \in [d - 1], \\
        x_{l + 1} &= \dots = x_d = 0.
    \end{align}
    $\xOPT$ is feasible for this LP by assumption. Further, by an argument similar to Lemma \ref{lem: identical-jobs-vertices}, we get that the vertices of the constraint polytope for this LP are $x(1), \dots, x(l)$. Therefore, there is some $l' \le l$ such that $\|x(l')\|_{(w)} = w^\top x(l') \le w^\top \xOPT = \|\xOPT\|_{(w)}$, finishing the proof.
\end{proof}

We are ready to prove the upper bound in Theorem \ref{thm: identical-jobs-portfolio}. We will convert the $2$-approximate portfolios of size $d$ for doubling instances to an $\alpha/2$-approximate portfolio of size $\sim \log_{\alpha/4} d$, which implies $\alpha$-approximate portfolios of size $\sim \log_{\alpha/4} d$ for \mlij\ by Corollary \ref{cor: identical-jobs-doubling-to-general}.

\begin{proof}[Proof of upper bound in Theorem \ref{thm: identical-jobs-portfolio}]
    We claim that for all indices $l, i \in [d]$ such that $i \le \frac{\alpha}{4} l$, we have $x(l) \preceq \frac{\alpha}{4} x(i)$. Therefore, $\|x(l)\|_{(w)} \le \frac{\alpha}{4} \|x(i)\|_{(w)}$ for all ordered norms $\|\cdot\|_{(w)}$ from Lemma \ref{lem: majorization-norm-order-2}, implying that $\left\{x((\alpha/4)^j) : j \in [0, 1 + \log_{(\alpha/4)} L]\right\}$ is an $(\alpha/2)$-approximate portfolio over doubling instances.

    Since $p_1 \le \dots \le p_d$ and $i \le \frac{\alpha}{4} l$, we have $\sum_{j \in [l]} \frac{1}{p_j} \ge
    \frac{4}{\alpha} \sum_{j \in [i]} \frac{1}{p_j}$. Therefore, for all $k \le l$, we have
    \begin{align*}
        \sum_{j \in [k]} x(l)_j = \frac{kn}{\sum_{i \in [l]} \frac{1}{p_j}} \le \frac{\alpha}{4} \cdot \frac{kn}{\sum_{j \in [i]} \frac{1}{p_j}} = \frac{\alpha}{4} \cdot \sum_{j \in [k]} x(i)_k.
    \end{align*}

    Further, for $k > l$,
    \[
        \sum_{j \in [k]} x(l)_j = \sum_{j \in [l]} x(l)_j = \frac{nl}{\sum_{j \in [l]} \frac{1}{p_j}} \le \frac{\alpha}{4} \frac{nl}{\sum_{j \in [i]} \frac{1}{p_j}} \le \frac{\alpha}{4} \sum_{j \in [i]} x(i)_j \le
        \frac{\alpha}{4} \sum_{j \in [k]} x(i)_j.
    \]
    Therefore, $x(l) \preceq (\alpha/4) x(i)$. This completes the proof.
\end{proof}

\subsection{Portfolio lower bound}\label{sec: identical-jobs-scheduling-lower-bound}

We prove the lower bound by giving an appropriate doubling instance with $d$ machines where any $\alpha$-approximate portfolio for ordered norms must have size $O\left(\frac{\log d}{\log \alpha + \log\log d}\right)$. Given $d$, let $S = S(d)$ be a superconstant that we specify later; assume that $S$ is an integer that is a power of $2$. Let $L$ be the largest integer such that $1 + S^2 + \dots + S^{2L} \le d$, then $L = \Theta(\log_S d)$. The $d$ machines are divided into $L + 1$ classes from $0$ to $L$: there are $S^{2l}$ machines in the $l$th class and the processing time on these machines is $p_l = S^l$. The number of jobs $n$ is $S^{3L}$; it is chosen so as to ensure that all vertices in the constraint polytope for \ref{LP: identical-jobs-relaxation} are good, and can be rounded to an integral solution that is only worse by a factor at most $2$ (Lemma \ref{lem: identical-jobs-good-rounding}).

There are $L + 1$ weight vectors for our instance. The first weight vector is $w(0) = (1, 1, \dots, 1)$. The second weight vector is $w(1) = \left(1, \frac{1}{S^2}, \frac{1}{S^2}, \dots, \frac{1}{S^2}\right)$. More generally, for $l \in [0, L]$,
\[
    w(l) = \bigg(1, \underbrace{\frac{1}{S^2}, \dots, \frac{1}{S^2}}_{S^2}, \underbrace{\frac{1}{S^4}, \dots, \frac{1}{S^4}}_{S^4}, \dots, \underbrace{\frac{1}{S^{2l - 2}}, \dots, \frac{1}{S^{2l - 2}}}_{S^{2l - 2}}, \underbrace{\frac{1}{S^{2l}}, \dots, \frac{1}{S^{2l}}}_{\text{remaining}} \bigg).
\]

With some foresight, we choose $S$ such that $\frac{S}{L} = 5\alpha$. We claim the following: for each $l \in [0, L - 1]$,
\begin{enumerate}
    \item There is a schedule $\hat{x}(l)$ for this instance with $\|\hat{x}(l)\|_{(w(l))} \le n lS^{-l}$. \label{todo: scheduling-lower-bound-condition-1}
    \item Any schedule $y$ that schedules more than $n/4$ jobs on machines in classes $l + 1$ to $L$ has $\|y\|_{(w(l))} \ge \frac{nS}{4} \cdot S^{-l}$. Combined with the above and since $\alpha \le \frac{S}{4L}$, it cannot be an $\alpha$-approximation for the $w(l)$-norm problem. \label{todo: scheduling-lower-bound-condition-2}
    \item Any schedule $y$ that schedules more than $n/4$ jobs on machines in classes $0$ to $l - 1$ has $\|y\|_{(w(l))} \ge \frac{nS}{2} \cdot S^{-l}$. Therefore, it cannot be an $\alpha$-approximation for the $w(l)$-norm problem either. \label{todo: scheduling-lower-bound-condition-3}
    \item $L = \Theta(\log_S d) = \Omega\left(\frac{\log d}{\log \alpha + \log\log d}\right)$. \label{todo: scheduling-lower-bound-condition-4}
\end{enumerate}

Claims \ref{todo: scheduling-lower-bound-condition-1}, \ref{todo: scheduling-lower-bound-condition-2}, and \ref{todo: scheduling-lower-bound-condition-3} imply that any $\alpha$-approximate solution for norm $w(l)$ must schedule at least $n/2$ jobs on machines in class $l$. Another application of claims \ref{todo: scheduling-lower-bound-condition-2} and \ref{todo: scheduling-lower-bound-condition-3} then implies that a portfolio that is $\alpha$-approximate for weight vectors $\{w(0), \dots, w(L - 1)\}$ must have distinct solutions for each weight vector, and therefore has size at least $L$. Claim \ref{todo: scheduling-lower-bound-condition-4} then implies our theorem.

Claim \ref{todo: scheduling-lower-bound-condition-4} is just computation: $L = \Theta(\log_S d) = \Theta(\log_{\alpha L} d) = \Theta\left(\frac{\log d}{\log \alpha + \log L}\right)$. If $L = \Omega(\log d)$, then we are done since the target size is anyway $\Theta\left(\frac{\log d}{\log \alpha + \log \log d}\right) = O(\log d)$ for constant $\alpha$. Otherwise, $\log L = O(\log \log d)$ and so $L = \Theta\left(\frac{\log d}{\log \alpha + \log L}\right) = \Theta\left(\frac{\log d}{\log \alpha + \log \log d}\right)$.

We move to claim \ref{todo: scheduling-lower-bound-condition-1}. As alluded to before, $n = S^{3L}$ has been chosen so that each vertex $x(l)$ of the constraint polytope is good (see inequality (\ref{eqn: scheduling-good-vertex})):
\[
    \frac{n}{1 \cdot \frac{1}{1} + S^2 \cdot \frac{1}{S} + \dots + S^{2L} \cdot \frac{1}{S^L}} \ge \frac{n}{2 S^{L}} \ge S^L = p_L.
\]

With this in hand, it is sufficient to give a fractional solution $x(l)$ with $\|x(l)\|_{(w(l))} = \Theta(n lS^{-l})$, since Lemma \ref{lem: identical-jobs-good-rounding} then implies the existence of an integral solution $\hat{x}(l)$ with norm at most twice. Consider $x(l) = \left(a, \dots, a, 0, \dots, 0\right)$ where the first $1 + S^2 + \dots + S^{2l}$ coordinates are non-zero and equal to $a$; all other coordinates are $0$. Since a total of $n$ jobs must be scheduled (constraint (\ref{LP: constraint-total-jobs})),
\[
    n = a \left(1 \cdot \frac{1}{1} + S^2 \cdot \frac{1}{S} + \dots + S^{2l} \cdot \frac{1}{S^l}\right) \ge a S^l,
\]
so that $a \le \frac{n}{S^l}$. Therefore,
\begin{align*}
    \|x(l)\|_{(w(l))} = a \times \text{ sum of first } (1 + S^2 + \dots + S^{2l}) \text{ coordinates of } w(l) = a \cdot l \le n l S^{-l}.
\end{align*}

We move to claim \ref{todo: scheduling-lower-bound-condition-2}. Let $y$ schedule more than $n/4$ jobs on machines in classes $l + 1$ to $L$. Irrespective of how these $n/4$ jobs are distributed, they contribute a total load of at least $(n/4) \times S^{l + 1}$. Since all coordinates of $w(l)$ are at least $\frac{1}{S^{2l}}$, the contribution of these jobs to $\|y\|_{(w(l))}$ is at least
\[
    \frac{1}{S^{2l}} \times \frac{n}{4} S^{l + 1} = \frac{nS}{4} \cdot S^{- l}.
\]
Since $l \le L = o(S)$, we get $\|y\|_{(w(l))} = \omega(n l S^{-l})$.

Finally, we prove claim \ref{todo: scheduling-lower-bound-condition-3}. Consider the restricted instance with only machines from classes $0, \dots, l - 1$ and $n/4$ jobs. Let $x$ be the optimal fractional solution for this instance for $L_\infty$ norm; it is easy to see that $x$ must have equal loads on machines, so that from constraint (\ref{LP: constraint-total-jobs}):
\[
    n = \|x\|_\infty \left(1 \cdot \frac{1}{1} + S^2 \cdot \frac{1}{S} + \dots + S^{2l - 2}\cdot\frac{1}{S^{l - 1}}\right) \le 2 \|x\|_\infty S^{l - 1},
\]
implying $\|x\|_\infty \ge \frac{n S^{-l + 1}}{2}$. Therefore, any integral optimal solution $\hat{x}$ to this restricted instance must also satisfy
\[
    \|\hat{x}\|_\infty \ge \|x\|_\infty \ge \frac{n S^{-l + 1}}{2}.
\]
Since $y$ is a solution to the larger original instance, we have $\|y\|_\infty \ge \|\hat{x}\|_\infty$. Finally, since $w(l) = 1$ by assumption, we get $\|y\|_{(w(l))} \ge \|y\|_\infty$. Together, we get $\|y\|_{(w(l))} \ge \frac{nS}{2} \cdot S^{-l}$. This completes the proof of the claim and of Theorem \ref{thm: identical-jobs-portfolio}.

\paragraph{Portfolios for different classes of norms.} Recall Lemma \ref{thm: top-k-implies-all-norm}: if $x^*$ is a simultaneous $\alpha$-approximation for each top-$k$ norm, then it is a simultaneous $\alpha$-approximation for all symmetric monotonic norms. One might naturally wonder if this is true for portfolios: is an $\alpha$-approximate portfolio for top-$k$ norms also an $\alpha$-approximate portfolio for all symmetric monotonic norms?
We show that not only is this false but that the gap between portfolio sizes for top-$k$ norms and ordered norms can be unbounded, by constructing such instances for \mlij.

\begin{theorem}\label{thm: portfolios-for-top-k-norms-are-not-portfolios-for-ordered-norms}
There exist instances of \mlij on $d$ machines for which
\begin{enumerate}
    \item there is an $O(1)$-approximate portfolio $X$ of size $2$ for $\topk$, and
    \item any $O(1)$-approximate portfolio $X'$ for $\ordered$ has size $\Omega\left(\frac{\log d}{\log\log d}\right)$.
\end{enumerate}
\end{theorem}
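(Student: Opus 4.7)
The plan is to recycle the doubling instance from the lower bound in Section~\ref{sec: identical-jobs-scheduling-lower-bound}, which already delivers the $\Omega(\log d / \log\log d)$ ordered-norm lower bound for free; the new work is to show that for the \emph{same} instance only two schedules already suffice to approximate every top-$k$ norm.

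Fix any sufficiently large constant $\alpha$, and set $S$ and $L = \Theta(\log d / \log\log d)$ as in Section~\ref{sec: identical-jobs-scheduling-lower-bound}, with the $L+1$ machine classes (class $l$ has $S^{2l}$ machines of speed $S^l$) and $n = S^{3L}$ jobs. Recall from Lemmas~\ref{lem: identical-jobs-vertices}--\ref{lem: identical-jobs-good-rounding} that each LP vertex $x(l)$ has $N_l \approx S^{2l}$ nonzero coordinates each of magnitude $a_l \approx n/S^l$ and is good, so it rounds to a feasible schedule $\hat{x}(l)$. I claim $X = \{\hat{x}(0), \hat{x}(L)\}$ is an $O(1)$-approximate portfolio for $\topk$. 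Fixing $k \in [d]$ and writing $l^* = \tfrac{1}{2} \log_S k$, a direct calculation gives $\|\hat{x}(l)\|_{\mathbf{1}_k} = \Theta(\min(k, N_l) \cdot a_l)$, which equals $\Theta(n S^l)$ when $l \le l^*$ (the top-$k$ swallows every nonzero coordinate) and $\Theta(k n / S^l)$ when $l \ge l^*$ (the top-$k$ picks $k$ copies of $a_l$). The first expression is increasing in $l$ and the second decreasing, so the minimum over $l \in \{0, \dots, L\}$ is attained either at $l = 0$ (value $\Theta(n)$) or at $l = L$ (value $\Theta(k n / S^L)$). Since Lemma~\ref{lem: identical-jobs-large-portfolio} already establishes that the full set $\{\hat{x}(0), \dots, \hat{x}(L)\}$ is a $2$-approximate portfolio for $\ordered$, hence for $\topk$, and since this minimum is achieved at the two endpoints, $\{\hat{x}(0), \hat{x}(L)\}$ inherits the $O(1)$ guarantee.

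For item~2 I would invoke Section~\ref{sec: identical-jobs-scheduling-lower-bound} directly: its $L$ weight vectors $w(0), \dots, w(L-1)$ force any $\alpha$-approximate schedule for $w(l)$ to place at least $n/2$ jobs on class-$l$ machines, so an $\alpha$-approximate portfolio for $\ordered$ must contain $L$ distinct schedules, and substituting constant $\alpha$ into $L = \Theta(\log d / (\log\alpha + \log\log d))$ yields $\Omega(\log d / \log\log d)$. The main subtlety is the unimodality claim in the previous paragraph---namely that $\|\hat{x}(l)\|_{\mathbf{1}_k}$ attains its minimum on $[0, L]$ at an endpoint. This collapse to two extreme schedules is precisely the phenomenon that fails for ordered norms with smoothly decreasing weights (where some intermediate $\hat{x}(l)$ strictly beats both endpoints), and is exactly the source of the separation between top-$k$ and ordered portfolio sizes.
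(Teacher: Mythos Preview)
Your approach is essentially the paper's: take the lower-bound instance of Section~\ref{sec: identical-jobs-scheduling-lower-bound} (which gives item~2 for free), and for item~1 argue that $\|\hat{x}(l)\|_{\mathbf{1}_k}$ is unimodal in $l$ so that the two extreme vertices already form an $O(1)$-approximate portfolio for $\topk$.

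There is one small gap. You index vertices by machine \emph{class}, so your set $\{\hat{x}(0),\dots,\hat{x}(L)\}$ has only $L+1 = \Theta(\log d/\log\log d)$ members, and your unimodality check runs only over these class-boundary vertices. But when you appeal to Lemma~\ref{lem: identical-jobs-large-portfolio}, that lemma supplies a $2$-approximate portfolio consisting of \emph{all} good vertices $\hat{x}(1),\dots,\hat{x}(L_{\max})$ indexed by the number $m \in [d]$ of nonzero coordinates, not just the $L+1$ class-boundary ones. So the chain does not close as written: you have shown the minimum over class-boundary vertices lands at an endpoint, but not that the class-boundary vertices themselves form an $O(1)$-approximate portfolio.

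The paper sidesteps this by running the unimodality argument directly over all $m \in [d]$, and in fact for an \emph{arbitrary} doubling instance: $\|x(m)\|_{\mathbf{1}_k} = \min(m,k)\cdot \frac{n}{\sum_{i\le m} 1/p_i}$ is non-decreasing in $m$ for $m\le k$ (since $p_i$ is nondecreasing, the reciprocal of the running average $\frac{1}{m}\sum_{i\le m}1/p_i$ is nondecreasing) and strictly decreasing for $m>k$, so the minimum over all $m$ sits at $m=1$ or $m=L_{\max}$. For the specific lower-bound instance all vertices are good, so $L_{\max}=d$, and these two extremes are exactly your $\hat{x}(0)$ and $\hat{x}(L)$. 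This global version is what lets Lemma~\ref{lem: identical-jobs-large-portfolio} plug in without the extra step.
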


\begin{proof}
    From Theorem \ref{thm: identical-jobs-portfolio}, there exist instances of \mlij with $d$ machines where any $O(1)$-approximate portfolio for ordered norms must have size $O\left(\frac{\log d}{\log\log d}\right)$. We will show here that \emph{all} doubling instances of \mlij admit $O(1)$-approximate portfolio of size $2$ for all top-$k$ norms. Together, this implies the theorem.

    Recall Lemmas \ref{lem: identical-jobs-good-rounding}, \ref{lem: identical-jobs-large-portfolio}: $X' = \{\hat{x}(1), \dots, \hat{x}(L)\}$ is an $O(1)$-approximate portfolio for all ordered norms where $\frac{1}{2} x(l) \le \hat{x}(l) \le 2 x(l)$ for all $l \in [L]$. Therefore, $\|\hat{x}(l)\|_{\mathbf{1}_k}$ is within factor $2$ of $\|x(l)\|_{\mathbf{1}_k}$ for all $k \in [m]$. Further for all $k \in [m]$,
    \[
        \|x(l)\|_{\mathbf{1}_k} = \begin{cases}
                                      \frac{l n}{\sum_{i \in [l]} \frac{1}{p_i}} & \text{if} \: l \le k, \\
                                      \frac{k n}{\sum_{i \in [l]} \frac{1}{p_i}} & \text{if} \: l > k.
        \end{cases}
    \]
    Fix $k$. Since $p_i \le p_{i + 1}$ for all $i$, $\frac{l n}{\sum_{i \in [l]} \frac{1}{p_i}}$ is non-increasing in $l$. Further, $ \frac{k n}{\sum_{i \in [l]} \frac{1}{p_i}}$ is decreasing in $l$. Therefore, the smallest among $\|x(l)\|_{\mathbf{1}_k}, l \in [L]$ is either $\|x(1)\|_{\mathbf{1}_k}$ or $\|x(L)\|_{\mathbf{1}_k}$. This implies
    \begin{align*}
        \min\{\|\hat{x}(1)\|_{\mathbf{1}_k}, \|\hat{x}(L)\|_{\mathbf{1}_k}\} &\le 2 \min\{\|{x}(1)\|_{\mathbf{1}_k}, \|{x}(L)\|_{\mathbf{1}_k}\} \\
        &\le 2 \min\{\|{x}(1)\|_{\mathbf{1}_k}, \|{x}(2)\|_{\mathbf{1}_k}, \dots, \|{x}(L)\|_{\mathbf{1}_k}\} \\
        &\le 4 \min\{\|\hat{x}(1)\|_{\mathbf{1}_k}, \|\hat{x}(2)\|_{\mathbf{1}_k}, \dots, \|\hat{x}(L)\|_{\mathbf{1}_k}\}.
    \end{align*}

    Since $\{\hat{x}(1), \dots, \hat{x}(L)\}$ is an $O(1)$-approximate portfolio for all ordered norms, this implies that $\{\hat{x}(1), \hat{x}(L)\}$ is an $O(1)$-approximate portfolio for all top-$k$ norms.
\end{proof}

\noindent \textbf{Example 2.} One can also show that \emph{portfolios for ordered norms are not portfolios for $L_p$ norms:} consider an instance of identical jobs scheduling with $p_i = \sqrt{i}$ for each $i \in [d]$. Denote $\rho(l) = \sum_{i \in [l]}\frac{1}{p_i} = \sum_{i \in [l]} \frac{1}{\sqrt{i}}$; also denote  the $d$th Harmonic number $H_d = \sum_{i \in [d]} \frac{1}{i} = \Theta(\log d)$. Then for each $l \in [d]$, $x(l) = \frac{n}{\rho(l)} \big(\underbrace{1, \dots, 1}_{l}, 0, \dots, 0 \big)$. Recall (Lemmas \ref{lem: identical-jobs-good-rounding}, \ref{lem: identical-jobs-large-portfolio}) that there exists an $L \in [d]$ such that (1) $\frac{1}{2} x(l) \le \hat{x}(l) \le 2 x(l)$ for all $l \in [L]$ and (2) $X' =\{\hat{x}(1), \dots, \hat{x}(L)\}$ is an $O(1)$-approximate portfolio for ordered norms for some $L \in [d]$. We claim that each $x \in X'$ is an $\Omega(\sqrt{H_d})$-approximation for the $L_2$ norm objective.

For each $l \in [L]$, $\rho(l) \le 1 + 2 \int_1^l \frac{2}{\sqrt{x}} \le 4 \sqrt{l}$, so that
\[
    2 \|\hat{x}(l)\|_2 \ge \|x(l)\|_2 = \frac{n}{\rho(l)} \cdot \sqrt{l} \ge \frac{n}{4}.
\]
Consider the following assignment: assign $n_i = \frac{n}{i H_d}$ jobs to machine $i \in [d]$, and choose $n$ large enough so that each $n_i$ is integral. Then this is a valid assignment since $\sum_{i \in [d]}n_i = n$ by definition of $H_d$. The machine loads for this assignment are $x_i = n_i p_i = \frac{n}{H_d \sqrt{i}}$. The $L_2$ norm of $x$ is
\[
    \|x\|_2 = \frac{n}{H_d} \sqrt{\sum_{i \in [d]} \frac{1}{i}} = \frac{n}{\sqrt{H_d}}.
\]
Therefore, each $\hat{x}(l)$ is an $\Omega(\sqrt{H_d})$-approximation for the $L_2$ norm.

\section{\orderandcount for \coveringpolyhedra}\label{sec: covering-polyhedra}

In this section, we use \texttt{OrderAndCount} prove Theorem \ref{thm: covering-polyhedron} to obtain portfolios for \coveringpolyhedra. A $d$-dimensional covering polyhedron is defined as $\mc{P} = \{x \in \R^d: Ax \ge b, x \ge 0\}$ where $A \in \R_{\ge 0}^{r \times d}$ is the constraint matrix with $r$ constraints and $b \in \R_{\ge 0}^r$. As alluded to before, such polyhedra model workload management in settings with $r$ splittable jobs to be distributed among $d$ machines, each of which can run all $r$ jobs concurrently. We give an algorithm that given $\mathcal{P}$ and any constant $\epsilon > 0$, obtains a portfolio of size $O\left(\left(\frac{\log d/\epsilon}{\epsilon}\right)^{3r^2 - 2r}\right)$ that is (i) $(1 + \epsilon)$-approximate for ordered norms and (ii) $O(\log d)$-approximate for symmetric monotonic norms.

We focus on the result for $\ordered$ since the result for $\smn$ follows from Lemma \ref{lem: portfolios-ordered-norms-to-all-norms}. Assume that $b = \mathbf{1}_r = (1, \dots, 1)^\top$, without loss of generality by rescaling rows of $A$ if necessary (and removing rows with $b=0$ since they will be feasible anyway).

For any order or permutation $\pi$ on $[d]$, define restriction $\mc{P}_\pi := \mc{P} \cap \{x \in \R^d: x_{\pi(1)} \ge \dots \ge x_{\pi(d)} \ge 0\}$.
Our high-level plan is the same: any ordered norm $\|\cdot\|_{(w)}$ is a linear function on each $\mc{P}_\pi$. Therefore, the minimum norm point $x(w) := {\arg\min}_{x \in \mc{P}} \|x\|_{(w)}$ must be one of the vertices of some $\mc{P}_\pi$. Call $X$ the union of sets of vertices across all orders $\pi$; then $X$ is an \emph{optimal} portfolio for $\ordered$. However, two main issues potentially blow up the size $|X|$:
\begin{enumerate}
    \item Each $\mc{P}_\pi$ can have too many vertices. For each vertex of $\mc{P}_\pi$, $d$ out of $r + d$ constraints $Ax \ge \mathbf{1}_r, x_{\pi(1)} \ge \dots \ge x_{\pi(d)} \ge 0$ must be tight. Therefore, $\mc{P}_\pi$ may have $\binom{d + r}{d} \sim d^r$ vertices.
    \item There are $d!$ orders $\pi \in \text{Perm}(d)$. Since we are taking a union over all such orders, we get the following rough bound on the portfolio size $|X|$:
    \begin{equation}\label{eqn: portfolio-size-naive-bound}
    \left(\substack{\text{number of vertices}\\ \text{in each } \mc{P}_\pi}\right) \times \left(\substack{\text{number of orders}\\ \pi}\right) \sim d^r \times d!.
    \end{equation}
\end{enumerate}

Broadly, we first use a \emph{sparsification} idea to reduce the effective dimension to $\left(\frac{\log (d/\epsilon)}{\epsilon}\right)^r$ from $d$, losing approximation factor $1 + \epsilon$. This is done by counting the number of unique columns of $A$ up to factor $1 + \epsilon$.
Sparsification also gives an upper bound on the number of vertices in the restricted region $\mc{P}_\pi$ corresponding to each order.
There are still too many orders to sum over, and this is where the \emph{primal-dual counting technique} comes in. It allows us to restrict to a small number of permutations $\pi$ by counting in a suitable `dual' space to our primal problem:

\begin{minipage}{\textwidth}
    \begin{minipage}{0.49\textwidth}
        \begin{equation}\label{lp: polyhedron-norm-minimization}\tag{Primal}
        \min_{x \ge 0} \|x\|_{(w)} \quad \text{s.t.} \; Ax \ge b.
        \end{equation}
    \end{minipage}
    \hfill
    \begin{minipage}{0.49\textwidth}
        \begin{equation}\label{lp: polyhedron-norm-minimization-dual}\tag{Dual}
        \min \|\lambda^\top A\|_{(w)}^* \quad \text{s.t.} \; \lambda \in \mathbf{\Delta}_r.
        \end{equation}
    \end{minipage}
\end{minipage}
The advantage with the `dual' is that the underlying polytope -- the simplex $\mathbf{\Delta}_r$ in $r$ dimensions -- is easier to handle, and additionally is in $r$ dimensions instead of $d$.
The key ingredient connecting the primal and the dual will be the Cauchy-Schwarz inequality for ordered norms (Lemma \ref{lem: ordered-cauchy-schwarz}).

\subsection{Sparsification}\label{sec: covering-polyhedra-sparsification}

Denote $N = O\big(\frac{\log(d/\epsilon)}{\epsilon}\big)$. We give a sparsification procedure that reduces the number of distinct columns in $A$ to $N^r$. For each row of matrix $A$, this sparsification (1) removes `small' entries in the row and (2) restricts the number of unique entries in the row to $N$. Since there are $r$ rows, the number of distinct columns after sparsification is $N^r$.

\begin{algorithm}[t]
    \caption{\texttt{SparsifyPolyhedron}($\mc{P}$)}
    \DontPrintSemicolon
    \KwData{covering polyhedron $\mc{P} = \{x \in \R^d: Ax \ge \mathbf{1}_r, x \ge 0\}$, error parameter $\epsilon \in (0, 1]$}
    \KwResult{another covering polyhedron  $\widetilde{\mc{P}} = \{x \in \R^d: \widetilde{A}x \ge \mathbf{1}_r, x \ge 0\}$}
    define $\mu = \frac{3d^2}{\epsilon}$ and initialize $\widetilde{A} = \mathbf{0}_{r \times d}$\;
    \For{$i = 1$ to $r$}
    {
        define $a_i^* = \max_{j \in [d]} A_{i, j}$ and $B(i) = \left\{j \in [d]: A_{i, j} < \frac{a_i^*}{\mu} \right\}$\;
    \For{$j \in [d]$}
    {
        \If{$j \in B(i)$}
        {set $\widetilde{A}_{i, j} = 0$\;}
        \Else
        {
            let $l \in [0, \lfloor \log_{(1 + \epsilon/2)} \mu \rfloor]$ be the unique integer such that
            \[
                \frac{a_i^*}{\mu} \left(1 + \frac{\epsilon}{2}\right)^l \le A_{i, j} <  \frac{a_i^*}{\mu} \left(1 + \frac{\epsilon}{2}\right)^{l + 1}
            \]
            set $\widetilde{A}_{i, j} = \frac{a_i^*}{\mu} \left(1 + \frac{\epsilon}{2}\right)^l$\;
        }
    }
    }
    \textbf{return} $\widetilde{A}$, $\widetilde{P} = \{x \in \R^d: \widetilde{A}x \ge \mathbf{1}_r, x \ge 0\}$
\end{algorithm}

\begin{lemma}\label{lem: column-unique-values}
The columns of matrix $\widetilde{A} \in \RO^{r \times d}$ output by Algorithm {\tt SparsifyPolyhedron} take one of $N^r$ values, i.e., $[d]$ can be partitioned into $S_1, \dots, S_{N^r}$ such that for any $j, j' \in S_l$, the $j$th and $j'$th columns of $\widetilde{A}$ are the same.
\end{lemma}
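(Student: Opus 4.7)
The plan is to directly count, row by row, how many distinct values an entry of $\widetilde{A}$ can take, and then take a product over rows. The main work is arithmetic: verifying that $\lfloor \log_{1 + \epsilon/2}\mu \rfloor + 2$ is at most the quantity $N = O(\log(d/\epsilon)/\epsilon)$ promised in the lemma.

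First I would fix a row $i \in [r]$ and ask how many distinct values can appear in the $i$th row of $\widetilde{A}$. By inspection of the algorithm, for each $j \in [d]$ the entry $\widetilde{A}_{i,j}$ is either $0$ (if $j \in B(i)$) or equal to $\frac{a_i^*}{\mu}(1 + \epsilon/2)^l$ for some integer $l \in \{0, 1, \dots, \lfloor \log_{1 + \epsilon/2}\mu \rfloor\}$. Thus the $i$th row takes on at most
\[
    K := \lfloor \log_{1 + \epsilon/2} \mu \rfloor + 2
\]
distinct values (the ``$+2$'' accounts for the zero value and for rounding up the log).

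Next I would bound $K$. Since $\mu = 3d^2/\epsilon$ and $\ln(1 + \epsilon/2) \ge \epsilon/4$ for $\epsilon \in (0, 1]$, we have
\[
    \log_{1 + \epsilon/2} \mu \;=\; \frac{\ln(3d^2/\epsilon)}{\ln(1 + \epsilon/2)} \;\le\; \frac{4 \ln(3d^2/\epsilon)}{\epsilon},
\]
which is $O(\log(d/\epsilon)/\epsilon)$. Thus, hiding the absolute constant inside the $O(\cdot)$ in the definition of $N$, we have $K \le N$.

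Finally, each column of $\widetilde{A}$ is an $r$-tuple whose $i$th coordinate is drawn from a set of at most $N$ values (possibly depending on $i$). The total number of such $r$-tuples is at most $N^r$. Define an equivalence relation on $[d]$ by $j \sim j'$ iff the $j$th and $j'$th columns of $\widetilde{A}$ are identical; this equivalence relation has at most $N^r$ classes, which we label $S_1, \dots, S_{N^r}$ (padding with empty classes if fewer arise). This partition has the required property, completing the proof. I do not foresee any real obstacle here beyond bookkeeping; the only subtlety is being consistent about the $O(\cdot)$ constants so that the per-row bound $K$ fits under $N$.
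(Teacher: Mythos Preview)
Your proof is correct and follows essentially the same approach as the paper: fix a row, observe that each entry lies in a set of size $O(\log_{1+\epsilon/2}\mu) = O(\log(d/\epsilon)/\epsilon) = N$, and then take the product over the $r$ rows to bound the number of distinct columns by $N^r$. Your version is slightly more explicit about the arithmetic and the partition, but there is no substantive difference.
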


\begin{proof}
    Fix row $i \in [r]$. By construction, each entry in the $i$th row of $\widetilde{A}$ is in the set $\{0\} \cup \left\{\frac{a_i^*}{\mu} \left(1 + \frac{\epsilon}{2}\right)^l: l \in [0, \lfloor \log_{(1 + \epsilon/2)} \mu \rfloor ]\right\}$. These are $O(\log_{(1 + \epsilon/2)} \mu) = O(\log_{(1 + \epsilon/2)} (d^2/\epsilon)) = O\left(\frac{\log(d/\epsilon)}{\epsilon}\right) = N$ distinct numbers. Since each column is composed of $r$ entries, one from each row, we get a total of $N^r$ possible values for a column.
\end{proof}

Sparsification only loses a factor $(1 + \epsilon)$ in the approximation (proof deferred to Appendix \ref{sec: missing-proofs-2}):

\begin{lemma}\label{lem: sparsification-approximation}
$\widetilde{\mc{P}} = \{x: \widetilde{A} x \ge \mathbf{1}_r, x \ge 0\}$ output by Algorithm {\tt SparsifyPolyhedron} is a $(1 + \epsilon)$-approximate portfolio for $\smn$ over $\mc{P}$.
\end{lemma}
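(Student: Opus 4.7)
The plan is to show that for every $x \in \mc{P}$ and every $\|\cdot\|_f \in \smn$, some $\tilde{x} \in \widetilde{\mc{P}}$ satisfies $\|\tilde{x}\|_f \le (1+\epsilon)\|x\|_f$. Since the algorithm only zeros out small entries and rounds survivors down, $\widetilde{A} \le A$ entrywise, hence $\widetilde{\mc{P}} \subseteq \mc{P}$; the only task is to find a nearby feasible perturbation of $x$. My ansatz is $\tilde{x} := (1+\epsilon/2)\,x + \delta$, where $\delta$ is supported on the argmax columns $j_i^* := \arg\max_j A_{i,j}$, one extra mass $\delta_i$ per row $i \in [r]$. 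The scaling $(1+\epsilon/2)$ compensates for the rounding-down of surviving entries, since $\widetilde{A}_{i,j} \ge A_{i,j}/(1+\epsilon/2)$ for $j \in G(i)$, while $\delta$ is there to make up for the mass from $B(i)$ that was killed.

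\textbf{Feasibility check.} Fix row $i$ and let $x_{\max} := \max_j x_j$. The surviving-entry contribution is
$(1+\epsilon/2)\sum_{j\in G(i)}\widetilde{A}_{i,j}x_j \ge \sum_{j\in G(i)}A_{i,j}x_j \ge 1 - \sum_{j\in B(i)}A_{i,j}x_j \ge 1 - a_i^* d x_{\max}/\mu$,
where the last step uses $A_{i,j} \le a_i^*/\mu$ for $j \in B(i)$. Adding mass $\delta_i := (1+\epsilon/2)\,d\,x_{\max}/\mu$ at coordinate $j_i^*$ contributes at least $\delta_i\,\widetilde{A}_{i,j_i^*} \ge \delta_i a_i^*/(1+\epsilon/2) = a_i^* d x_{\max}/\mu$ to $(\widetilde{A}\tilde{x})_i$, closing the deficit exactly, so $(\widetilde{A}\tilde{x})_i \ge 1$ and $\tilde{x} \in \widetilde{\mc{P}}$.

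\textbf{Norm bound.} By the triangle inequality, $\|\tilde{x}\|_f \le (1+\epsilon/2)\|x\|_f + \|\delta\|_f$. Substituting $\mu = 3d^2/\epsilon$ gives $\delta_i \le \delta^* := (1+\epsilon/2)\epsilon\,x_{\max}/(3d)$, and since $\delta$ has at most $r$ nonzero coordinates each bounded by $\delta^*$, one has $\delta \preceq \delta^*\mathbf{1}_r$. Lemma \ref{lem: majorization-norm-order-2} and the triangle inequality then yield $\|\delta\|_f \le \delta^* \|\mathbf{1}_r\|_f \le \delta^* r \|\mathbf{1}_1\|_f$. On the other hand, since $x$ has a coordinate equal to $x_{\max}$, monotonicity and symmetry of $\|\cdot\|_f$ give $\|x\|_f \ge x_{\max}\|\mathbf{1}_1\|_f$. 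Assuming $r \le d$ (which may be taken WLOG for covering polyhedra), these combine to $\|\delta\|_f/\|x\|_f \le r(1+\epsilon/2)\epsilon/(3d) \le \epsilon/2$, and the claim follows.

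\textbf{Main obstacle.} The delicate point is the calibration of $\mu$: it must be large enough that the mass lost to zeroing ($\le a_i^* d x_{\max}/\mu$ per row) can be repaid by a perturbation worth only an $\epsilon/2$ fraction of $\|x\|_f$, using nothing more than the anchor $\|x\|_f \ge x_{\max}\|\mathbf{1}_1\|_f$. The value $\mu = 3d^2/\epsilon$ is exactly what balances the $(1+\epsilon/2)$ factor absorbed by rounding against the $\epsilon/2$ slack consumed by the perturbation, so that the two error budgets compose into the target $(1+\epsilon)$.
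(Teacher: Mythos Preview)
Your argument is essentially correct and reaches the same conclusion, but the mechanism differs from the paper's. The paper perturbs \emph{uniformly}: it sets $\tilde{x} = (1+\epsilon/2)\bigl(x + \tfrac{\epsilon\|x\|_1}{3d}\mathbf{1}_d\bigr)$ and proves the stronger majorization $\tilde{x} \preceq (1+\epsilon)x$ directly, via the inequality $\|x\|_1/d \le \|x\|_{\mathbf{1}_k}/k$. Your route instead concentrates the repair mass on the $r$ argmax columns and bounds $\|\delta\|_f$ by the triangle inequality against the anchor $\|x\|_f \ge x_{\max}\|\mathbf{1}_1\|_f$. Your approach is arguably more elementary (no top-$k$-norm computation), but the paper's yields the pointwise majorization statement and, importantly, does not need $r \le d$.

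Two small points worth tightening. First, the claim $\delta \preceq \delta^*\mathbf{1}_r$ can fail when several rows share the same argmax column $j_i^*$: one coordinate of $\delta$ may then equal $2\delta^*$ or more, violating $\|\delta\|_{\mathbf{1}_1} \le \delta^*$. The fix is immediate---write $\delta = \sum_{i\in[r]} \delta^* e_{j_i^*}$ and apply the triangle inequality directly to get $\|\delta\|_f \le r\delta^*\|\mathbf{1}_1\|_f$, which is the bound you use anyway. Second, the assertion that $r \le d$ ``may be taken WLOG for covering polyhedra'' is not true in general (e.g.\ $x_1 \ge 1$, $x_2 \ge 1$, $x_1+x_2 \ge 3$ in $\R^2$ has no redundant constraint). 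In the paper's regime $r = o(\sqrt{\log d}/\log\log d)$ this is harmless, but if you want the lemma to stand on its own for arbitrary $r$, the uniform-perturbation route avoids the issue entirely.
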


These lemmas allow us to work with $\widetilde{P} = \{x: \widetilde{A}x \ge \mathbf{1}_r, x \ge 0\}$ with the nice property that columns of $\widetilde{A}$ take at most $N^r$ distinct values. We will give an optimal portfolio for $\ordered$ over $\widetilde{\mc{P}}$ of size $O(N^{3r^2 - 2r})$. Using Lemma \ref{lem: portfolio-composition}, this is sufficient to prove Theorem \ref{thm: covering-polyhedron}. Hereafter, we will only work with the sparsified matrix $\widetilde{A}$ and polyhedron $\widetilde{\mc{P}}$. For ease of notation, we drop the symbol $\widetilde{A}$ and assume that the original matrix $A$ and corresponding polyhedron $\mc{P}$ are already given to us in the sparsified form.

Let $S_1, \dots, S_{N^r}$ denote the partition of $[d]$ based on the value of columns of $A$, i.e., for each $l \in [N^r]$ and $j, j' \in S_l$, $j$th and $j'$th columns of $A$ are the same. Further, define $\domain = \{x \in \RO^d: x_j = x_{j'} \: \forall \: j, j' \in S_l, \: \forall \: l \in [N^r]\}$, i.e., the set of all non-negative vectors that attain the same value for all $j \in S_l$, for all $l \in [N^r]$. Define $\mc{P}^= = \mc{P} \cap \domain$. Recall that for weight vector $w$, we define $x(w) := {\arg\min}_{x \in \mc{P}} \|x\|_{(w)}$. Our first lemma shows that $x(w) \in \mc{P}^=$:

\begin{lemma}\label{lem: P=-is-optimal}
Given a weight vector $w$, we can assume without loss of generality that for all $l \in [N^r]$ and $j, j' \in S_l$, $x(w)_j = x(w)_{j'}$. That is, $\mc{P}^=$ is an optimal portfolio for $\smn$ over $\mc{P}$.
\end{lemma}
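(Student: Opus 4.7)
The plan is to start with any optimal solution $x^* \in \mc{P}$ for a fixed symmetric monotonic norm $\|\cdot\|_f$ and \emph{symmetrize} it within each block $S_l$ of the partition to produce a feasible vector in $\mc{P}^=$ whose $\|\cdot\|_f$-norm is no larger. Concretely, for each $l \in [N^r]$, let $\bar{x}_l := \frac{1}{|S_l|}\sum_{j \in S_l} x^*_j$ and define $\tilde{x} \in \RO^d$ by $\tilde{x}_j = \bar{x}_l$ for every $j \in S_l$. By construction $\tilde{x} \in \domain$.

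First I would verify feasibility. Because columns of $A$ indexed by $S_l$ are identical, say all equal to a common vector $a^{(l)} \in \RO^r$, each row $i$ satisfies
\[
    (A \tilde{x})_i = \sum_{l \in [N^r]} a^{(l)}_i \sum_{j \in S_l} \tilde{x}_j = \sum_{l \in [N^r]} a^{(l)}_i \sum_{j \in S_l} x^*_j = (A x^*)_i \ge 1,
\]
so $\tilde{x} \in \mc{P}$ and hence $\tilde{x} \in \mc{P}^=$.

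Next I would show that $\tilde{x} \preceq x^*$. Writing $\tilde{x} = D x^*$, the matrix $D$ is block-diagonal with an $|S_l| \times |S_l|$ all-$\frac{1}{|S_l|}$ block for each $l$; every row and column of $D$ sums to $1$, so $D$ is doubly stochastic. By the classical Hardy--Littlewood--P\'{o}lya theorem, any doubly-stochastic image is majorized by the original vector in the sum-preserving sense, which in particular implies $\tilde{x} \preceq x^*$ as defined in the paper. Lemma \ref{lem: majorization-norm-order-2} then gives $\|\tilde{x}\|_f \le \|x^*\|_f$ for every $\|\cdot\|_f \in \smn$. Since $\mc{P}^= \subseteq \mc{P}$, this establishes $\min_{x \in \mc{P}^=}\|x\|_f = \min_{x \in \mc{P}}\|x\|_f$, i.e., $\mc{P}^=$ is an optimal portfolio for $\smn$ over $\mc{P}$.

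The argument presents no real obstacle; the crux is simply to recognize that the identical columns of $A$ combined with the permutation-invariance and monotonicity of $\|\cdot\|_f$ let us \emph{collapse} the coordinates in each $S_l$ to their common average by a doubly-stochastic averaging map that preserves feasibility exactly and can only decrease the objective via majorization.
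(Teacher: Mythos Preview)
Your proof is correct and follows essentially the same approach as the paper: both exploit that identical columns of $A$ make within-block averaging feasibility-preserving, and then invoke majorization (Lemma~\ref{lem: majorization-norm-order-2}) to conclude the norm does not increase. The only cosmetic difference is that the paper averages one unequal pair $j,j' \in S_l$ at a time, whereas you perform the full block-averaging in one shot via a doubly-stochastic matrix; your version is arguably cleaner since it avoids the implicit iteration/limit argument.
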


\begin{proof}
    Suppose $x(w)_j \neq x(w)_{j'}$, say $x(w)_j > x(w)_{j'}$. Then consider $\ov{x} \in \R^d$ such that $\ov{x}_k = x(w)_k$ for all $k \neq j, j'$, and $\ov{x}_j = \ov{x}_{j'} = \frac{x(w)_j + x(w)_{j'}}{2}$. Then $\ov{x} \preceq x(w)$ and so Lemma \ref{lem: majorization-norm-order-2} gives  $\|\ov{x}\|_{(w)} \le \|x(w)\|_{(w)}$.

    Further, clearly $\ov{x} \ge 0$ since $x(w) \ge 0$. Since the $j$th and $j'$th columns of $A$ are equal, $A\ov{x} = Ax(w) \ge \mathbf{1}_r$, or that $\ov{x} \in \mc{P}$.
\end{proof}

We define 'reduced orders' next, which are simply orders in the smaller space $\R^{N^r}$:

\begin{definition}[Reduced orders]
    An order $\rho$ on $[N^r]$ is called a \emph{reduced order}. For $x \in \domain$ and $l \in [N^r]$, denote $z(x)_l = x_j$ for $j \in S_l$. $x \in \domain$ is said to \emph{satisfy} reduced order $\rho$ if $z_{\rho(1)} \ge \dots \ge z_{\rho(N^r)} \ge 0$. Given a reduced order $\rho$, define polyhedron
    \[
        \mc{P}_\rho^= = \{x \in \mc{P} \cap \domain: x \text{ satisfies reduced order } \rho\}.
    \]
\end{definition}

At this point, a natural first attempt at bounding the portfolio size is to count the number of ordered norms in the space of `reduced' vectors $\{z(x): x \in \mc{P}^=\} \subseteq \R^{N^r}$. After all, \cite{chakrabarty_approximation_2019}'s result shows that there are at most $\poly(N^{r/\epsilon})$ ordered norms in $\R^{N^r}$ up to a $(1 + \epsilon)$-approximation. However, this approach fails because ordered norms on $\R^d$ cannot be translated appropriately into an ordered norm on the smaller space $\R^{N^r}$.

For example, consider the covering polyhedron $\mathcal{P} = \{x \in \R_{\ge 0}^3: x_1 \ge 2, x_2 + x_3 \ge 4, 2x_1 + x_2 + x_3 \ge 10\}$. The point $(3, 2, 2) \in \mathcal{P}$ is the (unique) minimizer of the $L_1$ norm, which corresponds to weight vector $w = (1, 1, 1)$. The constraint polytope for $\mathcal{P}$ has two unique columns, and the corresponding `reduced covering polyhedron' is $\mathcal{P}' = \{z \in \R^2: z_1 \ge 2, z_2 \ge 2, z_1 + z_2 \ge 5\}$. A point $(a, b, b) \in \mathcal{P}$ corresponds to the point $(a, b) \in \mathcal{P}'$. However, by a majorization argument, the point $(5/2, 5/2) \in \mathcal{P}'$ minimizes \emph{all ordered norms} on $\mathcal{P}'$, but the corresponding point $(5/2, 5/2, 5/2) \in \mathcal{P}$ with $L_1$ norm $7.5$ is sub-optimal for the $L_1$ norm.

Therefore, it is not sufficient to count ordered norms in $\R^{N^r}$, and we need an alternate approach that we describe next. Suppose that we are given some reduced order $\rho$. Then for $x \in \mc{P}^=_\rho$, $\|x\|_{(w)}$ is a linear function of $x$. Therefore, given a weight vector $w$, if $x(w)$ satisfies reduced order $\rho$, then $x(w)$ is one of the vertices of polyhedron $\mc{P}^=_\rho$. With this observation, the rest of the proof is organized as follows:
\begin{itemize}
    \item For each reduced order $\rho$, $\mc{P}^=_\rho$ has at most $N^{r^2} + 1$ vertices (Lemma \ref{lem: fixed-order-portfolio}).
    \item Consider the set $\Pi$ of reduced orders such that for any weight vector $w$, $x(w)$ satisfies some reduced order $\rho \in \Pi$, i.e, $\Pi = \{\text{reduced order } \rho: \exists \: w \text{ where } x(w) \text{ satisfies } \rho \}$. Then we will show that $|\Pi| \le N^{2r(r - 1)}$ (Lemma \ref{lem: bound-on-number-of-orders}).
\end{itemize}

Together, these observations mean that $X := \bigcup_{\rho \in \Pi} \left(\text{vertices of } \mc{P}_\rho^=\right)$ is an optimal portfolio for $\ordered$ over $\mc{P}_=$. By Lemma \ref{lem: P=-is-optimal}, $\mc{P}_=$ is an optimal portfolio for $\ordered$ over $\mc{P}$. Therefore, Lemma \ref{lem: portfolio-composition} implies that $X$ is an optimal portfolio for $\ordered$ over $\mc{P}$. Further,
\begin{align*}
    |X| &= \Big|\bigcup_{\rho \in \Pi} \left(\text{vertices of } \mc{P}_\rho^=\right)\Big| \le \sum_{\rho \in \Pi} \left|\left(\text{vertices of } \mc{P}_\rho^=\right)\right| \\
    &\le \sum_{\rho \in \Pi} (N^{r^2} + 1) = |\Pi| (N^{r^2} + 1) \le N^{2r(r - 1)} (N^{r^2} + 1) = O(N^{3r^2 - 2r}).
\end{align*}

This implies Theorem \ref{thm: covering-polyhedron}. We prove Lemma \ref{lem: fixed-order-portfolio} next and defer Lemma \ref{lem: bound-on-number-of-orders} to the next section.

\begin{lemma}\label{lem: fixed-order-portfolio}
For each reduced order $\rho$, $\mc{P}^=_\rho$ has at most $N^{r^2} + 1$ vertices
\end{lemma}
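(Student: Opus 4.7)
The plan is to reduce $\mc{P}^=_\rho$ to a polyhedron in the lower-dimensional space $\R^{N^r}$ and then count vertices via a standard polyhedral bound. Since every $x \in \mc{P}^=$ is constant on each block $S_l$, I would introduce the reduced vector $z(x) \in \R^{N^r}$ whose $l$th coordinate is the common value of $x_j$ for $j \in S_l$. The assignment $x \mapsto z(x)$ is an affine bijection (essentially a coordinate projection) between $\mc{P}^=_\rho$ and
$$\mc{Q}_\rho \;:=\; \bigl\{z \in \R^{N^r} : \widetilde{A}\, z \ge \mathbf{1}_r,\ z_{\rho(1)} \ge z_{\rho(2)} \ge \cdots \ge z_{\rho(N^r)} \ge 0\bigr\},$$
where $\widetilde{A}_{i,l} = |S_l| \cdot a_{i,l}$ with $a_{i,l}$ the common value of $A_{i,j}$ for $j \in S_l$. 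This bijection carries vertices to vertices, so it is enough to bound the number of vertices of $\mc{Q}_\rho$.

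Next, I would note that $\mc{Q}_\rho$ is defined by exactly $r + N^r$ non-redundant linear inequalities in $\R^{N^r}$: the $r$ covering inequalities, the $N^r - 1$ chain inequalities $z_{\rho(i)} \ge z_{\rho(i+1)}$, and the single non-negativity inequality $z_{\rho(N^r)} \ge 0$ (the other non-negativities being implied by the chain). Any vertex is pinned down by $N^r$ linearly independent tight constraints. If $k \in \{0, 1, \ldots, r\}$ of them are covering constraints, then the $N^r - k$ tight chain/non-negativity constraints partition the ordered sequence $z_{\rho(1)}, z_{\rho(2)}, \ldots, z_{\rho(N^r)}, 0$ into $k + 1$ consecutive \emph{blocks} of equal-valued coordinates, the block containing $0$ being forced to $0$; the remaining $k$ free block-values are then pinned down by the $k$ tight covering equalities.

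To count, I would enumerate vertices by $k$: there are $\binom{N^r}{k}$ choices of which $N^r - k$ chain/non-negativity constraints are tight (equivalently, which $k$ ordering gaps are strict), and at most $\binom{r}{k}$ choices of tight covering constraints. Summing and applying Vandermonde's identity gives
$$\#\,\mathrm{vertices}(\mc{Q}_\rho) \;\le\; \sum_{k=0}^{r}\binom{r}{k}\binom{N^r}{k} \;=\; \binom{N^r + r}{r},$$
and a direct expansion of the falling product bounds $\binom{N^r + r}{r} \le N^{r^2} + 1$ for $N \ge 2$, which holds in our parameter regime. The only real conceptual step is setting up the reduced polyhedron $\mc{Q}_\rho$ and verifying the bijection preserves vertices; once that is in place, the vertex count is a routine application of standard polyhedral counting, and I do not anticipate any significant obstacle.
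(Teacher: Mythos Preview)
Your proposal is correct and follows essentially the same route as the paper: both arguments reduce to showing that a vertex of $\mc{P}^=_\rho$ is determined by choosing which $N^r$ of the $N^r + r$ inequality constraints are tight, giving the bound $\binom{N^r+r}{r}$, which is then estimated by $N^{r^2}+1$. The paper does this directly in $\R^d$ by treating the block-equalities as $d-N^r$ forced equalities, whereas you first project to $\R^{N^r}$ via $x \mapsto z(x)$ and then count (your stratification by $k$ plus Vandermonde is just a more detailed derivation of the same binomial coefficient); neither version introduces any idea the other lacks.
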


\begin{proof}
    For simplicity, assume (after possibly relabeling indices) that $\rho(l) = l$ for all $l \in [N^r]$, and that $S_1 = \{1, \dots, |S_1|\}, S_2 = \{1 + |S_1|, \dots, |S_1| + |S_2|\}$ etc. Then the polyhedron $\mc{P}^=_\rho$ is the set of all $x$ such that $A_i^\top x \ge 1$ for all $i \in [r]$ and
    \[
        x_1 = \dots = x_{|S_1|} \ge x_{|S_1| + 1} = \dots = x_{|S_1| + |S_2|} \ge \dots \ge x_{d - |S_{N^r}| + 1} = \dots = x_{d} \ge 0.
    \]
    Any vertex corresponds to a set of $d$ (linearly independent) inequalities. The constraints of the polytope have $d - N^r$ equalities and $N^r + r$ inequalities. Therefore, each vertex corresponds to some $N^r$ of the $N^r + r$ inequalities being tight. The number of such choices is $\binom{N^r + r}{N^r}$. Then,
    \begin{align*}
        \binom{N^r + r}{N^r} = \binom{N^r + r}{r} \le \left(1 + \frac{N^r}{r}\right)^r.
    \end{align*}
    For $r = 1$, this is at most $1 + N^r$. For $r \ge 2$, $1 + \frac{N^r}{r} \le N^r \le N^{r^2}$.
\end{proof}

\subsection{Primal-dual counting}\label{sec: covering-polyhedra-dual-counting}

In this section, we study the set $\Pi$ of reduced orders such that for any weight vector $w$, $x(w)$ satisfies some reduced order $\rho \in \Pi$, i.e, $\Pi = \{\text{reduced order } \rho: \exists \: w \text{ where } x(w) \text{ satisfies } \rho \}$. We will prove the following:

\begin{lemma}\label{lem: bound-on-number-of-orders}
The number of possible reduced orders  $|\Pi| \le N^{2r(r - 1)}$.
\end{lemma}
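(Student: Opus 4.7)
I will use strong duality for the primal-dual pair (\ref{lp: polyhedron-norm-minimization})--(\ref{lp: polyhedron-norm-minimization-dual}) together with the equality condition in Cauchy--Schwarz for ordered norms (Lemma~\ref{lem: ordered-cauchy-schwarz}) to argue that every reduced order $\rho \in \Pi$ achieved by some $x(w)$ is also achieved (possibly weakly) by $z(A^\top \lambda^*(w))$, where $\lambda^*(w) \in \simplex{r}$ is an optimal dual solution. Since $\lambda \mapsto z(A^\top \lambda)$ is a linear map into $\R^{N^r}$ whose domain is the $(r-1)$-dimensional simplex $\simplex{r}$, counting such reduced orders reduces to counting cells in a hyperplane arrangement on $\simplex{r}$.

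Concretely, strong LP duality (applicable since $\|\cdot\|_{(w)}$ and $\|\cdot\|_{(w)}^*$ are both maxima of finitely many linear functions, so both programs can be linearized) gives $\|x(w)\|_{(w)} \cdot \|A^\top \lambda^*(w)\|_{(w)}^* = 1$. Combined with the chain
\[
\|x(w)\|_{(w)} \cdot \|A^\top \lambda^*(w)\|_{(w)}^* \;\ge\; x(w)^\top A^\top \lambda^*(w) \;=\; \lambda^*(w)^\top A\, x(w) \;\ge\; \lambda^*(w)^\top \mathbf{1}_r \;=\; 1,
\]
where the middle inequality is Cauchy--Schwarz and the last uses primal feasibility $A x(w) \ge \mathbf{1}_r$ with $\lambda^*(w) \ge 0$, both inequalities are forced to be tight. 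Tightness in Cauchy--Schwarz (Lemma~\ref{lem: ordered-cauchy-schwarz}) then supplies a common order $\pi \in \Perm(d)$ satisfied by both $x(w)$ and $A^\top \lambda^*(w)$. Since $x(w) \in \mc{P}^=$ (Lemma~\ref{lem: P=-is-optimal}) and the columns of $A$ are constant on each partition class $S_l$, both vectors are constant on each $S_l$, so $\pi$ descends to a reduced order $\rho$ satisfied by $z(A^\top \lambda^*(w))$. Hence $\Pi \subseteq \Pi' := \{\rho : z(A^\top \lambda) \text{ satisfies } \rho \text{ for some } \lambda \in \simplex{r}\}$.

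To bound $|\Pi'|$, pick representatives $j_l \in S_l$ for each $l \in [N^r]$; then
\[
z(A^\top \lambda)_l - z(A^\top \lambda)_{l'} \;=\; \sum_{i \in [r]} (A_{i, j_l} - A_{i, j_{l'}})\, \lambda_i
\]
is a linear functional in $\lambda$, nonzero whenever $l \ne l'$ because distinct partition classes correspond to distinct columns of $A$. The $\binom{N^r}{2} \le N^{2r}/2$ equations $z(A^\top \lambda)_l = z(A^\top \lambda)_{l'}$ therefore cut the $(r-1)$-dimensional affine hull of $\simplex{r}$ into an arrangement with at most $O((N^{2r})^{r-1}) = O(N^{2r(r-1)})$ open cells (Zaslavsky's bound). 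In each open cell, $z(A^\top \lambda)$ has pairwise distinct coordinates and so determines a unique strict reduced order; for any $\rho \in \Pi'$ realized only on a lower-dimensional face (i.e.\ with ties), a perturbation within the affine hull of $\simplex{r}$ that breaks those ties in the direction prescribed by $\rho$ moves $\lambda$ into an adjacent open cell while realizing $\rho$ strictly. Therefore $|\Pi| \le |\Pi'| \le N^{2r(r-1)}$.

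\textbf{Main obstacle.} The delicate step is the tie-breaking perturbation: when $\lambda^*(w)$ lies on several separating hyperplanes (or on the boundary of $\simplex{r}$), one must exhibit a feasible direction within the affine hull of $\simplex{r}$ that simultaneously separates all coincident coordinates of $z(A^\top \lambda^*(w))$ in the order prescribed by $\rho$. This relies on the fact that distinct partition classes give linearly non-degenerate separating functionals plus a dimension-counting argument inside $\simplex{r}$; it is precisely here that the simplex structure of the dual feasible set is used, and a more complicated dual polytope could inflate the final count.
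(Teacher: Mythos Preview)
Your argument follows the paper's approach almost exactly: establish the duality $\|x(w)\|_{(w)}\cdot\|A^\top\lambda^*(w)\|_{(w)}^*=1$, invoke the equality case of ordered Cauchy--Schwarz to obtain a common reduced order for $x(w)$ and $A^\top\lambda^*(w)$, and then count cells in the arrangement of the $\binom{N^r}{2}$ hyperplanes $\{z_l=z_{l'}\}$ on $\simplex{r}$. The paper does this with an elementary region-counting lemma rather than citing Zaslavsky, but otherwise the skeleton is identical.

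The step you flag as the main obstacle is a genuine gap, and your perturbation fix does not work. Take $r=2$ (so the affine hull of $\simplex{r}$ is one-dimensional) and three partition classes with representative columns $(1,0)$, $(0,1)$, $(\tfrac12,\tfrac12)$. Then $z_1(\lambda)=\lambda_1$, $z_2(\lambda)=1-\lambda_1$, $z_3(\lambda)\equiv\tfrac12$, and all three separating hyperplanes coincide at $\lambda_1=\tfrac12$. The only strict orders realized in open cells are $(1,3,2)$ and $(2,3,1)$, yet at $\lambda_1=\tfrac12$ all $3!=6$ reduced orders are satisfied weakly, and no perturbation in the single available direction can realize $z_1>z_2>z_3$ strictly. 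With four such columns one gets $|\Pi'|=24$ against an open-cell count of $2$ and a region bound of $\binom{4}{2}+1=7$, so the asserted inequality $|\Pi'|\le(\text{number of open cells})$ is false. The paper's proof has the same lacuna: it treats the closed sets $R_\rho$ as if they were the full-dimensional cells of the arrangement, which they are not. A related imprecision in your write-up is the inclusion $\Pi\subseteq\Pi'$: Cauchy--Schwarz equality only furnishes \emph{some} common order, which need not coincide with the particular $\rho$ witnessing $\rho\in\Pi$. What Theorem~\ref{thm: covering-polyhedron} actually requires is the weaker statement that for every $w$ one can \emph{choose} a minimizer $x(w)$ lying in $\mc{P}^=_\rho$ for some $\rho$ drawn from a set of size at most $N^{2r(r-1)}$; neither your argument nor the paper's makes this precise.
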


The main idea is to count reduced orders not on $x(w)$, but in a \emph{dual space}. We write the following modified primal and dual, and denote $\lambda(w) = {\arg\min}_{\lambda \in \simplex{r}} \|\lambda^\top A \|_{(w)}^*$:

\begin{equation}\label{lp: polyhedron-norm-minimization-prime}
\min \|x\|_{(w)} \quad \text{s.t.} \quad Ax \ge \mathbf{1}_r, x \in \domain. \tag{Primal'}
\end{equation}

\begin{equation}\label{lp: polyhedron-norm-minimization-dual-prime}
\min \|A^\top \lambda\|_{(w)}^* \quad \text{s.t.} \quad  \lambda \in \simplex{r} \tag{Dual}
\end{equation}

Note that $(A^\top \lambda)_j$ is simply the dot product of the $j$th column of $A$ with $\lambda$. Further, recall for all $j, j' \in S_l$ for any $l \in [N^r]$, the $j$th and $j'$th columns of $A$ are equal. Therefore, we have $(A^\top \lambda)_j = (A^\top \lambda)_{j'}$ for any $\lambda$. By definition, this means that $A^\top \lambda \in \domain$ for all $\lambda \ge 0$.

The next lemma establishes the crucial connection between reduced orders in \ref{lp: polyhedron-norm-minimization-prime} and \ref{lp: polyhedron-norm-minimization-dual-prime}. It uses Lemma \ref{lem: ordered-cauchy-schwarz} (Ordered Cauchy-Schwarz) along with a Lagrangian function; we defer its proof to Appendix \ref{sec: missing-proofs-2}.

\begin{restatable}{lemma}{primaldualcounting}\label{lem: primal-dual-counting}
Given a weight vector $w$, $\|x(w)\|_{(w)} \|A^\top \lambda(w)\|_{(w)}^* = 1$. Further, there is a reduced order $\rho$ such that both $x(w), A^\top\lambda(w)$ satisfy $\rho$.
\end{restatable}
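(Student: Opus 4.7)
The plan is to derive both claims from convex duality coupled with the Ordered Cauchy-Schwarz inequality (Lemma \ref{lem: ordered-cauchy-schwarz}). Three steps: weak duality via Cauchy-Schwarz; a matching Lagrangian-based strong-duality bound; and extraction of a common order from the equality case.

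First, for any $x$ feasible in (Primal') and any $\lambda \in \simplex{r}$, Ordered Cauchy-Schwarz together with $Ax \ge \mathbf{1}_r$ and $\lambda \ge 0$, $\lambda^\top \mathbf{1}_r = 1$ gives
\[
\|x\|_{(w)}\,\|A^\top\lambda\|_{(w)}^* \;\ge\; x^\top(A^\top\lambda) \;=\; \lambda^\top(Ax) \;\ge\; \lambda^\top\mathbf{1}_r \;=\; 1.
\]
Evaluating at the minimizers yields the weak-duality bound $\|x(w)\|_{(w)}\,\|A^\top\lambda(w)\|_{(w)}^* \ge 1$.

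For the matching upper bound I would form the Lagrangian $L(x,\mu) = \|x\|_{(w)} + \mu^\top(\mathbf{1}_r - Ax)$ over $x \in \domain$ and $\mu \ge 0$. Slater's condition is immediate (any sufficiently large $x \in \domain$ makes $Ax > \mathbf{1}_r$ strictly), so convex duality yields $\min_x \sup_\mu L = \sup_\mu \min_x L$. The inner minimization $\min_{x \in \domain}\{\|x\|_{(w)} - (\mu^\top A)^\top x\}$ equals $0$ when $\|\mu^\top A\|_{(w)}^* \le 1$ and is $-\infty$ otherwise, so after substituting $\mu = s\lambda$ with $s = \sum_i \mu_i \ge 0$ and $\lambda \in \simplex{r}$ the dual collapses to $\max_\lambda 1/\|A^\top\lambda\|_{(w)}^* = 1/\|A^\top\lambda(w)\|_{(w)}^*$. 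This establishes $\|x(w)\|_{(w)}\,\|A^\top\lambda(w)\|_{(w)}^* = 1$.

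For the second claim, because equality holds throughout the chain above at $(x(w),\lambda(w))$, the Cauchy-Schwarz step is attained, so condition (1) of Lemma \ref{lem: ordered-cauchy-schwarz} furnishes an order $\pi \in \Perm(d)$ that both $x(w)$ and $A^\top\lambda(w)$ satisfy. Lemma \ref{lem: P=-is-optimal} lets us assume $x(w) \in \mc{P}^= \subseteq \domain$, and $A^\top\lambda(w) \in \domain$ since the columns of $A$ are constant on each block $S_l$; thus both vectors are constant on every $S_l$. The block-level order induced by $\pi$ therefore descends to a well-defined reduced order $\rho \in \Perm(N^r)$ satisfied by both.

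The main obstacle is the strong-duality step, specifically the subtlety that the inner Lagrangian minimization a priori involves the dual ordered norm restricted to $\domain$. The sanity check is that when $A^\top\mu$ is nonnegative and lies in $\domain$ (as it does here, since $\mu \ge 0$ and columns of $A$ agree on each $S_l$), the restricted and unrestricted dual norms coincide: given any candidate $x$, replacing $x$ by $|x|$ preserves $\|x\|_{(w)}$ and does not decrease $(A^\top\mu)^\top x$, and then averaging coordinates over each block $S_l$ only decreases $\|x\|_{(w)}$ by majorization (Lemma \ref{lem: majorization-norm-order-2}) while leaving $(A^\top\mu)^\top x$ unchanged.
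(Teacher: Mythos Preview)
Your proposal is correct and follows essentially the same approach as the paper: weak duality via Ordered Cauchy--Schwarz, strong duality via a Lagrangian argument, and the common order from the equality case. The only packaging difference is that the paper carries out the strong-duality step by hand with a separating-hyperplane argument and proves the $\domain$-restricted conjugate identity (your inner-minimization claim) by explicit construction, whereas you invoke Slater's condition as a black box and reduce the $\domain$-restricted conjugate to the unrestricted one via the symmetrize-then-block-average argument; both routes are sound.
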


As a consequence of this lemma, we get that it is sufficient to count reduced orders in the dual:
\begin{align*}
    \Pi &= \{\text{reduced order } \rho: \exists \: w \text{ where } x(w) \text{ satisfies } \rho \} \\
    &= \{\text{reduced order } \rho: \exists \: w \text{ where } A^\top\lambda(w) \text{ satisfies } \rho \} \\
    &\subseteq \{\text{reduced order } \rho: \exists \: \lambda \in \simplex{r} \text{ where } A^\top\lambda  \text{ satisfies } \rho \}.
\end{align*}

Denote $\Pi^* = \{\text{reduced order } \rho: \exists \: \lambda \in \simplex{r} \text{ where } A^\top\lambda  \text{ satisfies } \rho \}$. We will show that $|\Pi^*| \le N^{2r(r - 1)}$. From the above, this is sufficient to prove Lemma \ref{lem: bound-on-number-of-orders}. Our final lemma is a geometric counting inequality.\footnote{This result also follows from \cite{winder1966partitions}'s (stronger) bound on the number of regions induced by $T$ hyperplanes in an $r$-dimensional Euclidean space. For completeness, we provide a (shorter) proof here.}
\begin{lemma}
    $T$ hyperplanes partition $\simplex{r}$ into at most $T^{r - 1} + 1$ regions.
\end{lemma}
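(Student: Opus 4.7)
The simplex $\simplex{r}$ is an $(r-1)$-dimensional convex body (it lives in the affine hyperplane $\sum_i \lambda_i = 1$), so I will prove the more general statement: for every convex body $K$ in a $d$-dimensional affine subspace, $T$ hyperplanes partition $K$ into at most $f(T,d) := T^d + 1$ regions. Applying this with $d = r-1$ and $K = \simplex{r}$ gives the claimed bound. The overall strategy is induction on $T$, combined with induction on $d$, using a standard sweep argument that reduces adding a hyperplane to counting regions inside the hyperplane itself.

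\textbf{The key recurrence.} Suppose $T-1$ hyperplanes already partition $K$ into some collection of open regions, and then we add a $T$-th hyperplane $H$. The number of \emph{new} regions created equals the number of pieces into which $H \cap K$ is cut by its intersections with the previous $T-1$ hyperplanes. But $H \cap K$ is a convex body inside a $(d-1)$-dimensional affine subspace, cut by at most $T-1$ hyperplanes, so inductively it has at most $f(T-1, d-1)$ pieces. This yields
\[
    f(T, d) \le f(T-1, d) + f(T-1, d-1).
\]

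\textbf{Inductive step on $T^d + 1$.} I will verify $f(T,d) \le T^d + 1$ by induction on $T + d$. The base cases are easy: $f(T, 0) = 1$ (a single point), $f(0, d) = 1$ (no cuts), $f(1, d) = 2$ (one cut), and $f(T, 1) = T+1$ (points on a segment). For the inductive step with $T, d \ge 2$, by the recurrence and the inductive hypothesis,
\[
    f(T, d) \le \big((T-1)^d + 1\big) + \big((T-1)^{d-1} + 1\big) = (T-1)^d + (T-1)^{d-1} + 2.
\]
It remains to show this is at most $T^d + 1$, i.e., $T^d - (T-1)^d \ge (T-1)^{d-1} + 1$. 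By convexity of $x \mapsto x^d$, the tangent bound at $x = T-1$ gives $T^d - (T-1)^d \ge d(T-1)^{d-1}$. Since $d \ge 2$ and $T \ge 2$, we have $(d-1)(T-1)^{d-1} \ge 1$, so $d(T-1)^{d-1} \ge (T-1)^{d-1} + 1$, closing the induction.

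\textbf{Main obstacle.} The only nontrivial step is the arithmetic inequality $(T-1)^d + (T-1)^{d-1} + 2 \le T^d + 1$; this is where the bound $T^{r-1} + 1$ (rather than a messier sum of binomials like $\sum_{i=0}^{r-1}\binom{T}{i}$) is tight enough that the induction does not leave slack to waste. Everything else is the standard sweep argument for counting cells in a hyperplane arrangement, applied to $K = \simplex{r}$ viewed as a full-dimensional body in its own ambient $(r-1)$-dimensional affine subspace.
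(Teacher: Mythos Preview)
Your proof is correct and follows essentially the same approach as the paper's: the standard sweep argument yielding the recurrence $f(T,d)\le f(T-1,d)+f(T-1,d-1)$, followed by the arithmetic check that $(T-1)^d+(T-1)^{d-1}+2\le T^d+1$. Your version is in fact cleaner---you state the bound for arbitrary convex bodies in a $d$-dimensional affine space (which is what is really needed, since $\simplex{r}\cap H$ is not itself a simplex), you make the double induction explicit via induction on $T+d$, and you supply the convexity argument for the arithmetic inequality that the paper simply asserts.
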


\begin{proof}
    The result is trivially true for $r = 1$ since $\simplex{1}$ is a point. For $r = 2$, $\simplex{2}$ is a line segment, and $T$ `hyperplanes' partition it into $\le T + 1$ regions. For $r \ge 3$, we use induction on $T$. 1 hyperplane clearly divides any convex body into at most $2 \le 1^{r - 1} + 1$ regions. Suppose $T > 1$. Let the $T$th hyperplane be $\mc{H}$. By the induction hypothesis, the first $T - 1$ hyperplanes divide $\simplex{r}$ into at most $(T - 1)^{r - 1} + 1$ regions. If $\simplex{r} \subseteq \mc{H}$, then $\mc{H}$ does not add any new regions, and we are done.

    Otherwise, the number of new regions $\mc{H}$ adds is the number of regions that the first $T - 1$ hyperplanes partition $\simplex{r} \cap \mc{H}$ into. But $\simplex{r} \cap \mc{H}$ can be linearly transformed into $\simplex{r - 1}$ in this case, and so the number of new regions is at most $(T - 1)^{r - 2} + 1$. Therefore, by the induction hypothesis, the total number of regions with $T$ hyperplanes is at most
    \[
        ((T - 1)^{r - 1} + 1) + ((T - 1)^{r - 2} + 1) \le T^{r - 1} + 1 \quad \forall \: T \ge 1, r \ge 3. \qedhere
    \]
\end{proof}

We are ready to finish the proof of Lemma \ref{lem: bound-on-number-of-orders}. Partition $\simplex{r}$ into regions $\{R_\rho: \rho \in \Pi^*\}$, where $R_\rho := \{\lambda \in \simplex{r}: A^\top \lambda \text{ satisfies } \rho\}$. The size $|\Pi^*|$ is exactly the number of such regions. Pick $j, j' \in [d]$ such that $j, j'$ belong to different sets $S_l, S_{l'}$. Then these regions are separated by hyperplanes of the form $\{\lambda: (A^\top \lambda)_j = (A^\top \lambda)_{j'}\}$, i.e., different reduced orders exist on different sides of these hyperplanes. There are $\binom{N^r}{2}$ such hyperplanes, each corresponding to a pair of sets $S_l, S_{l'}$. By the above lemma, these partition $\simplex{r}$ into at most
\[
    \binom{N^r}{2}^{r - 1} + 1 = \left(\frac{N^r(N^r - 1)}{2}\right)^{r - 1} + 1 \le N^{2r(r - 1)}.
\]
regions. Thus, $|\Pi| \le |\Pi^*| = |\{R_\rho: \rho \in \Pi^*\}| \le N^{2r(r - 1)}$. This finishes the proof of Lemma \ref{lem: bound-on-number-of-orders}, and therefore the proof of Theorem \ref{thm: covering-polyhedron}.

We finally remark that this can be converted into an algorithm that runs in time $\poly(N^{r^2}, d)$: tracing back, find the set $\Pi^*$ using the above hyperplane argument, and then simply output the union of vertices of $\mc{P}^{=}_\rho$ for all $\rho \in \Pi^*$.

\section{\iterativeordering framework}\label{sec: iterative-ordering}

This section presents our \iterativeordering framework to obtain simultaneous approximations for various combinatorial problems, including \completiontimes, \orderedtsp, and \osc. As we will show, all of these problems (1) involve a set of clients and a set of objects that \emph{satisfy} clients, and (2) seek an order on the objects that minimizes the satisfaction time of clients. This is formalized in Definition~\ref{def: ordered-satisfaction-problem}.
Additionally, such problems are often \emph{composable}, in the sense that orders on different subsets of objects can be combined into a single order on the union of the subsets; this is formalized in Definition \ref{def: composable-problems}.

Various norms of the vector of satisfaction times correspond to different fairness objectives and to different combinatorial problems.
We are interested in global guarantees, i.e., simultaneous approximations for all symmetric monotonic norms of this vector. \emph{A priori}, it is unclear whether a given problem even admits good simultaneous approximations.
As \cite{golovin_all-norms_2008} note, many previous works \cite{blum_minimum_1994, golovin_all-norms_2008, farhadi_traveling_2021} contain similar algorithmic ideas to obtain polynomial-time simultaneous approximations for such problems. We go a step further and formalize the underlying algorithm as \iterativeordering. As we show in Theorem \ref{thm: iterative-ordering}, applying it to \completiontimes gives the first constant-factor simultaneous approximations for this problem, to the best of our knowledge.
Applying it to \orderedtsp and \osc proves the \emph{existence} of better-than state-of-the-art simultaneous $O(1)$-approximations. Similar ideas apply to \clustering problems; we present improved simultaneous approximation to \clustering in Appendix \ref{sec: clustering-and-facility-location}.

We begin by formally defining the combinatorial problems considered in this section:
\begin{itemize}
    \item \completiontimes. The input consists of $n$ jobs, $d$ machines, and processing times $p_{i, j} > 0$ for each job $j \in [n]$ on machine $i \in [d]$. The output is an assignment of jobs to machines, and an order on the jobs assigned to each machine. Given a norm $\|\cdot\|_f$ on $\R^n$, the objective is to minimize the norm of the \emph{completion times} of jobs.\footnote{Note that this is different from minimizing norms of machine loads that we considered in \mlij. The two problems have different fairness interpretations: \completiontimes captures fairness for jobs while \mlij captures fairness for machines.} Special cases include average completion time minimization (for the $L_1$ norm) \cite{williamson_design_2010}, and makespan minimization (for the $L_\infty$ norm) \cite{graham_bounds_1969}.

    \item \osc. The input consists of a ground set of $n$ elements and $m$ subsets $S_1, \dots, S_m$ of the ground set. The output is an order on the subsets; each output induces a vector of cover times of elements in the ground set, defined for an element as the position of the first set in the order containing it. Given a norm $\|\cdot\|_f$ on $\R^n$, the objective is to minimize the norm of cover times. Special cases include classical Set Cover (for the $L_\infty$ norm) \cite{johnson_approximation_1973}, and Min-Sum Set Cover or MSSC (for the $L_1$ norm) \cite{feige_approximating_2004}.

    \item \ovc. This is a special case of \osc where the ground set corresponds to edges of an undirected graph and the subsets correspond to vertices of the graph. Special cases include classical Vertex Cover (for the $L_\infty$ norm), and Min-Sum Vertex Cover or MSVC (for the $L_1$ norm) \cite{feige_approximating_2004}.

    \item \orderedtsp. The input consists of a metric space on $n$ points or vertices $V$ and a starting vertex $v_0 \in V$. The output is a Hamiltonian tour of the vertices starting at $v_0$; each tour induces a vector of visit times of the vertices, defined for a vertex as its distance from $v_0$ along the tour. Given a norm $\|\cdot\|_f$ on $\R^n$, the objective is to minimize the norm of visit times. Special cases include the Traveling Salesman Problem or TSP (for the $L_\infty$ norm) \cite{lawler_traveling_1991}, the Traveling Repairman Problem (for the $L_1$ norm) \cite{goemans_improved_1998}, and the Traveling Firefighter Problem (for the $L_2$ norm) \cite{farhadi_traveling_2021}.
\end{itemize}

\subsection{\ordsat problems}

Next, we formally define \ordsat problems that capture the common structure among the above-mentioned problems.

\begin{definition}\label{def: ordered-satisfaction-problem}
An \ordsat problem is specified by
\begin{enumerate}
    \item A set of clients $C$.
    \item A set $\mc{X}$ of \emph{objects}. Each object $x \in \mc{X}$ is associated with a subset $C(x)$ of clients that it \emph{satisfies}.
    \item Each collection $X \subseteq \mc{X}$ of objects is called a \emph{satisfier}, and is said to satisfy the clients in the union $C(X) := \bigcup_{x \in X} C(x)$.
    \item For each satisfier $X \subseteq \mc{X}$ and an order $\pi \in \Perm(X)$ on $X$, there is an associated time vector $t(X, \pi) \in \R_{\ge 0}^{X}$ that must satisfy the following downward closure property: given any time $T \in \R_{\ge 0}$ define another satisfier $X_T := \{x \in T: t(X, \pi)_x \le T\} \subseteq X$ with corresponding order $\pi_T$ on $X_T$ induced from $\pi$. Then we must have for all $x \in X_T$ that
    \begin{equation}\label{eqn: downward-closure}
    t(X_T, \pi_T)_x \le t(X, \pi)_x.
    \end{equation}
\end{enumerate}
For each satisfier $X \subseteq \mc{X}$ and order $\pi$ on $X$, also define the \emph{satisfaction time vector} $s(X, \pi) \in \R_{\ge 0}^{C(X)}$ as follows: for each client $e \in C(X)$, let $x \in X$ be the first object in the order $\pi$ that satisfies $e$, i.e., $x = {\arg\min}_{y \in X: e \in C(y)} \pi(y)$. Then the satisfaction time $s(X, \pi)_e$ of client $e$ is defined as
\begin{equation}\label{eqn: satisfaction-time-definition}
s(X, \pi)_e = t(X, \pi)_x.
\end{equation}
\end{definition}

The goal is to output a satisfier $X \subseteq \mc{X}$ that satisfies all clients (i.e., $C(X) = C$) and an order $\pi$ on $X$. The $L_1$ norm or the min-sum objective is to minimize the total satisfaction time $\sum_{e \in C} s(X, \pi)_e$ of clients and the $L_\infty$ norm or min-max objective is to minimize the maximum satisfaction time $\max_{e \in C} s(X, \pi)_e$ of clients across all $(X, \pi)$. More generally, given a symmetric monotonic norm $\|\cdot\|_f$ on $\R^C$, the corresponding objective is to minimize $\|s(X, \pi)\|_f$. We seek simultaneous approximations with  guarantees for all symmetric monotonic norms.

\begin{lemma}\label{lem: ordered-satisfaction-problems}
\completiontimes, \osc, \ovc, and \orderedtsp are \ordsat problems.
\end{lemma}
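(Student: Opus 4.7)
The plan is to instantiate Definition~\ref{def: ordered-satisfaction-problem} separately for each of the four problems and then verify the downward closure property~(\ref{eqn: downward-closure}) in each case. For each problem I will specify the clients $C$, the objects $\mc{X}$ with their satisfaction sets $C(x)$, and the time vector $t(X, \pi)$, checking in passing that the induced satisfaction time~(\ref{eqn: satisfaction-time-definition}) recovers the conventional notion of completion time, cover time, or visit time.

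Concretely, for \completiontimes I would let clients be jobs and objects be pairs $(j, i)$ encoding the assignment of job $j$ to machine $i$, with $C((j, i)) = \{j\}$, and set $t(X, \pi)_{(j,i)} := \sum_{(j', i) \in X,\, \pi(j', i) \le \pi(j, i)} p_{i, j'}$, the machine-$i$ completion time of $j$ under the order $\pi$. For \osc (and \ovc as the special case in which objects are vertices and clients are the edges incident to them), the clients are ground-set elements, the objects are the input subsets $S$ with $C(S) = S$, and $t(X, \pi)_S := \pi(S)$. For \orderedtsp, both clients and objects are the vertices with $C(v) = \{v\}$, and $t(X, \pi)_v$ is the length of the path from $v_0$ that visits the vertices of $X$ in the order prescribed by $\pi$, up to and including $v$. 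A quick check then shows that~(\ref{eqn: satisfaction-time-definition}) reproduces the familiar quantity in each case.

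To verify~(\ref{eqn: downward-closure}), I would use the common observation that in each formulation $t(X, \pi)$ is non-decreasing along $\pi$: globally for \osc and \ovc (positions), globally for \orderedtsp (by non-negativity of the metric), and machine-wise for \completiontimes (completion times grow along $\pi$ on each machine). Consequently, for any threshold $T$, the set $X_T$ is a $\pi$-prefix either globally (for the first three problems) or on each machine (for \completiontimes); either way, when recomputing $t(X_T, \pi_T)_x$ for $x \in X_T$ one uses exactly the same summands, positions, or path-segments as for $t(X, \pi)_x$. Thus~(\ref{eqn: downward-closure}) in fact holds with equality in all four cases.

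The main obstacle is not a technical one but careful bookkeeping: the trickiest encoding is \completiontimes, where I need a \emph{single} order $\pi$ on the object set to carry both the machine assignment and the per-machine orderings, and where the prefix argument is per-machine rather than global. Once the encoding is in place, the key observation is that for any $(j, i) \in X_T$, every earlier $(j', i) \in X$ on the same machine has completion time at most $t(X, \pi)_{(j, i)} \le T$ and hence also lies in $X_T$; this makes the per-machine prefix preservation (and therefore~(\ref{eqn: downward-closure})) immediate.
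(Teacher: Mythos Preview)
Your proposal is correct and matches the paper's proof almost verbatim: the same client/object encodings, the same time vectors, and the same prefix argument for downward closure (with equality) in each of the four problems. The one small point the paper makes explicit that you gloss over is the \orderedtsp edge case where $v_0 \notin X$ or $\pi$ does not start at $v_0$; the paper handles this by setting $t(X,\pi)_v = \infty$ so that such orders are effectively disallowed and $X_T = \emptyset$ for any finite $T$, making downward closure vacuous there.
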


We give the proof for \completiontimes here, deferring the proof for the other three problems to Appendix \ref{sec: iterative-ordering-appendix}.

For \completiontimes, choose the set of clients $C = [n]$ as the set of jobs. Choose the set of objects to be $\mc{X} = [n] \times [d] = \{(j, i): j \in [n], i \in [d]\}$. The object $(j, i)$ represents the assignment of job $j$ to machine $i$; and we define $C(j, i) = \{j\}$, i.e., assigning job $j$ to machine $i$ satisfies job $j$. A satisfier $X \subseteq \mc{X}$ corresponds to a \emph{partial assignment}, where some jobs may be unassigned or assigned to multiple machines. Given machine $i \in [d]$, let $J_i(X)$ be the set of jobs assigned to machine $i$ in partial assignment $X$, i.e $J_i(X) = \{j \in [n]: (j, i) \in X\}$. Then any order $\pi$ on $X$ induces an order on $J_i(X)$.

Given $(j, i) \in X$ and an order $\pi$ on $X$, time $t(X, \pi)_{(j, i)}$ is defined naturally as the completion time of job $j$ on machine $i$, or more formally, as
\begin{equation}\label{eqn: scheduling-time-vector}
t(X, \pi)_{(j, i)} := \sum_{\substack{j' \in J_i(X): \\ \pi(j', i) \le \pi(j, i)}} p_{j', i},
\end{equation}
Similarly, the satisfaction time of a job $j$ is the least time across machines when it is completed: $s(X, \pi)_j = \min_{i: \ j \in J_i(X)} t(X, \pi)_{(j, i)}$. It is easy to see that the vectors satisfy downward closure (eqn. (\ref{eqn: downward-closure})) with equality: $X_T$ is simply the partial assignment for all jobs that finish under time $T$.

The goal is to find a schedule (with jobs possibly assigned to multiple machines), i.e., a pair $(X, \pi)$ such that $C(X) = [n]$.

\subsection{$\gamma$-\composable problems}

Given an \ordsat problem, consider the following process of composing subproblems: given satisfiers $X_1, \dots, X_k \subseteq \mc{X}$ with corresponding orders $\pi_1, \dots, \pi_k$ on them, consider the satisfier $\bigcup_{j \in [k]} X_j$ with a \emph{composed order} (denoted $\bigoplus_{j \in [k]} \pi_j$) where every object $x \in X_1$ is ordered first according to $\pi_1$, then every object $x \in X_2 \setminus X_1$ is ordered according to $\pi_2$, and so on. For example, in \completiontimes, this process corresponds to composing partial assignments one after the other, scheduling the jobs in the first partial assignment, then those in the next partial assignment, and so on.

In many \ordsat problems, including \completiontimes, such compositions suitably maintain the satisfaction times of the clients.
To formalize this, define the `cost' of a satisfier $X$ and order $\pi$ as $c(X, \pi) := \max_{x \in X} t(X, \pi)_x$. For example, for \completiontimes, $c(X, \pi)$ is the makespan of the corresponding partial assignment.

\begin{definition}\label{def: composable-problems}
Given $\gamma \ge 1$, an $\ordsat$ problem is called $\gamma$-\composable if for all satisfiers $X_1, \dots, X_k \subseteq \mc{X}$ and corresponding orders $\pi_1, \dots, \pi_k$, the time vector $t\left(X, \pi\right)$ for the composition $X := \bigcup_{j \in [k]} X_j$ and $\pi := \bigoplus_{j \in [k]} \pi_j$ satisfies the following: for each $j \in [k]$ and each object $x \in X_j \setminus \left(\bigcup_{l \in [j - 1]} X_l\right)$, we must have
\begin{equation}\label{eqn: gamma-composable}
t\left(X, \pi\right)_x \le \gamma \left(\sum_{l \in [j - 1]} c(X_l, \pi_l)\right) + t(X_j, \pi_j)_x.
\end{equation}
\end{definition}

For example, we show that $\completiontimes$ is $1$-composable: indeed, if partial assignments corresponding to $(X_1, \pi_1), (X_2, \pi_2), \dots, (X_k, \pi_k)$ are put one after the other to form a composed assignment $(X, \pi)$, then all jobs $j$ scheduled in $(X_1, \pi_1)$ finish by their completion time in partial assignment $(X_1, \pi_1)$, all jobs $j$ scheduled in $(X_2, \pi_2)$ finish by time $(\text{makespan of } (X_1, \pi_1) + \text{ completion time of } j \text{ in } (X_2, \pi_2))$, and so on.

We show in Appendix \ref{sec: iterative-ordering-appendix} that \osc and \ovc are both $1$-composable and \orderedtsp is $2$-composable.

\begin{lemma}\label{lem: composable-problems}
\completiontimes, \osc, and \ovc are $1$-composable and \orderedtsp is $2$-composable.
\end{lemma}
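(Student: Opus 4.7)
The plan is to verify equation (\ref{eqn: gamma-composable}) separately for each of the four problems by unfolding the concrete definition of the time vector $t(X, \pi)$ in each case and tracking how the composition $\bigoplus_j \pi_j$ behaves.

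For \completiontimes, I would unfold the definition of $t(X, \pi)_{(j, i)}$ from equation (\ref{eqn: scheduling-time-vector}) and observe that in the composed assignment, the jobs on each machine $i$ are processed in the order induced by $\pi_1, \pi_2, \ldots, \pi_k$ restricted to their respective $J_i(\cdot)$. For $(j, i) \in X_j \setminus \bigcup_{l<j} X_l$, the jobs preceding $(j, i)$ on machine $i$ are exactly (a) all jobs in $J_i(X_l)$ for $l<j$ that appear here for the first time, whose total processing time on $i$ is at most the makespan $c(X_l, \pi_l)$, and (b) jobs in $J_i(X_j)$ preceding $(j, i)$ in $\pi_j$, contributing exactly $t(X_j, \pi_j)_{(j,i)}$. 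Summing yields the $\gamma = 1$ bound.

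For \osc and \ovc, the time $t(X, \pi)_x$ is simply the position of object $x$ in $\pi$, and $c(X, \pi) = |X|$. In the composed order, the position of $x \in X_j \setminus \bigcup_{l<j} X_l$ equals $\sum_{l < j}|X_l \setminus \bigcup_{l' < l} X_{l'}|$ plus the position of $x$ within $\pi_j$ restricted to objects not in $\bigcup_{l<j} X_l$. Since $|X_l \setminus \bigcup_{l'<l} X_{l'}| \le |X_l| = c(X_l, \pi_l)$ and the restricted position of $x$ is at most $t(X_j, \pi_j)_x$, this gives the $\gamma = 1$ bound.

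For \orderedtsp the main obstacle arises: each $(X_l, \pi_l)$ is a path from $v_0$ visiting the vertices in $X_l$ in the specified order, and $c(X_l, \pi_l)$ is the path length up to the last vertex $v_l^*$. Naive concatenation forces a detour from $v_{l}^*$ back toward $v_0$ before starting the next segment, and the triangle inequality bounds this detour by $c(X_l, \pi_l)$ (e.g.\ by retracing the tour), effectively doubling the overhead of each completed segment. Specifically, for $x \in X_j \setminus \bigcup_{l<j} X_l$, the visit time in the composed tour is at most $\sum_{l<j}\bigl(c(X_l, \pi_l) + c(X_l, \pi_l)\bigr) + t(X_j, \pi_j)_x = 2\sum_{l<j} c(X_l, \pi_l) + t(X_j, \pi_j)_x$, yielding $\gamma = 2$. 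A subtle point to verify is that the time in $(X_j, \pi_j)$ restricted to only the new vertices remains bounded by $t(X_j, \pi_j)_x$, which follows from shortcutting via the triangle inequality (and is already consistent with the downward-closure property (\ref{eqn: downward-closure})). The other three problems follow by direct positional or processing-time bookkeeping; the geometric triangle-inequality argument for \orderedtsp is the step that requires genuine care.
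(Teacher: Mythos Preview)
Your proposal is correct and follows essentially the same approach as the paper: for \osc/\ovc you bound positions by $\sum_{l<j}|X_l|+\pi_j(S)$, for \completiontimes you do machine-wise makespan bookkeeping, and for \orderedtsp you use the ``return to $v_0$ after each segment'' idea together with the triangle inequality to get the factor~$2$. One small remark: the shortcutting of already-visited vertices in the $j$th segment is justified directly by the triangle inequality, not by the downward-closure property~(\ref{eqn: downward-closure}), which only concerns restriction to a time-prefix rather than to an arbitrary subset of objects.
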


The next lemma follows by various definitions; we include its proof in Appendix \ref{sec: iterative-ordering-appendix}.
\begin{lemma}\label{lem: restriction-increases-cardinality}
Suppose we are given satisfier $X \subseteq \mc{X}$, order $\pi$ on $X$, and $T > 0$ for an \ordsat problem. Define $X_T = \{x \in X: t(X, \pi)_x \le T\}$, and let the restriction of $\pi$ to $T$ be denoted $\pi_T$.
Then
\begin{enumerate}
    \item $c(X_T, \pi_T) \le T$
    \item The number of clients $|C(X_T)|$ satisfied by $X_T$ is at least the number of clients $(X, \pi)$ satisfies within time $T$, i.e.
    \[
        |C(X_T)| \ge \left| \left\{e \in C(X): s(X, \pi)_e \le T \right\} \right|.
    \]
\end{enumerate}
\end{lemma}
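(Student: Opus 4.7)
The plan is to handle the two parts separately, with each following almost directly from the definitions supplied in the preceding paragraphs.

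For part (1), I would unwind the definition $c(X_T, \pi_T) = \max_{x \in X_T} t(X_T, \pi_T)_x$ and then invoke the downward closure property (\ref{eqn: downward-closure}) built into Definition~\ref{def: ordered-satisfaction-problem}. Concretely, for any $x \in X_T$, downward closure gives
\[
    t(X_T, \pi_T)_x \le t(X, \pi)_x,
\]
and membership in $X_T$ means $t(X, \pi)_x \le T$. Taking the maximum over $x \in X_T$ gives $c(X_T, \pi_T) \le T$, which is precisely what we need.

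For part (2), the plan is to show the pointwise set inclusion
\[
    \{e \in C(X) : s(X, \pi)_e \le T\} \;\subseteq\; C(X_T),
\]
after which the cardinality bound is immediate. Fix any client $e$ on the left. By definition (\ref{eqn: satisfaction-time-definition}), $s(X, \pi)_e = t(X, \pi)_{x_e}$ where $x_e$ is the earliest object of $X$ in the order $\pi$ that satisfies $e$. The hypothesis $s(X, \pi)_e \le T$ therefore reads $t(X, \pi)_{x_e} \le T$, which is exactly the membership condition defining $X_T$, so $x_e \in X_T$. Since $e \in C(x_e)$ by choice of $x_e$, we conclude $e \in C(X_T) = \bigcup_{x \in X_T} C(x)$, establishing the inclusion and hence the inequality on cardinalities.

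Neither step presents a real obstacle: the result is genuinely a bookkeeping lemma whose only content is that the two structural hypotheses in the definition of an \ordsat problem (downward closure of the time vector, and defining satisfaction time via the first satisfying object in the order) are compatible with the truncation operation $X \mapsto X_T$. The only subtle point to highlight in writing is that one must invoke downward closure on the \emph{same} induced order $\pi_T$ used in the truncated instance, rather than on some other reordering, so I would explicitly note that $\pi_T$ is defined to be the restriction of $\pi$ and that downward closure in (\ref{eqn: downward-closure}) is stated precisely for this pair $(X_T, \pi_T)$.
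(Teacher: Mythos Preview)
Your proposal is correct and essentially identical to the paper's own proof: both parts are handled exactly as you describe, invoking downward closure for part~(1) and the set inclusion $\{e \in C(X): s(X,\pi)_e \le T\} \subseteq C(X_T)$ via the satisfying object $x_e$ for part~(2). The paper's write-up is slightly terser (it just says ``there is some object $x$'' rather than naming the earliest one), but the argument is the same.
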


\subsection{Algorithm \iterativeordering}\label{sec: iterative-ordering-algorithm}

In this subsection, we give simultaneous approximation algorithm \iterativeordering for $\gamma$-\composable problems. Among other results, we show the existence of a simultaneous $(\sqrt{\gamma} + 1)^2$-approximation, improving upon the state-of-the-art simultaneous approximations for \orderedtsp and \osc.
We also obtain various \emph{polynomial-time} approximations, giving the first simultaneous $O(1)$-approximation for \completiontimes.

Formally, given an approximation ratio $\alpha \ge 1$, a simultaneous $\alpha$-approximation for an \ordsat problem is a satisfier $X \subseteq \mc{X}$ and an order $\pi$ on $X$ such that $C(X) = C$ and for any other $X', \pi'$ with $C(X') = C$, and for any symmetric monotonic norm  $\|\cdot\|_f$ on $\R^C$, the corresponding satisfaction times of clients satisfy
\[
    \|s(X, \pi)\|_f \le \alpha \|s(X', \pi')\|_f.
\]

We need one last piece of the framework to state the algorithm \iterativeordering. Given a $\gamma$-\composable problem and some \emph{budget} $B$, consider the satisfier $X' \subseteq \mc{X}$ and order $\pi'$ on $X'$ that satisfies as many clients $|C(X')|$ as possible under the cost constraint $c(X', \pi') \le B$. Now consider the following relaxation: given $\beta \ge 1$, we call another satisfier $X$ and order $\pi$ on $X$ a $(\beta, B)$-satisfier if $c(X, \pi) \le \beta B$ and $|C(X)| \ge |C(X')|$, i.e., $(X, \pi)$ has cost within factor $\beta$ of the budget $B$ and satisfies at least as many clients as $(X', \pi')$.

Of course, $(X', \pi')$ (corresponding to $\beta = 1$) can always be found using an exhaustive search for any (finite) problem, but this search may take time exponential in the input size. For example, for \completiontimes, this search for $(X', \pi')$ for a given $B$ amounts to searching over all possible partial assignments with makespan $\le B$. As we show later, this is still useful in obtaining our results for the existence of simultaneous approximations.
For many problems, choosing a larger $\beta$ allows finding a $(\beta, B)$-satisfier in \emph{polynomial-time}, e.g., $\beta = 2$ for \orderedtsp and \completiontimes (see Appendix \ref{sec: iterative-ordering-appendix}). This difference accounts for the gap between our approximations for existence and polynomial-time algorithms.

\begin{algorithm}[t]\label{alg: iterative-ordering}
\caption{\texttt{IterativeOrdering}($\beta$)}
\DontPrintSemicolon
\KwData{A $\gamma$-\composable problem and parameter $\beta \ge 1$}
\KwResult{A satisfier $X \subseteq \mc{X}$ and order $\pi$ on $X$ such that $C(X) = C$}
set $\theta = \sqrt{\gamma} + 1$ \;
$j \gets 0$\;
\While{$\bigcup_{l \in [0, j - 1]} C(X_l) \neq C$}
{
    Set budget $B = \theta^j$\;
Find $(\beta, B)$-satisfier $X_j$ and corresponding order $\pi_j$\;
Increase counter $j \gets j + 1$\;
}
define satisfier $X = \bigcup_{i \in [0, j]} X_i$ and composed order $\pi \gets \oplus_{i \in [0, j]} \pi_i$\;
\Return{$X$ and $\pi$}
\end{algorithm}

Algorithm \iterativeordering is inspired by \cite{blum_minimum_1994}'s algorithm for the Traveling Repairman Problem (TRP), which was subsequently also used for \osc, \ovc by \cite{golovin_all-norms_2008}, who also mention its applicability to similar covering problems.
It takes as input a $\gamma$-composable problem $\beta \ge 1$, and constructs a simultaneous $\beta(\sqrt{\gamma} + 1)^2$-approximation to the problem. Choosing $\beta = 1$ gives the existence results while choosing appropriate $\beta > 1$ gives polynomial-time results.
We assume by re-scaling all costs that the minimum non-zero cost $c(X, \pi)$ across satisfiers $X \subseteq \mc{X}$ and orders $\pi$ on $X$ is  $1$.

\begin{lemma}\label{prop: composable-all-norm-approximation}
Given a $\gamma$-\composable problem and $\beta \ge 1$, \textup{\iterativeordering}$(\beta)$ gives a simultaneous $\left(\beta(\sqrt{\gamma} + 1)^2\right)$-approximation.
\end{lemma}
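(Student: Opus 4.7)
The plan is to reduce the claim to a pointwise comparison on sorted satisfaction times. By Lemmas~\ref{lem: majorization-norm-order-2} and \ref{thm: top-k-implies-all-norm}, it suffices to fix an arbitrary feasible $(X^*, \pi^*)$ with $C(X^*) = C$ and show that the ascending satisfaction times $s^{*\uparrow}_1 \le \dots \le s^{*\uparrow}_n$ of $(X^*, \pi^*)$ and $s^{\uparrow}_1 \le \dots \le s^{\uparrow}_n$ of the output $(X, \pi)$ satisfy $s^{\uparrow}_i \le \beta(\sqrt{\gamma}+1)^2\, s^{*\uparrow}_i$ for every $i$. Writing $\theta = \sqrt{\gamma}+1$, the two driving quantities are $N_j^* := |\{e \in C : s(X^*,\pi^*)_e \le \theta^j\}|$, the number of clients the optimum satisfies by time $\theta^j$, and $|C(X_j)|$, the number the $j$-th iterate satisfies. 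Note that the while-loop must terminate, since for $j$ large enough $\theta^j \ge c(X^*,\pi^*)$, giving $|C(X_j)| \ge N_j^* = n$.

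The core of the proof consists of two inequalities. First, I would apply Lemma~\ref{lem: restriction-increases-cardinality} to $(X^*,\pi^*)$ with $T = \theta^j$: the restriction has cost at most $\theta^j$ and satisfies at least $N_j^*$ clients; hence by the definition of a $(\beta,\theta^j)$-satisfier, $|C(X_j)| \ge N_j^*$. Second, I would use $\gamma$-composability to bound composed times: for any $x \in X_j \setminus \bigcup_{l<j} X_l$, Definition~\ref{def: composable-problems} combined with $c(X_l,\pi_l) \le \beta\theta^l$ yields
\[
    t(X,\pi)_x \;\le\; \gamma \sum_{l=0}^{j-1} c(X_l,\pi_l) + t(X_j,\pi_j)_x \;\le\; \beta \left(\gamma \sum_{l=0}^{j-1} \theta^l + \theta^j \right).
\]
Because $\theta - 1 = \sqrt{\gamma}$, the geometric sum telescopes to $\gamma(\theta^j-1)/\sqrt{\gamma} = \sqrt{\gamma}(\theta^j-1) \le \sqrt{\gamma}\,\theta^j$, so $t(X,\pi)_x \le \beta\theta^j(\sqrt{\gamma}+1) = \beta\theta^{j+1}$. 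Applying Eqn.~(\ref{eqn: satisfaction-time-definition}), every client in $C(X_j)$ has composed satisfaction time at most $\beta\theta^{j+1}$, so at least $N_j^*$ clients of the output satisfy $s(X,\pi)_e \le \beta\theta^{j+1}$.

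To combine, fix $i$ with $s^{*\uparrow}_i > 0$ and let $j$ be minimal with $\theta^j \ge s^{*\uparrow}_i$, so $\theta^{j-1} < s^{*\uparrow}_i$ and $N_j^* \ge i$. The two bounds then yield at least $i$ output clients with satisfaction time $\le \beta\theta^{j+1} \le \beta\theta^2 \cdot \theta^{j-1} < \beta\theta^2\, s^{*\uparrow}_i$, i.e., $s^{\uparrow}_i \le \beta(\sqrt{\gamma}+1)^2\, s^{*\uparrow}_i$. If the while-loop terminates at some $j^* < j$, all clients are already covered by time $\beta\theta^{j^*+1} \le \beta\theta^{j+1}$, so the same bound holds. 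The main technical obstacle is the calibration of $\theta$: the factor-$\theta$ slack introduced by the geometric budget schedule must exactly absorb the $\sqrt{\gamma}$ overhead that composability contributes to the telescoped sum, and the specific choice $\theta = \sqrt{\gamma}+1$ (equivalently, $\theta-1 = \sqrt{\gamma}$) is what makes both contributions collapse into a single factor $\theta$ per budget doubling, yielding the advertised $(\sqrt{\gamma}+1)^2$ bound.
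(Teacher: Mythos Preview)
Your proposal is correct and follows essentially the same approach as the paper: reduce to a pointwise comparison of sorted satisfaction times, use Lemma~\ref{lem: restriction-increases-cardinality} together with the $(\beta,\theta^j)$-satisfier guarantee to show $|C(X_j)| \ge N_j^*$, then apply $\gamma$-composability and the geometric-sum computation (with $\theta-1=\sqrt{\gamma}$) to bound composed satisfaction times by $\beta\theta^{j+1}$. Your arithmetic and the paper's are equivalent rearrangements of the same telescoping, and your explicit handling of early loop termination is a nice touch the paper leaves implicit.
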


\begin{proof}
    Suppose there were $k$ total iterations in \iterativeordering; then the output satisfier is $X = \bigcup_{j \in [0, k]} X_j$ and corresponding order is $\pi = \bigoplus_{j \in [0, k]} \pi_j$.

    Fix symmetric monotonic norm $\|\cdot\|_f$. Let the optimal solution for this norm be $(X^*, \pi^*)$. We will show that for all $T > 0$, if $(X^*, \pi^*)$ satisfies $i$ clients within time $T$, then $(X, \pi)$ satisfies $\ge i$ clients within time $\beta(\sqrt{\gamma} + 1)^2 T$. Given corresponding satisfaction time vectors $s(X, \pi)$, $s(X^*, \pi^*) \in \R^C$; this is equivalent to saying that for any $i \in \{1, \dots, |C|\}$, the $i$th smallest entry of $s(X, \pi)$ is at most $\beta(\sqrt{\gamma} + 1)^2$ times the $i$th smallest entry of $s(X^*, \pi^*)$. Since $\|\cdot\|_f$ is symmetric and monotone, this implies that $(X, \pi)$ is a $\beta(\sqrt{\gamma} + 1)^2$-approximation.

    Given $T > 0$, define $X^*_T := \{x \in X^*: t(X^*, \pi^*)_x \le T\}$, and let $\pi^*_{T}$ be the restriction of $\pi^*$ to $X^*_T$. Then, by Lemma \ref{lem: restriction-increases-cardinality},
    \begin{equation}\label{eqn: composable-approximation-proof-1}
    c(X^*_T, \pi^*_T) \le T.
    \end{equation}
    Also by the same lemma,
    \begin{equation}\label{eqn: composable-approximation-proof-2}
    |C(X^*_T)| \ge |\{e \in C(X^*): s(X^*, \pi^*)_e \le T\}| := i.
    \end{equation}

    Let $j \in \Z_{\ge 0}$ be the unique integer such that $T \in (\theta^{j - 1}, \theta^j]$. Then, by definition of a $(\beta, B)$-satisfier, in iteration $j$ of the algorithm, we get $(X_j, \pi_j)$ such that (a) $c(X_j, \pi_j) \le \beta \theta^j$, and (b) $|C(X_j)| \ge |C(X^*_T)| \ge i$.

    For all clients $e \in C(X_j)$, $\gamma$-composability implies that the satisfaction time $s(X, \pi)_e \le \gamma \left(\sum_{l \in [0, j - 1]} c(X_l, \pi_l)\right) + c(X_j, \pi_j)$. Since the cost $c(X_l, \pi_l) \le \beta \theta^l$ for all $l \in [0, k]$, we have
    \[
        s(X, \pi)_e \le \gamma \left(\sum_{l \in [0, j - 1]} \beta \theta^l\right) + \beta \theta^j \le \beta \left(\gamma \frac{\theta^j}{\theta - 1} + \theta^j\right) = \theta^{j - 1} \times \beta \theta \left(\frac{\gamma}{\theta - 1} + 1\right).
    \]
    Therefore, $(X, \pi)$ satisfies at least $|C(X_j)| \ge i$ clients within time $\beta \theta \left(\frac{\gamma}{\theta - 1} + 1\right) \times \theta^{j - 1} < \beta \theta \left(\frac{\gamma}{\theta - 1} + 1\right) \times T$. Since $\theta = \sqrt{\gamma} + 1$, we have $\theta \left(\frac{\gamma}{\theta - 1} + 1\right) = (\sqrt{\gamma} + 1)^2$.
\end{proof}

This leads to the following results proving the existence of various simultaneous approximations, and a polynomial-time $8$-approximation for \completiontimes:

\begin{theorem}\label{thm: iterative-ordering}
\begin{enumerate}
    \item For any $\gamma$-\prob{Composable}\ problem, there always exists a simultaneous $(\sqrt{\gamma} + 1)^2$-approximation.
    \item For \osc, \ovc, and \completiontimes,  there always exists a simultaneous $4$-approximation.
    \item For \orderedtsp, there always exists a simultaneous $(3 + 2\sqrt{2})$-approximation.
    \item For \completiontimes, a simultaneous $8$-approximation can be found in polynomial-time.
\end{enumerate}
\end{theorem}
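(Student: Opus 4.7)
The plan is to derive all four parts from Lemma~\ref{prop: composable-all-norm-approximation} by choosing an appropriate value of $\beta$ and plugging in the composability parameters established in Lemma~\ref{lem: composable-problems}. Since Lemma~\ref{prop: composable-all-norm-approximation} guarantees a simultaneous $\beta(\sqrt{\gamma}+1)^2$-approximation once we can produce $(\beta, B)$-satisfiers, the task reduces to (a) noting that $\beta = 1$ is always achievable in an existential sense by exhaustive search over all satisfier-order pairs with cost $\le B$, and (b) exhibiting an efficient procedure for small $\beta$ in the special cases where we want polynomial-time guarantees.

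For part (1), given any $\gamma$-\composable problem and any budget $B$, the satisfier $(X',\pi')$ maximizing $|C(X')|$ subject to $c(X',\pi') \le B$ trivially serves as a $(1,B)$-satisfier, so \iterativeordering$(1)$ terminates after finitely many iterations (since $|C|$ is finite and the cost doubling scheme eventually exceeds the cost of any satisfier covering $C$) and, by Lemma~\ref{prop: composable-all-norm-approximation}, outputs a simultaneous $(\sqrt{\gamma}+1)^2$-approximation. Parts (2) and (3) then follow by substituting the composability parameters from Lemma~\ref{lem: composable-problems}: with $\gamma = 1$ for \completiontimes, \osc, and \ovc we get $(\sqrt{1}+1)^2 = 4$, and with $\gamma = 2$ for \orderedtsp we get $(\sqrt{2}+1)^2 = 3 + 2\sqrt{2}$.

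For part (4), since \completiontimes is $1$-\composable, plugging $\beta = 2$ into Lemma~\ref{prop: composable-all-norm-approximation} would yield a polynomial-time simultaneous $2 \cdot 4 = 8$-approximation, provided we can find $(2, B)$-satisfiers in polynomial time. Concretely, a $(2, B)$-satisfier for \completiontimes is a partial assignment with makespan at most $2B$ that schedules at least as many jobs as any partial assignment with makespan at most $B$. I would obtain this via a standard LP-rounding bicriteria approach for scheduling on unrelated machines: write the natural assignment LP that includes variables $y_{j,i}$ only for $(j,i)$ with $p_{j,i} \le B$, constrains $\sum_i y_{j,i} \le 1$ for each candidate job $j$, constrains $\sum_j p_{j,i} y_{j,i} \le B$ for each machine $i$, and maximizes $\sum_j \sum_i y_{j,i}$; guessing/enumerating the set of jobs that the optimum schedules within budget $B$ (or, equivalently, rounding the LP and losing the standard factor of $2$ in makespan à la Lenstra--Shmoys--Tardos) produces a feasible integral assignment with makespan $\le 2B$ serving the same number of jobs. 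Within each machine, the ordering on the assigned jobs does not affect makespan, so we set $\pi_j$ to be arbitrary (any order suffices for the $(\beta,B)$-satisfier definition since $c$ depends only on completion times, which are preserved under reordering up to the machine's total load).

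The main obstacle is part (4): all other parts are short invocations of existing lemmas, but establishing a polynomial-time $(2,B)$-satisfier requires care because the naive LP is for a slightly different objective (number of jobs scheduled), and one must verify that the LST-style rounding can be applied while preserving the lower bound $|C(X_j)| \ge |C(X^*_T)|$ required by the definition. I would handle this by showing that the LP optimum is at least $|C(X^*_T)|$ (since the integer solution corresponding to $X^*_T$ is LP-feasible with budget $B$) and that the rounding loses only in the makespan (going from $B$ to $2B$) while preserving every job that the LP fractionally schedules to an integral extent, which is a routine adaptation of the bipartite rounding argument. Once this is in place, the inequality in Lemma~\ref{prop: composable-all-norm-approximation} yields the $8$-approximation directly.
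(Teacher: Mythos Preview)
Your proposal is correct and mirrors the paper's own proof: parts (1)--(3) are derived exactly as you describe by invoking Lemma~\ref{prop: composable-all-norm-approximation} with $\beta=1$ and the composability parameters from Lemma~\ref{lem: composable-problems}, and part (4) is obtained via a polynomial-time $(2,B)$-satisfier for \completiontimes constructed by the same LP-plus-bipartite-rounding argument (the paper states this as Lemma~\ref{lem: partial-scheduling-approximation}, attributing the technique to \cite{shmoys_approximation_1993}). The only minor imprecision is your phrase ``preserving every job that the LP fractionally schedules to an integral extent''---the rounding does not preserve individual jobs but rather guarantees that the \emph{number} of integrally scheduled jobs is at least the fractional LP optimum, which is what you actually need and what the paper proves.
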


Part 1 of the theorem follows by choosing $\beta = 1$ in Proposition \ref{prop: composable-all-norm-approximation} and parts 2 and 3 follow from our observations in Lemma \ref{lem: composable-problems} that \osc, \ovc, and \completiontimes are $1$-composable while \orderedtsp is $2$-composable.

The proof of the last part involves giving a subroutine for \completiontimes that outputs a $(2, B)$-satisfier for each budget $B > 0$. This is equivalent to asking the following: given a time limit $B$, schedule as many given jobs as possible on the machines.
We show that \cite{shmoys_approximation_1993}'s $2$-approximation for makespan minimization generalizes to this setting (proof of the lemma in Appendix \ref{sec: iterative-ordering-appendix}):

\begin{lemma}\label{lem: partial-scheduling-approximation}
Given processing times $p_{i, j}$ for jobs $j \in [n]$ on machines $i \in [m]$, and a time budget $B$, find a partial schedule of jobs that (1) finishes within time $2B$, (2) schedules at least as many jobs as any partial schedule that finishes within time $B$.
\end{lemma}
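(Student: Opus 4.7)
\textbf{Proof plan for Lemma \ref{lem: partial-scheduling-approximation}.} The plan is to adapt Shmoys and Tardos's LP-rounding $2$-approximation for makespan minimization on unrelated machines \cite{shmoys_approximation_1993} to the partial-scheduling variant, using a dummy machine to absorb ``unscheduled'' mass. First, I would drop any pair $(i,j)$ with $p_{i,j} > B$ from consideration (such an assignment can never meet the budget), and write the relaxation
\begin{align*}
    \max \ \sum_{i \in [m], j \in [n]} x_{i,j}
    \quad \text{s.t.} \quad
    \sum_{i \in [m]} x_{i,j} \le 1 \ \forall j, \quad
    \sum_{j \in [n]} p_{i,j}\, x_{i,j} \le B \ \forall i, \quad
    x \ge 0,
\end{align*}
with the tacit convention $x_{i,j}=0$ whenever $p_{i,j} > B$. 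Since every integral partial schedule that respects the budget $B$ is feasible with the same objective value, the LP optimum $V^{*}$ is an upper bound on the integer optimum $k^{*}$, and as $k^{*} \in \Z$ we have $\lfloor V^{*} \rfloor \ge k^{*}$.

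Next, to apply the Shmoys--Tardos rounding verbatim I would introduce a dummy machine $0$ with $p_{0,j} = 0$ for all $j$, and set $x_{0,j} := 1 - \sum_{i \in [m]} x_{i,j}$, converting the assignment constraint into $\sum_{i \in \{0\} \cup [m]} x_{i,j} = 1$. The dummy machine has no makespan constraint (or equivalently, a vacuous one). Letting $L_i := \sum_j x_{i,j}$, the Shmoys--Tardos procedure constructs $\lceil L_i \rceil$ unit-capacity slots on each machine $i \in \{0\} \cup [m]$, distributes fractional jobs into these slots in decreasing order of $p_{i,j}$, and solves a bipartite matching on the jobs-to-slots graph. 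Integrality of bipartite matching yields an assignment $\hat{x}$ in which each machine $i$ receives at most $\lceil L_i \rceil$ integral jobs, and on each real machine $i \in [m]$ the load is at most $L_i \cdot B + \max_{j: x_{i,j}>0} p_{i,j} \le B + B = 2B$, since every used assignment has $p_{i,j} \le B$.

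Finally, the counting step: the number of jobs routed to the dummy machine is at most $\lceil L_0 \rceil = \lceil n - V^{*} \rceil = n - \lfloor V^{*} \rfloor$, so the number of jobs integrally assigned to real machines is at least $n - (n - \lfloor V^{*} \rfloor) = \lfloor V^{*} \rfloor \ge k^{*}$, which is exactly the guarantee required. Outputting each real machine's jobs in arbitrary order gives the partial schedule of Lemma \ref{lem: partial-scheduling-approximation}.

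\textbf{Expected main obstacle.} The conceptual steps are straightforward, but the delicate point is verifying that the Shmoys--Tardos bipartite-matching argument transports to the partial-assignment LP without any loss beyond the usual $2$ factor. In particular, one must check that: (i) introducing a zero-cost dummy machine preserves the $\le \lceil L_i \rceil$ jobs-per-machine guarantee without forcing too many jobs onto the dummy (handled by the matching LP's integrality, which preserves the total fractional mass $n$ exactly); and (ii) the floor/ceiling accounting $\lceil n - V^{*} \rceil = n - \lfloor V^{*} \rfloor$ is tight enough to recover $k^{*}$, which relies crucially on $k^{*}$ being integral. Everything else reduces to invoking \cite{shmoys_approximation_1993} as a black box.
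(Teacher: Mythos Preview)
Your approach is correct and essentially the same as the paper's: both solve the identical LP and apply the Shmoys--Tardos slot-based bipartite rounding, with your dummy-machine reduction being a packaging device that lets you cite \cite{shmoys_approximation_1993} as a black box, whereas the paper opens the rounding and uses integrality of the bipartite matching polytope to get an integral matching of size $\ge \sum_{i,j} x^*_{i,j}$ directly (no dummy, no floor/ceiling accounting). One small slip to fix when you write it out: the intermediate makespan expression ``$L_i \cdot B + \max_{j} p_{i,j}$'' is not the Shmoys--Tardos bound and is not $\le 2B$ in general (since $L_i = \sum_j x_{i,j}$ can exceed $1$); the correct bound is $\sum_j p_{i,j} x_{i,j} + \max_{j:\,x_{i,j}>0} p_{i,j} \le B + B = 2B$, which is exactly what the LP constraint and the pruning $p_{i,j}\le B$ give you.
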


\section{Discussion and open problems}\label{sec: discussion}

Motivated by fairness concerns in workload distribution and placement of critical facilities, we considered the portfolio problem that seeks a small number of feasible solutions to a given optimization problem with guarantees for all fairness criteria. We studied portfolios from an approximation and polyhedral perspective, and gave the first characterization of the trade-off between portfolio size and the approximation factors for the problem of scheduling identical jobs on unidentical machines, and then extended this result to covering polyhedra.
We also proposed the \iterativeordering framework and gave new or improved simultaneous approximations for various combinatorial problems.

Questions about the design of portfolios can be asked for any setting in optimization and for any class of objectives: fundamentally, portfolios simply ask if the set of feasible solutions can be represented by a smaller subset and still enjoy some guarantees for optimization for a given class of functions. We state some open questions here:
\begin{enumerate}
    \item {\bf General \coveringpolyhedra:} For covering polyhedra in dimension $d$, we improved portfolio sizes from the general bound of $\poly(d)$ when the number of constraints $r = o(\sqrt{\log d}/(\log\log d))$. We conjecture that this is tight up to polylogarithmic factors, i.e, that there exist covering polyhedra in dimension $d$ with $O(\log d)$ constraints such that any $O(\log d)$-approximate portfolios for symmetric monotonic norms must have polynomial size.

    \item {\bf Scheduling with unidentical jobs:} We show $O(1)$-approximate portfolios of size $O(\log d)$ for \mlij, i.e., machine load minimization on $d$ machines with identical jobs. It is open if there exists a similar-sized portfolio for the more general problem of machine-load minimization with \emph{unidentical} jobs. We believe that this may not be true.

    \item {\bf Determining best-possible simultaneous approximations.} For \orderedtsp, it is unlikely that our simultaneous $5.83$-approximation is the best-possible, since the only known lower bound on this number is $1.78$ \cite{farhadi_traveling_2021}. It would be interesting to close this gap in either direction. Similarly, it is unclear if our simultaneous $4$-approximation for \completiontimes or for \osc is tight.

    \item {\bf Gap between computability and existence:} For simultaneous approximations, there is also a gap between existence bounds and polynomial-time bounds (see Table \ref{tab: portfolios-size-1}). For \osc, this gap (factor $4$ vs $O(\log n)$, respectively) is explained by complexity theoretic conjectures; however, it is unclear why this gap exists for other problems, such as \orderedtsp (factor $5.83$ vs $8$ respectively) and \completiontimes (factor $4$ vs $8$ respectively).

    \item {\bf Class of equity objectives:} Our work focused on understanding portfolios for various families of symmetric monotonic norms. However, many more notions of equity have been proposed in the literature, such as lexicographically optimal solutions \cite{kumar_fairness_2000}, for which such questions are largely open.
\end{enumerate}

\bibliographystyle{abbrvnat}
\bibliography{references}

\appendix

\section{Omitted proofs from Section \ref{sec: preliminaries}}\label{sec: missing-proofs-1}

\begin{proof}[Proof of Lemma \ref{lem: portfolio-composition}]
    \begin{enumerate}
        \item For any $f \in \mathbf{C}$, $\min_{x \in X_2} f(x) \le \alpha_2 \min_{x \in X_1} f(x) \le \alpha_2 \alpha_1 \min_{x \in \domain} f(x)$. The first inequality follows since $X_2$ is an $\alpha_2$-approximate portfolio for $\mathbf{C}$ over $X_1$ and the second inequality follows since $X_1$ is an $\alpha_1$-approximate portfolio for $\mathbf{C}$ over $\domain$.

        \item For each $f \in \mathbf{C}$,
        \[
            \min_{x \in \domain} f(x) = \min_{i \in [n]} \min_{x \in \domain_i} f(x) \le \min_{i \in [n]} \alpha \min_{x \in X_i} f(x) = \alpha \min_{x \in \cup_{i \in [n]} X_i} f(x).
        \]
        Therefore, $\cup_{i \in [n]} X_i$ is an $\alpha$-approximate portfolio for $\mathbf{C}$ over $\domain$.
    \end{enumerate}
\end{proof}

Next, we prove Lemma \ref{obs: polynomial-portfolios-for-smn} that gives $(1 + \epsilon)$-approximate portfolio of size-$\poly(d^{1/\epsilon)}$ for symmetric monotonic norms for any $\domain \in \R_{\ge 0}^d$. Our proof is a slight modification of \cite{chakrabarty_approximation_2019}'s proof that counts the number of ordered norms up to a $(1 + \epsilon)$-approximation.

\begin{proof}[Proof of Lemma \ref{obs: polynomial-portfolios-for-smn}]
    Denote $v^* = \min_{x \in \domain} \|x\|_{\infty}$, with the corresponding vector denoted $x^*$. Let $\ov{\domain} = \{x \in \domain: \|x\|_{\infty} \le d v^*\}$. We first claim that $\ov{\domain}$ is an optimal portfolio for all symmetric monotonic norms over $\domain$, i.e., for each symmetric monotonic norm $\|\cdot\|_f$, the corresponding minimum norm point ${\arg\min}_{x \in \domain} \|x\|_f \in \ov{\domain}$. To see this, let $\ov{x} = {\arg\min}_{x \in \domain} \|x\|_f$. Then,
    \begin{align*}
        \|\ov{x}\|_\infty \|(1, 0, \dots, 0)\|_f &\le \|\ov{x}\|_f & (\|\cdot\|_f \text{ is symmetric}) \\
        &\le \|x^*\|_f & (\text{optimality of } \ov{x}) \\
        &\le \|x^*\|_{\infty} \|(1, \dots, 1)\|_f  \\
        &\le v^*d \|(1, 0, \dots, 0)\|_f.
    \end{align*}

    This implies that $\|\ov{x}\|_\infty \le d v^*$, or that $\ov{x} \in \ov{\domain}$. Next, we will place all vectors in $\ov{\domain}$ in one of $d^{O(1/\epsilon)}$ buckets such that for any two vector $x, y$ in the same bucket, $x \preceq (1 + \epsilon) y$ and $y \preceq (1 + \epsilon) x$, so that by Lemma \ref{lem: majorization-norm-order-2}, $\|x\|_f \simeq_{1 + \epsilon} \|y\|_f$ for all symmetric monotonic norms $\|\cdot\|_f$. Consequently, it is sufficient to pick just one vector in each bucket to get a $(1 + \epsilon)$-approximate portfolio for all symmetric monotonic norms over $\domain$.

    Denote $T = \lceil \log_{1 + \frac{\epsilon}{3}} d \rceil$. Each bucket $B(a_1, \dots, a_T)$ is specified by an increasing sequence $a_1 \le a_2 \le \dots \le a_{T}$ of integers that lie in $[0, 2T]$. The number of such sequences is $\binom{3T}{T} \le 3^T = d^{O(1/\epsilon)}$, bounding the number of buckets. Let $c_i = \left\lfloor (1 + \epsilon/3)^{i} \right\rfloor$ for $i \in [T]$. Then $x$ lies in bucket $B(a_1, \dots, a_T)$ where $a_i = \left\lfloor \log_{1 + \frac{\epsilon}{3}} \left(\frac{1}{v^*} \|x\|_{\mathbf{1}_{c_i}}\right) \right\rfloor$.

    First, we show that this assignment is valid, i.e., each $a_i \in [0, 2T]$. Indeed,
    \[
        \frac{1}{v^*} \|x\|_{\mathbf{1}_{c_i}} \le \frac{1}{v^*} c_i \|x\|_\infty \le \frac{d \|x\|_\infty}{v^*} \le d^2.
    \]
    The final inequality follows since $x \in \ov{\domain}$. Therefore, $a_i \le \log_{1 + \frac{\epsilon}{3} d^2} \le 2T$. Next, we claim that for any $x, y \in B(a_1, \dots, a_d)$, $x \preceq (1 + \epsilon) y$. Fix any $k \in [d]$, and let $i \in [0, T]$ such that $c_i \le k < c_{i + 1}$. Note that by definition of $a_i$, we have $a_i \le \log_{1 + \frac{\epsilon}{3}} \left(\frac{1}{v^*} \|x\|_{\mathbf{1}_{c_i}}\right) \le a_i + 1$, and the same inequality also holds for $y$. Then,
    \begin{align*}
        \|x\|_{\mathbf{1}_k} &\le \frac{k}{c_i} \|x\|_{\mathbf{1}_{c_i}} \le \frac{k}{c_i} \left(v_i \left(1 + \epsilon/3\right)^{a_i + 1}\right) \\
        &= \frac{k (1 + \epsilon/3)}{c_i} \left(v_i \left(1 + \epsilon/3\right)^{a_i + 1}\right) \\
        &\le \frac{k (1 + \epsilon/3)}{c_i} \|y\|_{\mathbf{1}_{c_i}} \le \frac{k (1 + \epsilon/3)}{c_i} \|y\|_{\mathbf{1}_{k}}.
    \end{align*}

    Finally, $\frac{k}{c_i} \le \frac{c_{i + 1} - 1}{c_i} \le (1 + \epsilon/3)$, so that $\frac{\|x\|_{\mathbf{1}_k}}{\|y\|_{\mathbf{1}_k}} \le (1 + \epsilon/3)^2 = 1 + \frac{2}{3}\epsilon + \frac{1}{9} \epsilon^2 \le 1 + \epsilon$ for all $\epsilon \in (0, 1]$.
\end{proof}

Given a vector $v \in \R^d$, we denote $\sigma v = (v_1, v_1 + v_2, \dots, v_1 + \dots + v_d)$ and $\Delta v = (v_1 - v_2, v_2 - v_3, \dots, v_d)$ for brevity. Note that with this notation, we have the top-$k$ norm $\|v\|_{\mathbf{1}_k} = (\sigma |v|^\da)_k$. Further, we have $v^\top u = (\sigma v)^\top (\Delta u)$ for all vectors $u, v \in \R^d$.

\begin{proof}[Proof of Lemma \ref{lem: dual-ordered-norm}]
    Let $K = \{x \in \R^d: \|x\|_{(w)} \le 1\}$ be the unit norm ball for $\|\cdot\|_{(w)}$, and let $K^* = \{y \in \R^d: y^\top x \le 1 \: \forall \: x \in K\}$ be the unit norm ball of its dual norm. Also denote $\ov{K} = \left\{y \in \R^d: \max_{k \in [d]} \frac{\|y\|_{\mathbf{1}_k}}{\|w\|_{\mathbf{1}_k}} \le 1\right\}$. We will show that $\ov{K} = K^*$.

    Suppose $y \in \ov{K}$. For any $x \in K$, we have
    \begin{align*}
        y^\top x &\le (|y|^\da)^\top |x|^\da & (\text{rearrangment inequality}) \\
        &= (\sigma |y|^\da)^\top (\Delta |x|^\da) & (\text{alternating sum}) \\
        &\le (\sigma w)^\top (\Delta |x|^\da) & (y \in \ov{K}) \\
        &= \|x\|_{(w)} & (\text{alternating sum}) \\
        &\le 1. & (x \in K)
    \end{align*}
    That is, $y \in K^*$. Conversely, assume $y \in K^*$ so that $y^\top x \le 1$ for each $x \in K$. Since $K^*$ is symmetric, assume without loss of generality that $y_1 \ge \dots \ge y_d \ge 0$, other cases are handled similarly. It is easy to check that for each $k \in [d]$, $x(k) := \frac{1}{(\sigma w)_k} (\underbrace{1, \dots, 1}_k, 0, \dots, 0)$ is in $K$. Therefore $1 \ge y^\top x(k) = \frac{(\sigma y)_k}{(\sigma w)_k} = \frac{(\sigma |y|^\da)_k}{(\sigma w)_k}$, implying that $y \in \ov{K}$.
\end{proof}

\begin{proof}[Proof of Lemma \ref{lem: ordered-cauchy-schwarz}]
    This proof is similar to the previous proof. For any $x, y \in \R^d$, we have
    \begin{align*}
        y^\top x &\le (|y|^\da)^\top |x|^\da & (\text{rearrangment inequality)} \\
        &= \sum_{k \in [d]} (\sigma |y|^\da)_k (\Delta |x|^\da)_k & (\text{alternating sum}) \\
        &\le \|y\|_{(w)}^* \sum_{k \in [d]} (\sigma w)_k (\Delta |x|^\da)_k & (\text{definition of } \|y\|_{(w)}^*) \\
        &= \|y\|_{(w)}^* \|x\|_{(w)} & (\text{alternating sum}).
    \end{align*}
    Further, the first inequality holds if and only if $x, y$ are order-consistent, i.e., if and only if there exists an order $\pi$ such that $x^\da = x_{\pi}$ and $y^\da = y_{\pi}$. The second inequality holds if and only if for each $k$, $(\sigma |y|^\da)_k (\Delta |x|^\da)_k = \|y\|_{(w)}^* (\sigma w)_k (\Delta |x|^\da)_k$, which happens if and only if $(\Delta |x|^\da)_k = 0$ or $\frac{(\sigma |y|^\da)_k}{(\sigma w)_k} = \|y\|_{(w)}^*$.
\end{proof}

\paragraph{Lower bound on portfolio sizes} We prove the following theorem that lower bounds the portfolio sizes for ordered and symmetric monotonic norms:

\begin{theorem}\label{thm: portfolio-size-lower-bound-smn}
There exist polytopes $\domain$ such that any $O(\log d)$-approximate portfolios for $\ordered$ must have size $d^{\Omega(1/\log\log d)}$. The same bound holds for $\smn$.
\end{theorem}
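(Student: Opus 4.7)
The plan is to exhibit a polytope $\domain \subseteq \R^d$ whose combinatorial structure forces any $O(\log d)$-approximate portfolio for $\ordered$ to contain $d^{\Omega(1/\log\log d)}$ vectors. Since $\ordered \subseteq \smn$, an $\alpha$-approximate portfolio for $\smn$ is automatically an $\alpha$-approximate portfolio for $\ordered$, so the same polytope witnesses the lower bound for $\smn$ for free; the whole proof reduces to the ordered-norm case. The target is exponentially larger than the $\Theta(\log d/\log\log d)$ bound established for \mlij in Section \ref{sec: identical-jobs-scheduling-lower-bound}, so $\domain$ cannot come from an \mlij instance (nor, for this parameter range, from a covering polyhedron with $O(1)$ constraints, by Theorem \ref{thm: covering-polyhedron}): a richer combinatorial source is needed.

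The construction I would use generalizes the Section \ref{sec: identical-jobs-scheduling-lower-bound} doubling instance from a single linear scale to an exponentially large hierarchy. Fix $k = \lceil \log d/\log\log d\rceil$ and $N = 2^{\Theta(k)} = d^{\Omega(1/\log\log d)}$. Embed a family of $N$ ``modes'' into $d$ dimensions by choosing $N$ carefully crafted vectors $v_1,\ldots,v_N \in \R_{\ge 0}^d$ and taking $\domain = \conv(v_1,\ldots,v_N)$. Each $v_i$ is engineered to be the unique near-optimum of its own tailor-made ordered norm $\|\cdot\|_{(w_i)}$: the weight vector $w_i$ should place high weight on a pattern of coordinates matching the support of $v_i$ and exponentially smaller weight elsewhere, so that \emph{every} feasible vector in $\domain$ other than $v_i$ itself (and its convex neighbors) pays a cost of $\omega(\log d)\cdot\|v_i\|_{(w_i)}$. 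A natural realization is to arrange the vectors on the same multi-scale doubling grid used in Section \ref{sec: identical-jobs-scheduling-lower-bound} (classes indexed by $l$ with machine counts $S^{2l}$ and scale $S^l$), but indexed by a combinatorial family (e.g.\ subsets or binary strings) of exponential size rather than a single chain.

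The counting then proceeds by pigeonhole: any $O(\log d)$-approximate portfolio $X$ for $\ordered$ must contain, for each mode $i$, some vector $x_i\in X$ with $\|x_i\|_{(w_i)} \le O(\log d)\cdot\|v_i\|_{(w_i)}$. By the fingerprinting guarantee the sets $\{x \in \domain : \|x\|_{(w_i)}\le O(\log d)\|v_i\|_{(w_i)}\}$ are pairwise disjoint across $i$, forcing $|X|\ge N = d^{\Omega(1/\log\log d)}$. Structurally the three claims required for each mode are direct analogs of the claims \ref{todo: scheduling-lower-bound-condition-1}--\ref{todo: scheduling-lower-bound-condition-3} in Section \ref{sec: identical-jobs-scheduling-lower-bound}: (i) $v_i$ achieves small $\|\cdot\|_{(w_i)}$ value; (ii) any feasible vector placing non-trivial mass on coordinates ``outside'' the support pattern of $w_i$ incurs a load dominated by the large-weight coordinates; and (iii) any feasible vector placing too little mass on the top coordinates of $w_i$ is beaten by $L_\infty$-type comparisons on the restricted instance.

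The main obstacle is designing the vectors $v_i$ and weights $w_i$ so that all $\binom{N}{2}$ pairwise separation inequalities hold simultaneously with an $\omega(\log d)$ gap. In the \mlij lower bound the $L+1$ weight vectors and $L+1$ modes form a single chain, which made verification straightforward; here the combinatorial explosion to $N$ modes requires controlling every pair of vectors in all of their ordered-norm evaluations. I expect the cleanest route is a \emph{randomized} construction -- pick each $v_i$ via an independent random subset / random $\pm$ coordinate pattern and take a union bound over the $\binom{N}{2}$ separation events, verifying $\log\binom{N}{2} = O(k \log k)$ is still affordable against the concentration one gets from the doubling scale. The alternative is an algebraic / code-based construction (e.g.\ using a binary code of good minimum distance) layered on top of the doubling hierarchy; either way the nontrivial content is showing that the per-mode $\omega(\log d)$ gap of Section \ref{sec: identical-jobs-scheduling-lower-bound} survives when the scale index $l$ is replaced by a combinatorial label drawn from an exponentially large alphabet.
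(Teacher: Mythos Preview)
Your high-level strategy matches the paper's: construct $N = d^{\Omega(1/\log\log d)}$ vectors $v_i$ with companion ordered norms $\|\cdot\|_{(w_i)}$ so that $v_i$ is the unique near-minimizer of its own norm by an $\omega(\log d)$ factor, then take $\domain = \conv\{v_i\}$ and pigeonhole. The reduction from $\smn$ to $\ordered$ is also correct. But what you have written is a plan, not a proof: you explicitly flag the pairwise-separation construction as ``the main obstacle,'' propose a randomized or code-based route, and leave the verification undone. That is exactly where the content lies, and neither of your suggested routes is developed enough to see whether the union bound or distance argument actually closes.

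The paper supplies a clean explicit construction that sidesteps randomness and codes entirely. Set $S = \log^3 d$ and $L = \Theta(\log d/\log\log d)$ with $\sum_{i=0}^L S^i = d$. Index vectors not by a chain $l \in [L]$ (as in the \mlij lower bound) but by monotone $\{0,1\}$-step sequences $a = (a_0,\ldots,a_L)$ with $a_0 = 0$ and $a_{i-1} \le a_i \le a_{i-1}+1$. Define
\[
x(a) = \Big(\underbrace{S^{-a_0}}_{S^0},\,\underbrace{S^{-a_1},\ldots}_{S^1},\,\ldots,\,\underbrace{S^{-a_L},\ldots}_{S^L}\Big),
\qquad
w(a) = \Big(\underbrace{S^{a_0-0}}_{S^0},\,\underbrace{S^{a_1-1},\ldots}_{S^1},\,\ldots\Big).
\]
Then $\|x(a)\|_{(w(a))} = \sum_i S^{-a_i}\cdot S^{a_i-i}\cdot S^i = L+1$, while if $a'_i < a_i$ for some $i$ then the $i$th block alone contributes $S^{a_i-a'_i} \ge S$ to $\|x(a')\|_{(w(a))}$, giving a gap of $S/L = \Omega(\log^2 d)$. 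The combinatorial point you were missing is that the required property---for every pair $a \ne a'$, some coordinate satisfies $a'_i < a_i$---is precisely that the family of sequences forms an \emph{antichain} in the coordinatewise order. Dilworth's theorem gives an antichain of size $2^L/(2L^2) = d^{\Omega(1/\log\log d)}$, since chains in this poset have length at most $L^2+1$. This replaces your speculative union bound with a one-line extremal argument.

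Two further points your sketch glosses over. First, you take $\domain = \conv\{v_i\}$ but only argue separation on vertices; the paper checks separately that any convex combination $x = \sum_b \lambda_b x(b)$ with $\lambda_a \le 3/4$ still has $\|x\|_{(w(a))} \ge S/4$, using that all $x(b)$ are already sorted in the same order so the ordered norm is linear on the hull. Second, your analogy with claims (ii)--(iii) from Section~\ref{sec: identical-jobs-scheduling-lower-bound} is misleading: those claims partition machines into ``above $l$'' and ``below $l$,'' which only works on a chain. The antichain construction needs a genuinely different (and simpler) pairwise argument.
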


Since $\ordered \subseteq \smn$, it is sufficient to prove the result for ordered norms. First, we need a counting lemma:

\begin{lemma}
    Given $L \ge 1$, Let $T$ be the set of integral sequences $a = (a_0, \dots, a_L)$ such that $a_{i - 1} \le a_i \le a_{i - 1} + 1$ for all $i \in [L]$ and $a_0 = 0$. Then there exists a subset $\ov{T} \subseteq T$ such that (1) $|\ov{T}| \ge 2^L/(2L^2)$, and (2) for any two sequences $a, a' \in \ov{T}$, there exists an $i$ such that $a_i' < a_i$, and vice-versa.
\end{lemma}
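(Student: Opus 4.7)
The plan is to put the natural componentwise partial order on $T$ and extract a large antichain via Mirsky's theorem. First, each $a \in T$ is determined by the subset $S \subseteq [L]$ of indices $i$ at which $a_i - a_{i-1} = 1$, so $|T| = 2^L$. Declare $a \preceq a'$ iff $a_i \le a'_i$ for all $i \in \{0, 1, \dots, L\}$. The required property of $\ov T$---that for any $a, a' \in \ov T$ there is some $i$ with $a'_i < a_i$ and, symmetrically, some $i$ with $a_i < a'_i$---is exactly the statement that no two elements of $\ov T$ are comparable under $\preceq$, i.e.\ that $\ov T$ is an antichain.

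Next, I would bound the length of the longest chain in $(T, \preceq)$. Consider the ``height'' $h(a) := \sum_{i=0}^L a_i$. Along any chain $a^{(1)} \prec a^{(2)} \prec \dots \prec a^{(m)}$, consecutive elements differ in at least one coordinate while being componentwise $\le$, so $h$ is strictly increasing along the chain. Since $0 \le a_i \le i$ for every $a \in T$, the height $h(a)$ lies in $\{0, 1, \dots, L(L+1)/2\}$, and hence $m \le 1 + L(L+1)/2 \le 2L^2$ whenever $L \ge 1$.

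Finally, Mirsky's theorem says that a finite poset can be partitioned into a number of antichains equal to the length of its longest chain. Applied here, $T$ admits a partition into at most $2L^2$ antichains, and by pigeonhole at least one of these antichains has size at least $|T|/(2L^2) = 2^L/(2L^2)$. Take this antichain as $\ov T$; both required properties follow immediately from the discussion above.

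The argument is short, and I do not anticipate a real obstacle: the only slightly non-routine step is choosing the height function $h(a) = \sum_i a_i$ to certify strict monotonicity along chains, which in turn yields the $O(L^2)$ bound on chain length that plugs into Mirsky's theorem.
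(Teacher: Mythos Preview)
Your proposal is correct and follows essentially the same approach as the paper: both count $|T|=2^L$, put the componentwise order on $T$, use the height function $h(a)=\sum_i a_i$ to bound the longest chain by $O(L^2)$, and conclude that a large antichain exists. The only cosmetic difference is that you invoke Mirsky's theorem (partition into at most $2L^2$ antichains, then pigeonhole), whereas the paper phrases the last step via Dilworth's theorem (any chain cover has at least $2^L/(2L^2)$ chains, hence so does the minimum one, hence the maximum antichain has at least this size); the underlying pigeonhole is the same.
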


\begin{proof}
    We first show that $|T| = 2^L$. For any such sequence $a$, consider $\phi(a) = (a_1 - a_0, \dots, a_L - a_{L - 1})$. Then $\phi(a)$ maps sequences in $T$ to binary sequences $(b_1, \dots, b_L)$; further, $\phi$ is bijective. Therefore, $|T|$ is the number of binary sequences $(b_1, \dots, b_L)$, which is $2^L$.

    Also note that $\ge$ is a partial order on $T$: $a \le a'$ if and only if $a'_i \ge a_i$ for all $i \in [0, L]$. For any distinct $a, a'$ such that $a' \ge a$, we must have that $\sum_{i \in [L]} a'_i \ge 1 + \sum_{i \in [L]} a_i$. Further, $\sum_{i \in [L]} a_i \le L^2$ for all $a \in T$. Therefore, the length of any chain in order $\ge$ on $T$ is at most $L^2 + 1$. This means that any chain decomposition of $\ge$ on $T$ must have at least $|T|/(L^2 + 1) \ge 2^L/(2 L^2)$ chains. By Dilworth's theorem, this is also the size of the largest antichain. But an anti-chain is exactly the set $\ov{T}$ we are looking for.
\end{proof}

We are ready to prove the theorem:

\begin{proof}[Proof of theorem]
    Let $S = \log^3 d$, and let $L$ be such that $S^0 + S^1 + \dots + S^{L} = d$. Then $L = \Theta\left(\frac{\log d}{\log S}\right) = \Theta\left(\frac{\log d}{\log\log d}\right)$, or that $S/L = \Omega(\log^2 d)$.

    Let $\ov{T}$ be the set of integral sequences from the previous lemma, i.e., each sequence $a = (a_0, \dots, a_L)$ is such that $a_{i - 1} \le a_i \le a_{i - 1} + 1$ for all $i \in [L]$ and $a_0 = 0$, and for any two sequences $a, a' \in \ov{T}$, there exists $i$ such that $a_i' < a_i$. Define
    \begin{align*}
        x(a) &= \Big(\underbrace{S^{-a_0}}_{S^0}, \underbrace{S^{-a_1}, \dots, S^{-a_1}}_{S^{1}}, \dots, \underbrace{S^{-a_L}, \dots, S^{-a_L}}_{S^L}\Big).
    \end{align*}
    Note that since $a_i \ge a_{i - 1}$, $x^\da = x$. Further, since $ a_i \le a_{i - 1} + 1$, we have $ a_i - i \le a_{i - 1} - (i - 1)$. Define
    \[
        w(a) = \Big(\underbrace{S^{a_0 - 0}}_{S^0}, \underbrace{S^{a_1 -1}, \dots, S^{a_1 - 1}}_{S^{1}}, \dots, \underbrace{S^{a_L - L}, \dots, S^{a_L - L}}_{S^L}\Big).
    \]
    Then
    \[
        \|x(a)\|_{(w(a))} = x(a)^\top w(a) = \sum_{i \in [0, L]} S^{-a_i} S^{a_i - i} S^i = L.
    \]
    Further, for any other $a' = (a_0', \dots, a'_L) \in \ov{T}$, there exists $i$ such that $a'_i < a_i$, we get
    \[
        \|x(a')\|_{(w(a))} \ge S^{-a_i'} S^{a_i - i} S^i > S.
    \]
    Since $S/L = \Omega(\log ^2d)$, this means that $x(a')$ is an $\omega(\log d)$-approximation for $\|\cdot\|_{(w(a))}$. That is, any $O(\log d)$-approximate portfolio for $\overline{T}$ for ordered norms must have size $|\ov{T}| \ge 2^L/(2L^2)$. However,
    \[
        \frac{2^L}{2L^2} = 2^{\Theta((\log d)/\log\log d)} \Theta\left(\frac{(\log\log d)^2}{(\log d)^2}\right) = d^{\Theta\left(\frac{1}{\log\log d}\right) - O\left(\frac{\log\log d}{\log d}\right)} = d^{\Omega\left(\frac{1}{\log\log d}\right)}.
    \]

    To prove the second part of the theorem, we claim that in fact even for $\conv(\ov{T})$, we have any $O(\log d)$ portfolio for ordered norms must have size $\ge |\ov{T}| = d^{\Omega(1/\log\log d)}$. Let $x = \sum_{b \in \ov{T}} \lambda_b x(b) \in \conv(\ov{T})$. Fix $a \in \ov{T}$. We will show that for all $x$ such that $1 - \lambda_a > 1/4$, $\|x\|_{w(a)} = \Theta(S/L) \|x(a)\|_{w(a)}$. That is, the any $O(\log d)$-approximate minimizer $x$ of $\|\cdot\|_{w(a)}$ in $\conv(\ov{T})$ must have $\lambda_a \ge \frac{3}{4}$, implying the claim.

    First, note that for each $b$, $x(b)^\da = x(b)$. Therefore,
    \begin{align*}
        \|x\|_{w(a)} &= \bigg(\sum_{b \in \ov{T}} \lambda_b x(b) \bigg)^\top w(a) = \sum_{b \in \ov{T}} \lambda_b \|x(b)\|_{w(a)} \\
        &= \lambda_a \|x(a)\|_{w(a)} + \sum_{b \neq a} \lambda_b \|x(b)\|_{w(a)} \\
        &\ge \lambda_a L + S \sum_{b \neq a} \lambda_b \ge S(1 - \lambda_a) \ge S/4.
    \end{align*}
    Where the last inequality follows by the assumption that $1 - \lambda_a \ge 1/4$. Therefore, $\|x\|_{w(a)} = \Theta(S/L) = \omega(\log d)$. This finishes the proof.
\end{proof}

\section{Omitted proofs from Section \ref{sec: covering-polyhedra}}\label{sec: missing-proofs-2}

\begin{proof}[Proof of Lemma \ref{lem: sparsification-approximation}]
    For each $i \in [r], j \in [d]$, by construction we have $\widetilde{A}_{i, j} \le A_{i, j}$, so that if $x \in \mc{P}$, then $Ax \ge \widetilde{A}x \ge \mathbf{1}_r$, i.e., $\widetilde{\mc{P}} \subseteq \mc{P}$.

    We claim that for all $x \in \mc{P}$ there is some $\widetilde{x} \in \widetilde{\mc{P}}$ such that $\widetilde{x} \preceq (1 + \epsilon) x$. From Lemma \ref{lem: majorization-norm-order-2}, this claim implies that $\|\widetilde{x}\|_f \le (1 + \epsilon) \|x\|_f$ for all norms $\|\cdot\|_f \in \smn$, and therefore that $\min_{\widetilde{x} \in \widetilde{\mc{P}}} \|\widetilde{x}\|_f \le (1 + \epsilon) \min_{x \in \mc{P}} \|x\|_f$. This implies the lemma.

    Define $\widetilde{x} = \left(1 + \frac{\epsilon}{2}\right) \left(x + \frac{\epsilon \|x\|_1}{3d} (1, \dots, 1)\right)$. We have for all $k \in [d]$
    \[
        \|\widetilde{x}\|_{\mathbf{1}_k} = \left(1 + \frac{\epsilon}{2}\right) \left(\|x\|_{\mathbf{1}_k} + \frac{ k\epsilon \|x\|_1}{3d}\right).
    \]
    However, $\frac{\|x\|_1}{d} \le \frac{\|x\|_{\mathbf{1}_k}}{k}$, so that the above gives us
    \begin{align*}
        \|\widetilde{x}\|_{\mathbf{1}_k} \le \left(1 + \frac{\epsilon}{2}\right) \left(\|x\|_{\mathbf{1}_k} + \frac{\epsilon \|x\|_{\mathbf{1}_k}}{3}\right) = \left(1 + \frac{\epsilon}{2}\right)\left(1 + \frac{\epsilon}{3}\right) \|x\|_{\mathbf{1}_k}.
    \end{align*}
    For all $\epsilon \in (0, 1]$, $\left(1 + \frac{\epsilon}{2}\right)\left(1 + \frac{\epsilon}{3}\right) \le 1 + \epsilon$, so that $\widetilde{x} \preceq (1 + \epsilon) x$. Next, we show that $\widetilde{x} \in \widetilde{\mc{P}}$. Clearly, $\widetilde{x} \ge x \ge 0$; it remains to show that $\widetilde{A}\widetilde{x} \ge \mathbf{1}_r$.

    Fix $i \in [r]$; denote the $i$th rows of $A, \widetilde{A}$ respectively by $A_i, \widetilde{A}_i$. From the algorithm, for $j \not\in B(i)$, we have $\widetilde{A}_{i, j} \ge \frac{1}{1 + \frac{\epsilon}{2}} A_{i, j}$. Therefore,
    \begin{align*}
        \widetilde{A}_i^\top \widetilde{x} &= \sum_{j \in [d]} \widetilde{A}_{i, j} \widetilde{x}_j = \sum_{j \not\in B(i)} \widetilde{A}_{i, j} \widetilde{x}_j & (\widetilde{A}_{i, j} = 0 \: \forall \: j \in B(i)), \\
        &\ge \frac{1}{1 + \frac{\epsilon}{2}} \sum_{j \not\in B(i)} A_{i, j} \widetilde{x}_j \\
        &= \frac{1}{1 + \frac{\epsilon}{2}} \left(\sum_{j \not\in B(i)} A_{i, j} \left(1 + \frac{\epsilon}{2}\right)\left( x_j + \frac{\epsilon \|x\|_1}{3d}\right)\right) \\
        &= \sum_{j \not\in B(i)} A_{i, j} x_j + \frac{\epsilon \|x\|_1}{3d} \sum_{j \not\in B(i)} A_{i, j}.
    \end{align*}
    Now, $\sum_{j \not\in B(i)} A_{i, j} \ge a_i^* \ge \frac{\mu}{d} \sum_{j \in B(i)} A_{i, j} = \frac{3d}{\epsilon} \sum_{j \in B(i)} A_{i, j}$. Therefore,
    \[
        \frac{\epsilon \|x\|_1}{3d} \sum_{j \not\in B(i)} A_{i, j} \ge  \frac{\epsilon \|x\|_1}{3d} \cdot \frac{3d}{\epsilon} \cdot \sum_{j \in B(i)} A_{i, j} \ge \sum_{j \in B(i)} A_{i, j} x_j.
    \]
    Together, this means that $\widetilde{A}_i^\top \widetilde{x} \ge A_i^\top x \ge 1$. Since this holds for all $i \in [r]$, $\widetilde{x} \in \mc{P}$.
\end{proof}

\subsection{Proof of Lemma \ref{lem: primal-dual-counting}}

We restate the relevant convex programs and the lemma here for convenience:

\vspace{-20pt}
\begin{multicols}{2}
    \begin{equation}
        \min \|x\|_{(w)} \quad \text{s.t.} \quad Ax \ge \mathbf{1}_r, x \in \mc{D}. \tag{primal'}
    \end{equation}

    \begin{equation}
        \min \|A^\top \lambda\|_{(w)}^* \quad \text{s.t.} \quad  \lambda \in \simplex{r} \tag{dual}
    \end{equation}
\end{multicols}

\primaldualcounting*

For $j \in [d]$, denote the $j$th column of $A$ as $A^{(j)}$. $A^{(j)}$ is an $r$-dimensional vector. Recall that $S_1, \dots, S_{N^r}$ form a partition of $[d]$ such that for $l \in [N^r]$, and for all $j, j' \in S_l$, $A^{(j)} = A^{(j')}$. Also recall that $\mc{D}$ is the set of all vectors $x \ge 0$ such that $x_j = x_{j'}$ for all $j, j' \in S_l$, for all $l \in [N^r]$. From Lemma \ref{lem: P=-is-optimal}, $x(w) \in \mc{D}$.

First, for all $x \in \mc{P}$ and $\lambda \in \simplex{r}$, we get by ordered Cauchy-Schwarz \ref{lem: ordered-cauchy-schwarz} that $\|x\|_{(w)} \|A^\top \lambda \|_{(w)}^* \ge \lambda^\top A w$. Since $x \in \mc{P}$, $Ax \ge \mathbf{1}_r$, and since $\lambda \in \simplex{r}$, $\lambda^\top A x \ge 1$. Now, suppose that there is some $\lambda \in \simplex{r}$ such that $\|x(w)\|_{(w)} \|A^\top \lambda\|_{(w)}^* = 1$, i.e. equality holds. Then, since $\lambda(w) = {\arg\min}_{\lambda \in \simplex{r}}\|\lambda\|_{(w)}^*$, we get that
\[
    1 = \|x(w)\|_{(w)} \|A^\top \lambda\|_{(w)}^* \ge \|x(w)\|_{(w)} \|A^\top \lambda(w)\|_{(w)}^* \ge 1.
\]
Then equality must hold everywhere, and in particular $\|x(w)\|_{(w)} \|A^\top \lambda(w)\|_{(w)}^* = 1$. Further, from ordered Cauchy-Schwarz, it is necessary that $x(w), A^\top \lambda(w)$ satisfy some order $\pi \in \text{Perm}(d)$.

From Lemma \ref{lem: P=-is-optimal}, $x(w) \in \mc{D}$, i.e., for all $j, j' \in S_l$, for all $l \in [N^r]$, $x(w)_j = x(w)_{j'}$. Similarly, $(A^\top \lambda(w))_j$ is the dot product of the $j$th column of $A$ with $\lambda(w)$, and therefore $A^\top \lambda(w) \in \mc{D}$ as well. Since $x, A^\top \lambda(w)$ both satisfy order $\pi$, $\pi$ must induce a reduced order $\rho$ on $S_1, \dots, S_{N^r}$. This implies the lemma.

It remains to prove that there exists $\lambda$ such that $\|x(w)\|_{(w)} \|A^\top \lambda\|_{(w)}^* = 1$. Our proof is along the lines of the proof of strong duality using Slater's conditions \cite{boyd_convex_2004}, although we use the properties of ordered norms at several places. We will need the following two lemmas:

\begin{lemma}\label{lem: dual-norm-thresholding}
For vector $y \in \R^d$ such that $y_1 \ge \dots \ge y_{d} \ge 0$, let $t_1 \le t_2 \le \dots \le t_{T} = d$ be indices such that
\[
    y_1 = \dots = y_{t_1} \ge y_{t_1 + 1} = \dots = y_{t_2} \ge \dots \ge y_{t_{T - 1} + 1} = \dots = y_{t_T}.
\]
Then for any weight vector $w$, $\|y\|_{(w)}^* = \max_{k \in [d]} \frac{\|y\|_{\mathbf{1}_{k}}}{\|w\|_{\mathbf{1}_{k}}}$ is achieved at some $k \in \{t_1, \dots, t_T\}$.
\end{lemma}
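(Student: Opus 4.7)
The plan is to work with the sequence $r_k := \|y\|_{\mathbf{1}_k}/\|w\|_{\mathbf{1}_k}$ for $k \in [d]$, so that $\|y\|_{(w)}^* = \max_{k \in [d]} r_k$, and to show that the argmax can always be taken at one of the block-boundary indices $t_j$. The central tool is the mediant inequality: for any $k \ge 2$, writing
\[
r_k = \frac{\|y\|_{\mathbf{1}_{k-1}} + y_k}{\|w\|_{\mathbf{1}_{k-1}} + w_k}
\]
expresses $r_k$ as a mediant of $r_{k-1}$ and $y_k/w_k$, so $r_k$ always lies between these two quantities (strictly when they differ).

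I would define $k^*$ as the \emph{largest} index at which $r_k$ attains its maximum. If $k^* = d = t_T$ there is nothing to prove, so assume $k^* < d$. Because $k^*$ is the largest argmax, maximality forces the strict drop $r_{k^*+1} < r_{k^*}$. If moreover $y_{k^*+1} < y_{k^*}$, then $k^*$ is the last index in its constant-block, so $k^* = t_j$ for some $j$, and we are done. Thus the only remaining case is $y_{k^*+1} = y_{k^*} =: c$, i.e., $k^*$ lies strictly inside a constant-block, and this is what must be ruled out.

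The hard part is exactly this case, and my plan is a two-step mediant chase. Applying the mediant inequality at index $k^*+1$, the strict drop $r_{k^*+1} < r_{k^*}$ forces $c/w_{k^*+1} < r_{k^*}$. Since $w$ is non-increasing, $w_{k^*} \ge w_{k^*+1}$, giving $y_{k^*}/w_{k^*} = c/w_{k^*} \le c/w_{k^*+1} < r_{k^*}$. Now, provided $k^* \ge 2$, applying the mediant inequality at index $k^*$ shows that $r_{k^*}$ lies between $r_{k^*-1}$ and $y_{k^*}/w_{k^*}$, and since $r_{k^*}$ strictly exceeds the latter, we must have $r_{k^*-1} > r_{k^*}$, contradicting maximality of $r_{k^*}$. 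The edge case $k^* = 1$ is immediate: $r_1 = c/w_1$, yet we have just derived $r_1 > c/w_1$, a one-line contradiction.

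The main obstacle I anticipate is simply keeping the strictness of the inequalities straight: the whole argument only works because $k^*$ is chosen as the \emph{largest} argmax (which yields the strict drop $r_{k^*+1} < r_{k^*}$) and because $y_{k^*+1} = y_{k^*}$ combined with $w_{k^*} \ge w_{k^*+1}$ is what lets me transport the bound on $c/w_{k^*+1}$ back to $c/w_{k^*}$. Once this is tracked carefully, everything else is routine bookkeeping with the mediant inequality.
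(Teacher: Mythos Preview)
Your argument is correct, including the handling of the edge cases (the choice of the \emph{largest} argmax is exactly what produces the strict drop $r_{k^*+1}<r_{k^*}$, which in turn forces $w_{k^*+1}>0$ and makes both mediant steps legitimate). It is, however, a genuinely different route from the paper's. The paper proves the stronger pointwise statement that for every $k$ interior to a block $(t_{i-1},t_i]$ one has $r_k \le \max\{r_{t_{i-1}},\,r_{t_i}\}$: using that $y$ is constant on the block, it writes $\|y\|_{\mathbf 1_k}$ as the convex combination $(1-\lambda)\|y\|_{\mathbf 1_{t_{i-1}}}+\lambda\|y\|_{\mathbf 1_{t_i}}$ with $\lambda=\frac{k-t_{i-1}}{t_i-t_{i-1}}$, shows $\|w\|_{\mathbf 1_k}$ dominates the same convex combination of $\|w\|_{\mathbf 1_{t_{i-1}}}$ and $\|w\|_{\mathbf 1_{t_i}}$ (here is where the monotonicity of $w$ enters), and then applies a single mediant-type inequality. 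Your approach is instead a local contradiction: you look only at the indices $k^*-1,k^*,k^*+1$ and chase two mediants to force $r_{k^*-1}>r_{k^*}$. The paper's version yields slightly more information (domination by the nearest block endpoints, not just by some $t_j$) and is a one-shot computation; your version is leaner in that it never has to identify which block $k^*$ sits in or set up the convex combination, at the cost of tracking strict inequalities carefully.
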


\begin{proof}
    It is sufficient to show that for all $i \in [T]$ and $t_{i -1} \le k \le t_{i}$, we have
    \[
        \max\left\{\frac{\|y\|_{\mathbf{1}_{t_{i - 1}}}}{\|w\|_{\mathbf{1}_{t_{i - 1}}}}, \frac{\|y\|_{\mathbf{1}_{t_i}}}{\|w\|_{\mathbf{1}_{t_i}}}\right\} \ge \frac{\|y\|_{\mathbf{1}_k}}{\|w\|_{\mathbf{1}_k}}.
    \]
    Denote $z = y_{t_{i - 1} + 1} = \dots = y_{t_i}$. Consider $(1 - \lambda) \|y\|_{\mathbf{1}_{t_{i - 1}}} + \lambda \|y\|_{\mathbf{1}_{t_{i}}}$ for $\lambda = \frac{k - t_{i - 1}}{t_i - t_{i - 1}}$. Then $\lambda \in [0, 1]$, and
    \[
        (1 - \lambda)\|y\|_{\mathbf{1}_{t_{i - 1}}} + \lambda \|y\|_{\mathbf{1}_{t_{i}}} = \|y\|_{\mathbf{1}_{t_{i - 1}}} + \lambda z (t_i - t_{i - 1}) =  \|y\|_{\mathbf{1}_{t_{i - 1}}} + (k - t_{i - 1}) z = \|y\|_{\mathbf{1}_{k}}.
    \]
    Further,
    \begin{align*}
    (1 - \lambda) \|w\|_{\mathbf{1}_{t_{i - 1}}} + \lambda \|w\|_{\mathbf{1}_{t_{i}}} &= \|w\|_{\mathbf{1}_{t_{i - 1}}} + \lambda (w_{t_{i - 1} + 1} + \dots + w_{t_i}) \\
    &=  \|w\|_{\mathbf{1}_{t_{i - 1}}} + (k - t_{i - 1}) \frac{w_{t_{i - 1} + 1} + \dots + w_{t_i}}{t_i - t_{i - 1}}.
    \end{align*}
    Since $w_{t_{i - 1} + 1} \ge \dots \ge w_{t_i}$, we get that
    \[
        \frac{w_{t_{i - 1} + 1} + \dots + w_{t_i}}{t_i - t_{i - 1}} \le \frac{w_{t_{i - 1} + 1} + \dots + w_k}{k - t_{i - 1}}.
    \]
    Plugging this back in, we get $(1 - \lambda) \|w\|_{\mathbf{1}_{t_{i - 1}}} + \lambda \|y\|_{\mathbf{1}_{t_{i}}} \le \|w\|_{\mathbf{1}_{k}}$. Therefore,
    \begin{align*}
        \frac{\|y\|_{\mathbf{1}_k}}{\|w\|_{\mathbf{1}_k}} \le \frac{(1 - \lambda)  \|y\|_{\mathbf{1}_{t_{i - 1}}} + \lambda  \|y\|_{\mathbf{1}_{t_i}}}{(1 - \lambda)  \|w\|_{\mathbf{1}_{t_{i - 1}}} + \lambda  \|w\|_{\mathbf{1}_{t_i}}} \le \max\left\{\frac{ \|y\|_{\mathbf{1}_{t_{i - 1}}}}{ \|w\|_{\mathbf{1}_{t_{i - 1}}}}, \frac{ \|y\|_{\mathbf{1}_{t_i}}}{ \|w\|_{\mathbf{1}_{t_i}}}\right\}.
    \end{align*}
\end{proof}

\begin{lemma}\label{lem: dual-sup}
For $\mu \in \RO^r$,
\[
    \sup_{x \in \mc{D}} \mu^\top A x - \|x\|_{(w)} = \begin{cases}
                                                         0 & \text{if} \: \|\mu^\top A\|_{(w)}^* \le 1, \\
                                                         \infty & \text{otherwise}.
    \end{cases}
\]
\end{lemma}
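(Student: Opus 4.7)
The plan is to exploit the fact that $\mc{D}$ is a cone (closed under multiplication by $t \ge 0$) and the objective $x \mapsto \mu^\top A x - \|x\|_{(w)}$ is positively homogeneous of degree $1$. Consequently, if the objective is ever strictly positive on $\mc{D}$ then scaling drives the sup to $+\infty$, and otherwise the sup equals $0$ (attained at $x = 0 \in \mc{D}$). Thus it suffices to characterize when $\mu^\top A x \le \|x\|_{(w)}$ holds for every $x \in \mc{D}$, and the claim reduces to showing this condition is equivalent to $\|A^\top \mu\|_{(w)}^* \le 1$.

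The easy direction is to assume $\|A^\top \mu\|_{(w)}^* \le 1$. Then by ordered Cauchy--Schwarz (Lemma~\ref{lem: ordered-cauchy-schwarz}),
\[
    \mu^\top A x = (A^\top \mu)^\top x \;\le\; \|A^\top \mu\|_{(w)}^* \,\|x\|_{(w)} \;\le\; \|x\|_{(w)}
\]
for every $x \in \RO^d$, and in particular for every $x \in \mc{D}$, yielding sup $= 0$.

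The harder direction is to assume $\|A^\top \mu\|_{(w)}^* > 1$ and exhibit an explicit $x \in \mc{D}$ with $\mu^\top A x > \|x\|_{(w)}$. The main obstacle is that the standard witness for the dual norm need not lie in $\mc{D}$; this is where the structure of $\mc{D}$ (constant on each block $S_l$) interacts with the structure of $A^\top \mu$. The key observation is that since columns of $A$ indexed by $j,j' \in S_l$ are equal, we have $A^\top\mu \in \mc{D}$, i.e.\ the vector $y := A^\top \mu$ is constant on each $S_l$. Therefore, when we sort $y$ in decreasing order, the plateaus of $y$ are precisely (refinements of) the blocks $S_l$, and by Lemma~\ref{lem: dual-norm-thresholding}, the maximum in $\|y\|_{(w)}^* = \max_k \|y\|_{\mathbf{1}_k}/\|w\|_{\mathbf{1}_k}$ is attained at some index $k^*$ sitting at the right boundary of a plateau. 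In other words, the top $k^*$ entries of $y$ form a union $S_{\rho(1)} \cup \dots \cup S_{\rho(m)}$ of whole blocks for some reduced order $\rho$, with $k^* = \sum_{i \in [m]} |S_{\rho(i)}|$.

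Given such a $k^*$, set $x \in \R^d$ to be the indicator of $S_{\rho(1)} \cup \dots \cup S_{\rho(m)}$. By construction $x \in \mc{D}$ (it is constant on every block), and
\[
    \mu^\top A x = y^\top x = \|y\|_{\mathbf{1}_{k^*}}, \qquad \|x\|_{(w)} = w_1 + \dots + w_{k^*} = \|w\|_{\mathbf{1}_{k^*}},
\]
since the support of $x$ picks out exactly the $k^*$ largest coordinates of $y$ and $x$ itself has $k^*$ ones. By choice of $k^*$, $\|y\|_{\mathbf{1}_{k^*}}/\|w\|_{\mathbf{1}_{k^*}} = \|y\|_{(w)}^* > 1$, so $\mu^\top A x - \|x\|_{(w)} > 0$. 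Scaling $x$ by $t \to \infty$ then drives the objective to $+\infty$, completing the proof.
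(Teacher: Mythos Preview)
Your proof is correct and follows essentially the same route as the paper: both use ordered Cauchy--Schwarz for the case $\|A^\top\mu\|_{(w)}^* \le 1$, and for the other case both invoke Lemma~\ref{lem: dual-norm-thresholding} to locate a block-aligned index $k^*$, then take $x \in \mc{D}$ to be (a scalar multiple of) the indicator of the top $k^*$ coordinates. One small phrasing slip: the blocks $S_l$ are refinements of the plateaus of $y$, not the other way around (since $y$ is constant on each $S_l$ but distinct blocks may share a value); your conclusion that a plateau boundary $k^*$ yields a union of whole blocks is nonetheless correct.
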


\begin{proof}
    Denote $y = A^\top \mu$. Then $y \in \R^d$, and $y_j = (A^{(j)})^\top \mu$. If $\|y\|_{(w)}^* \le 1$, we get from Lemma \ref{lem: ordered-cauchy-schwarz} (ordered Cauchy-Schwarz) that
    \begin{align*}
        y^\top x - \|x\|_{(w)} \le \|y\|_{(w)}^* \|x\|_{(w)} - \|x\|_{(w)} \le (\|y\|_{(w)}^* - 1) \|x\|_{(w)} \le 0.
    \end{align*}
    However, $0 \in \mc{D}$, and therefore for $x = 0$, $y^\top x - \|x\|_{(w)} = 0$, so that $\sup_{x \in \mc{D}} y^\top x - \|x\|_{(w)} = 0$

    Now suppose that $\|y\|_{(w)}^* \ge 1$. Note that since $y_j = (A^{(j)})^\top \mu$, for all $j, j' \in S_l$ for some $l$, we get $y_j = y_{j'}$.

    Relabel the indices $[N^r]$ so that for all $j \in S_l$ and $j' \in S_{l + 1}$, $y_j \ge y_{j'}$. Further, relabel indices $[d]$ so that $S_1 = \{1, \dots, |S_1|\}$, $S_2 = \{|S_1| + 1, \dots, |S_1| + |S_2|\}$ etc, i.e.,
    \[
        y_1 = \dots = y_{|S_1|} \ge y_{|S_1| + 1} = \dots = y_{|S_1| + |S_2|} \ge \dots \ge y_{d - |S_{N^r}| + 1} = \dots = y_{d} \ge 0.
    \]
    By the previous lemma $\|y\|_{(w)}^* = \max_{k \in [d]} \frac{\|y\|_{\mathbf{1}_k}}{\|w\|_{\mathbf{1}_k}}$ achieved at some $k = |S_1| + \dots + |S_l|$. For brevity, denote this number as $k^*$.

    Define $x$ such that $x_1 = x_2 = \dots = x_{k^*} = \frac{\alpha}{\|w\|_{\mathbf{1}_{k^*}}}$ and $x_{k^* + 1} = \dots = x_d = 0$ where $\alpha$ is an arbitrarily large number. Then $x \in \mc{D}$ and $\|x\|_{(w)} = \alpha$. Further,
    \[
        y^\top x = \|y\|_{\mathbf{1}_{k^*}} \frac{\alpha}{\|w\|_{\mathbf{1}_{k^*}}}.
    \]
    Since $\frac{\|y\|_{\mathbf{1}_{k^*}}}{\|w\|_{\mathbf{1}_{k^*}}} = \|y\|_{(w)}^* > 1$, we get that $y^\top x - \|x\|_{(w)} = \alpha \left(\frac{\|y\|_{\mathbf{1}_{k^*}}}{\|w\|_{\mathbf{1}_{k^*}}} - 1\right)$, which can be arbitrarily large as $\alpha$ grows. This proves the second case as well.
\end{proof}

We proceed to prove that there exists $\lambda$ such that $\|x(w)\|_{(w)} \|A^\top \lambda\|_{(w)}^* = 1$. Let $\mc{A}$ be the set of points $(v_1, \dots, v_r, t)$ such that there exists an $x \in \mc{D}$ with $v_i \ge 1 - A_i^\top x$ for all $i \in [r]$ and $t \ge \|x\|_{(w)}$. It is easy to check that $\mc{A}$ is convex. Next, define $\mc{B} = \{(\underbrace{0, \dots, 0}_{r}, s): s < \|x(w)\|_{(w)}\}$. Clearly, $\mc{B}$ is convex. It is easy to see that $\mc{A} \cap \mc{B} = \emptyset$. Therefore, there is a separating hyperplane between $\mc{A}, \mc{B}$, i.e. there exist $\mu \in \R^d, \delta, \alpha \in \R$ such that
\begin{align}
    \mu^\top v + \delta t &\ge \alpha
    \: \forall \: (v, t) \in \mc{A}, \label{eqn: sep-hyp-1} \\
    \delta s &< \alpha \: \forall s < \|x(w)\|_{(w)}. \label{eqn: sep-hyp-2}
\end{align}

The second equation implies that $\delta \ge 0$ since otherwise we can choose $s$ to be arbitrarily small and $\delta s$ becomes arbitrarily large. Then, we get $\delta \|x(w)\|_{(w)} \le \alpha$.

Further, by a similar argument, $\mu \ge 0$. Applying eqn. (\ref{eqn: sep-hyp-1}), to point $(1 - A_1^\top x, \dots, 1 - A_r^\top x, \|x\|_{(w)}) \in \mc{A}$ that
for all $x \in \mc{D}$, $\sum_{i \in [r]} \mu_i - \mu^\top A x + \delta \|x\|_{(w)} \ge \alpha \ge \delta \|x(w)\|_{(w)}$.

\textbf{Case I}: $\mu = 0$. Then $\delta \|x\|_{(w)} \ge \alpha \ge \delta \|x(w)\|_{(w)}$. Since not both $\mu, \delta$ can be zero, $\delta > 0$. Further, $\|x(w)\|_{(w)} > 0$, so if we pick $x = 0 \in \mc{D}$, we get a contradiction.

\textbf{Case II}: $\mu \neq 0$, so we get that all for all $x \in \mc{D}$, $\sum_{i \in [r]} \mu_i - \mu^\top A x + \delta \|x\|_{(w)} \ge \alpha \ge \delta \|x(w)\|_{(w)}$. If $\delta = 0$, then $\sum_i \mu_i - \mu^\top A x \ge 0$ for all $x \in \mc{D}$. Pick arbitrarily large $x$ again, giving a contradiction. Therefore, $\delta > 0$; assume without loss of generality that it is $1$.

That is, for all $x \in \mc{D}$, $\sum_i \mu_i - \mu^\top A x + \|x\|_{(w)} \ge \|x(w)\|_{(w)}$. Taking infimum on the left-hand side and applying Lemma \ref{lem: dual-sup}, we get that $\sum_i \mu_i \ge \|x(w)\|_{(w)}$ with $\|\mu^\top A\|_{(w)}^* \le 1$. Then $\lambda := \frac{\mu}{\sum_i \mu_i} \in \simplex{r}$. Therefore,
\[
    1 \ge \|\mu^\top A\|_{(w)}^* = \sum_i \mu_i \|\lambda^\top A\|_{(w)}^* \ge \|x(w)\|_{(w)} \|\lambda^\top A\|_{(w)}^*. \qed
\]

\section{Omitted proofs from Section \ref{sec: iterative-ordering}}\label{sec: iterative-ordering-appendix}

We complete the proof of various lemmas in Section \ref{sec: iterative-ordering} on \iterativeordering.

\begin{proof}[Proof of Lemma \ref{lem: ordered-satisfaction-problems}]
    The proof for \completiontimes was supplied in the main body. Further, \ovc is a special case of \osc. Therefore, it suffices to complete the proof for \osc and \orderedtsp.

    - \osc. Given the ground set $E = \{e_1, \dots, e_n\}$ and subsets $S_1, \dots, S_m \subseteq E$, choose the set of clients $C$ as the ground set $E$, the set of objects $\mc{X}$ as the set $\{S_1, \dots, S_m\}$ of subsets, with $C(S_i) = S_i$ for all $i$. Given any satisfier $X \subseteq \mc{X}$ and order $\pi$ on $X$, define the time $t(X, \pi)_{S_i}$ for $S_i \in X$ to be the position $\pi(S_i)$ of $S_i$ in $\pi$. Then the satisfaction time of an element $e \in C(X) = \bigcup_{S \in X} S$ is precisely the cover time of the element.

    Further, given a $T > 0$, $X_T$ is simply the first $T$ subsets in $X$ according to order $\pi$, and clearly, $t(X_T, \pi_T)_{S_i} = t(X, \pi)_{S_i} = \pi(S_i)$ for all such subsets $S_i \in X_T$. This proves downward closure.

    - \orderedtsp. Given a metric on vertices $V$ and starting vertex $v_0$, choose the set of clients $C$ as $V$, and the set of objects $\mc{X}$ as $V$ also, with $C(v) = \{v\}$ for all $v \in \mc{X}$. Given any satisfier $X \subseteq \mc{X}$, any order $\pi$ on $X$ corresponds to a path consisting of the vertices of $X$. We define the time $t(X, \pi)_v$ as follows: if $\pi$ does not start at $v_0$ or if $v_0 \not\in X$, then $t(X, \pi)_v = \infty$ for all $v \in X$. This is to disallow paths that do not start at the starting vertex $v_0$. If $\pi$ starts at $v_0$, then define $t(X, \pi)_v$ to be the length of the path from $v_0$ to $v$. Since $C(v) = \{v\}$ for all $v \in V$, the satisfaction time of a vertex $v \in X$ is the same as $t(X, \pi)_v$.

    We prove downward closure next: given a $T > 0$ and $(X, \pi)$, if $\pi$ does not start at $v_0$, then $X_T = \emptyset$, and downward closure holds trivially. Otherwise, $X_T$ is precisely the set of vertices within distance $T$ of the starting vertex $v_0$ along path $\pi$, and $t(X_T, \pi_T)_v = t(X, \pi)_v$ for all $v \in X_T$.
\end{proof}

\begin{proof}[Proof of Lemma \ref{lem: composable-problems}]
    As before, it suffices to complete the proof for \osc and \orderedtsp.

    - \osc. Given satisfier $X$ of subsets of the ground set and order $\pi$ on $X$, the cost $c(X, \pi) = \max_{S_i \in X} \pi(S_i) = |X|$ is simply the size of $X$. Consider satisfiers $X_1, \dots, X_k$, corresponding orders $\pi_1, \dots, \pi_k$, and some $S \in X_j \setminus (X_1 \cup \dots \cup X_{j - 1})$. For the composed satisfier $X = \bigcup_{l \in [k]} X_l$, and the composed order $\pi = \bigoplus_{l \in [k]} \pi_l$, we have that $t(X, \pi)_S = \pi(S)$ is the position of $S$ in the order when all subsets in $X_1$ are ordered first, all subsets in $X_2 \setminus X_1$ are ordered next, and so on. Therefore,
    \[
        t(X, \pi)_S \le |X_1| + \dots + |X_{j - 1}| + \pi_j(S) = c(X_1, \pi_1) + \dots + c(X_{j - 1}, \pi_{j - 1}) + t(X_j, \pi_j)_S.
    \]

    \orderedtsp. Given satisfier $X \subseteq V$ and path $\pi$ on $X$, the cost $c(X, \pi) = \infty$ if the path does not start at $v_0$ and $c(X, \pi)$ is the length of the path otherwise. Composing paths $\pi_1, \dots, \pi_k$ on vertex sets $X_1, \dots, X_k$ respectively that each starts at $v_0$ amounts to the following: start at $v_0$, complete path $\pi_1$, and return to $v_0$, the complete path $\pi_2$ and return to $v_0$ again, and so on, shortcutting any vertices visited a second time.

    Then, given a vertex $v$ visited in path $\pi_j$, the length of the path from $v_0$ to $v$ in this composed path is at most $2(\text{length}(\pi_1) + \dots + \text{length}(\pi_{j - 1})) + \text{length from } v_0 \text{ to } v \text{ in } \pi_j$, which is precisely $2\left(\sum_{l \in [j - 1]} c(X_l, \pi_l)\right) + t(X_j, \pi_j)_v$.
\end{proof}

\begin{proof}[Proof of Lemma \ref{lem: restriction-increases-cardinality}]
    For part 1, by definition, $c(X_T, \pi_T) = \max_{x \in X_T} t(X_T, \pi_T)_x$. By downward closure, $t(X_T, \pi_T)_x \le t(X, \pi)_x$ for all $x \in X_T$. However, $X_T$ was defined as $\{x \in X: t(X, \pi)_x \le T\}$, and thus $c(X_T, \pi_T) \le T$.

    Part 2: for each client $e$ satisfied within time $T$ by $(X, \pi)$, by definition of satisfaction time there is some object $x \in X$ with $t(X, \pi)_x \le T$. Therefore, $x \in X_T$ and so $e \in C(X_T)$, i.e., $|C(X_T)|$ is at least the number of clients satisfied by $(X, \pi)$ within time $t$.
\end{proof}

\begin{proof}[Proof of Lemma \ref{lem: partial-scheduling-approximation}]
    Given processing times $p$ and budget $B \ge 0$, consider the following linear programming relaxation of the problem:
    \begin{align}
        \max &\sum_{i, j} x_{i, j} & \text{s.t.} \notag \label{lp: partial scheduling}\tag{LP-PS} \\
        \sum_j p_{i, j} x_{i, j} &\le B & \forall \: i \in [d], \label{const: scheduling-makespan-within-budget}\\
        \sum_i x_{i, j} &\le 1 & \forall \: j \in [n], \label{const: scheduling-schedule-at-most-once}\\
        x_{i, j} &= 0 & \text{if} \: p_{i, j} > B \: \forall \: i, j, \label{const: scheduling-dont-schedule-if-budget-exceeded}\\
        x &\ge 0. \notag
    \end{align}
    Variable $x_{i, j}$ indicates whether or not job $j$ has been assigned to machine $i$. The objective is to maximize the number of jobs scheduled under the constraint that the makespan is at most $B$. However, to ensure that the optimal solution does not schedule a cheap job multiple times, we include constraints (\ref{const: scheduling-schedule-at-most-once}). Further, job $j$ should not be scheduled on machine $i$ if $p_{i, j}$ exceeds the makespan $B$ (constraints (\ref{const: scheduling-dont-schedule-if-budget-exceeded})). The optimal solution $\OPT$ to the partial scheduling problem clearly satisfies these constraints, and therefore $\OPT \le \sum_{i, j} x^*_{i, j}$ for the (fractional) optimal solution $x^*$ to the LP.

    We will round $x^*$ to an integral solution $x$ with makespan $\le 2 B$ and $\sum_{i, j} x_{i, j} \ge \sum_{i, j} x^*_{i, j}$ implying that $x$ schedules at least as many jobs as $\OPT$, thus completing the proof.

    Let $k_i = \lceil \sum_{j} x^*_{i, j} \rceil$ for all $i$. We will construct an undirected bipartite graph $G$ with $n + k_1 + \dots + k_d$ vertices: $n$ vertices correspond to jobs and $k_i$ vertices correspond to machine $i$ for all $i$.

    For machine $i$, let $J_i = \{j: x^*_{i, j} > 0\}$ be the set of jobs (fractionally) assigned to $i$ under $x^*$, and relabel them so that $J_i = \{1, 2, \dots, l\}$; assume without loss of generality that $p_{i, 1} \ge \dots \ge p_{i, l}$. Let $v_1, \dots, v_{k_i}$ be the vertices corresponding to machine $i$. Start assigning weights $x^*_{i, 1}, x^*_{i, 2}, \dots $ to edges $v_1 1, v_1 2, \dots$, until we reach a job $a$ such that $x^*_{i, 1} + x^*_{i, 2} + \dots + x^*_{i, a} > 1$. Assign weight $1 - \sum_{b \le a - 1} x^*_{i, b}$, i.e., just enough weight that makes the total weight of edges incident to $v_1$ exactly $1$. The remaining weight for job $a$, $\sum_{b \le a} x^*_{i, b} - 1$ goes to edge $v_2 a$. Continue this process with job $a + 1$ on vertex $v_2$, and so on. Since $\sum_{j \in J_i} x^*_{i, j} \le k_i$, weight $x^*_{i, l}$ is assigned to edge $v_{k_1} l$. Notice that for each of $v_1, \dots, v_{k_i}$, the sum of weights of edges incident on it is at most $1$. Do this for all vertices to get $G$, and denote the weights in $G$ by $w$.

    By construction, the sum of weights of edges incident on a vertex is at most $1$ if it corresponds to a machine. From constraints (\ref{const: scheduling-schedule-at-most-once}) and the construction, the sum of weights of edges incident on vertices corresponding to jobs is also at most $1$. Therefore, $w$ forms a fractional matching on $G$. Further, the sum of all edge weights, $\|w\|_1$, is $\sum_{i, j} x^*_{i, j}$. Since $G$ is bipartite, this fractional matching can be rounded to an integral matching $y$ at least as large as $w$, i.e., $\|y\|_1 \ge \|w\|_1$. Obtain integral solution $x$ by assigning jobs to machines according to matching $y$, i.e., if job $j$ is adjacent to a vertex corresponding to machine $i$, assign $x_{i, j} = 1$; assign $x_{i, j} = 0$ in all other cases. Then we have that $\sum_{i, j} x_{i, j} = \|y\|_1 \ge \|w\|_1 = \sum_{i, j} x^*_{i, j}$.

    It remains to argue that the makespan to each machine is at most $2B$. Fix machine $i$. Suppose jobs $j_1, \dots, j_{k_i}$ are adjacent to vertices $v_1, \dots, v_{k_i}$ respectively in matching $y$. Then, since jobs were sorted in decreasing order, the processing time $p_{i, j_2}$ is upper bounded by the processing time of jobs adjacent to $v_1$ in $G$:
    \[
        p_{i, j_2} = p_{i, j_2} \sum_{j: w(v_1 j) > 0} w(v_1, j) \le \sum_{j: w(v_1 j) > 0} p_{i, j} w(v_1, j).
    \]
    Similarly, for each $b \in [2, j_{k_i}]$, we get $p_{i, j_b} \le \sum_{j: w(v_{b - 1, j}) > 0} p_{i, j} w(v_{b - 1}, j)$.

    Adding these,
    \[
        \sum_{a \in [2, k_i]} p_{i, j_a} \le \sum_{a \in [2, k_i]}\sum_{j: w(v_{b - 1, j}) > 0} p_{i, j} w(v_{b - 1}, j) \le \sum_{j \in J_i} p_{i, j} x^*_{i, j} \le B.
    \]
    Since $p_{i, j_1} \le B$ by constraint (\ref{const: scheduling-makespan-within-budget}), we get the total makespan on machine $i$ under $x$ is
    \[
        p_{i, j_1} + \sum_{a \in [2, k_i]} p_{i, j_a} \le 2B. \qedhere
    \]
\end{proof}

\section{Lower bounds for simultaneous approximations}\label{sec: lower-bounds}

We give two lower bounds on best-possible simultaneous approximations here, for \ovc and \completiontimes, respectively.

\begin{observation}\label{thm: vertex-cover-lower-bound}
There exists an instance of \ovc where no solution is better than $9/8$-simultaneous approximate for the $L_1$ and $L_\infty$ norms (i.e., Min-Sum Vertex Cover and classical Vertex Cover).
\end{observation}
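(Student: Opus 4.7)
The plan is to exhibit a small explicit instance of \ovc\ for which every ordering has simultaneous approximation factor at least $9/8$. The graph $G = (V, E)$ I would use consists of a center $u$, inner vertices $v_1, \ldots, v_8$, and pendants $w_1, \ldots, w_8$, with edge set $\{u v_i : i \in [8]\} \cup \{v_i w_i : i \in [8]\}$, so $|V| = 17$ and $|E| = 16$. This is engineered so that the $L_\infty$-optimal ordering prefers packing a minimum vertex cover (the $v_i$'s) up front, while the $L_1$-optimal ordering prefers to expose the high-degree vertex $u$ first.

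First, I would compute the two single-objective optima. Any vertex cover contains $v_i$ or $w_i$ for each pendant edge $v_i w_i$, so has size at least $8$. If some $w_i$ is chosen in place of $v_i$, then $uv_i$ is uncovered, forcing $u$ into the cover and raising its size to at least $9$. Hence $\{v_1, \ldots, v_8\}$ is the unique minimum vertex cover and $L_\infty^* = 8$. For the $L_1$-optimum, placing $u$ first covers all eight edges $u v_i$ at time $1$; each later position contributes at most one newly covered edge $v_j w_j$, so the ordering $\pi^B$ that starts with $u, v_1, \ldots, v_8$ yields $L_1(\pi^B) = 8 \cdot 1 + (2 + 3 + \cdots + 9) = 52$. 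A short case check on the position of $u$ in $\pi$ shows no placement of $u$ later than $1$ can help, so $L_1^* = 52$.

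The main step is then to show $\max\!\bigl(L_\infty(\pi)/8,\ L_1(\pi)/52\bigr) \ge 9/8$ for every ordering $\pi$. We always have $L_\infty(\pi) \ge 8$. If $L_\infty(\pi) \ge 9$ the first ratio is already $\ge 9/8$. Otherwise $L_\infty(\pi) = 8$, so the first eight vertices of $\pi$ form a vertex cover of size $8$; by the uniqueness above, this prefix must be a permutation of $\{v_1, \ldots, v_8\}$. Each $v_i$ then covers exactly two previously-uncovered edges $u v_i$ and $v_i w_i$ at its position, so the multiset of cover times is $\{1, 1, 2, 2, \ldots, 8, 8\}$, yielding $L_1(\pi) = 2(1 + 2 + \cdots + 8) = 72$ and $L_1(\pi)/L_1^* = 72/52 = 18/13 > 9/8$. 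This proves the bound, and $\pi^B$ attains it.

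The main obstacle is the uniqueness of the size-$8$ vertex cover, and the key insight is the simple swapping argument above that exploits the pendant structure; everything else is arithmetic on the two candidate orderings (pack $v_i$'s first vs.\ put $u$ first).
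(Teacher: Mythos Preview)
Your argument is correct and follows the same structural template as the paper's own proof: exhibit a graph whose minimum vertex cover of size $8$ is \emph{unique}, so that any ordering with $L_\infty(\pi)=8$ is forced to place exactly those eight vertices first; then compute its $L_1$ value and compare against an explicit ordering that leads with the high-degree vertex. The paper uses a different instance --- an even cycle $v_1,\ldots,v_{16}$ with a hub $v_0$ joined to the odd-indexed vertices --- where the unique size-$8$ cover is $\{v_1,v_3,\ldots,v_{15}\}$ and the two $L_1$ values are $108$ and $96$, giving ratio exactly $9/8$. Your spider instance is arguably cleaner (it is a tree, and the uniqueness argument is a one-liner), and it yields a larger $L_1$ gap ($72$ versus at most $52$, ratio $18/13$), so the bottleneck sits entirely on the $L_\infty$ side. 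One small remark: you do not actually need the claim $L_1^\ast = 52$; the explicit ordering $\pi^B$ already certifies $L_1^\ast \le 52$, and $72/52 > 9/8$ is all the lower bound requires.
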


\begin{figure}[h]
    \centering
    \includegraphics[width=0.6\textwidth]{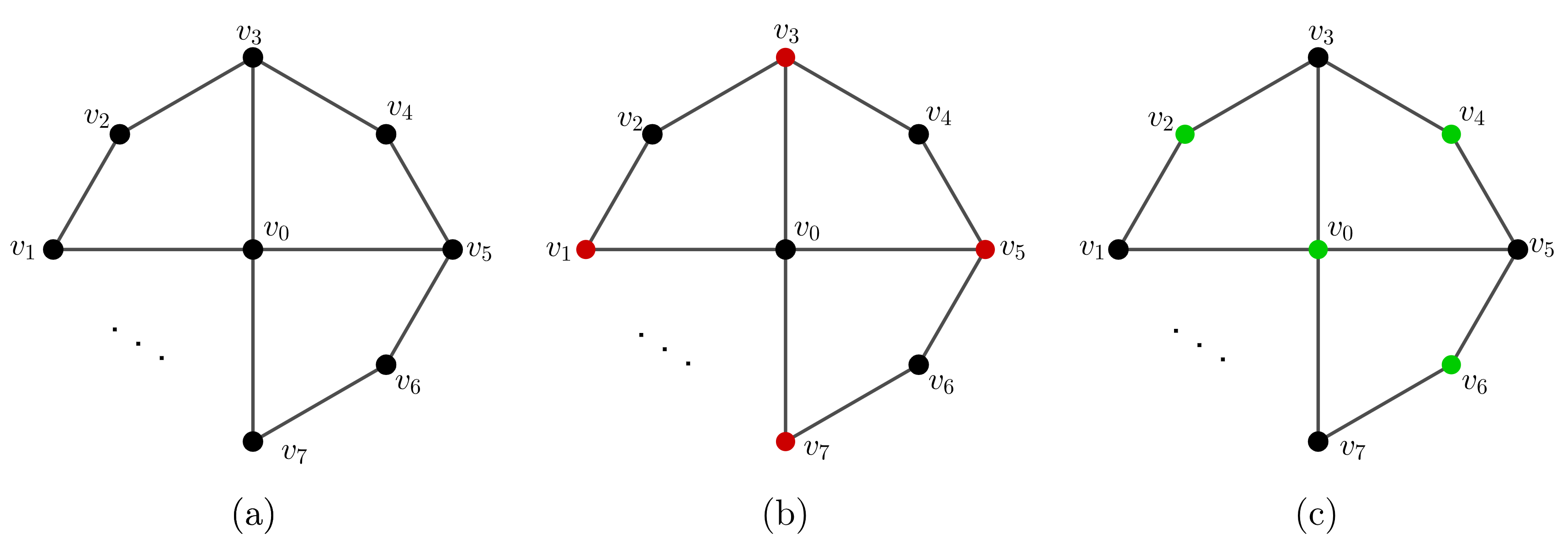}
    \caption{The vertex cover instance used in proof of Observation \ref{thm: vertex-cover-lower-bound}.}
    \label{fig: vertex-cover-lower-bound}
\end{figure}

\begin{proof}
    Consider the following instance: the graph as $2n + 1$ vertices $v_0, \ldots, v_{2n}$ with vertices $v_1, \ldots, v_{2n}$ forming cycle and vertex $v_0$ connected to each of $v_1, v_3, \ldots, v_{2n - 1}$ (Figure \ref{fig: vertex-cover-lower-bound}(a)).

    The smallest vertex cover is $\{v_1, v_3,\ldots, v_{2n - 1}\}$ (Figure \ref{fig: vertex-cover-lower-bound}(b)), and it is the only vertex cover of size $n$. Therefore, any other vertex cover is at best a $\frac{n + 1}{n}$-approximation. When $n = 8$, this is $9/8$.

    We show that this vertex cover is a $9/8$-approximation for MSVC when $n = 8$. Irrespective of the order of the vertices in this vertex cover, exactly $3$ edges are covered by each time step. Therefore, the total cover time of the edges is $3 \times (1 + \ldots + n) = \frac{3}{2}n(n + 1)$. When $n = 8$, this is $108$.

    However, if we instead use the cover $(v_0, v_1, v_3, \ldots, v_{2n - 1})$ (Figure \ref{fig: vertex-cover-lower-bound}(c)) in this order, $n$ edges are covered at the first step, and $2$ edges are covered in each subsequent step, resulting in total cover time of $n + 2 (2 + \ldots + (n + 1)) = n(n + 4)$. When $n = 8$, this is $96 = \frac{8}{9} \times 108$.
\end{proof}

Next, we show a similar bound for \completiontimes:

\begin{observation}\label{thm: job-times-lower-bound}
There exists an instance of \completiontimes where no solution is better than $1.13$-simultaneous approximate for the $L_1$ and $L_\infty$ norms (i.e., average completion time minimization and makespan minimization).
\end{observation}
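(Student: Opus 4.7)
The plan is to exhibit a small, explicit instance of \completiontimes on two identical machines that forces any single schedule to be noticeably sub-optimal for at least one of $L_1$ and $L_\infty$. Take $m = 2$ identical machines ($p_{1,j} = p_{2,j}$) and $n = 5$ jobs with sizes $(1,1,1,1,4)$. I would first pin down the two single-objective optima. The total work is $8$, so the makespan optimum is at least $4$ and is achieved by placing the size-$4$ job on one machine alone and all four unit jobs on the other, yielding loads $(4,4)$. For the sum of completion times, since (within any fixed assignment of jobs to machines) SPT order minimizes $\sum_j C_j$ without changing the makespan, it suffices to enumerate how many unit jobs $k \in \{0, 1, 2, 3, 4\}$ are assigned to the same machine as the big job and to use SPT within each machine; a direct computation gives sums $14, 12, 12, 14, 18$ respectively, so the optimum sum is $12$ (attained e.g.\ at $k = 1$, with completion-time vector $(1, 5, 1, 2, 3)$).

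Next, I would analyze the joint trade-off across all schedules. The key structural observation is that any schedule of makespan exactly $4$ must have both machine loads equal to $4$; since the big job is indivisible and has size $4$, this uniquely forces the big job alone on one machine and all four units on the other, giving sum $4 + (1+2+3+4) = 14$ and a sum-approximation ratio of $14/12 = 7/6$. Conversely, every schedule that is not of this form places at least one unit job on the big job's machine, so its makespan is at least $5$, giving a makespan-approximation ratio of at least $5/4$. Hence every feasible schedule has worst-ratio at least $\min(7/6, \, 5/4) = 7/6 \approx 1.167 > 1.13$.

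The main obstacle is being careful that the case split is exhaustive, in particular that it is legitimate to restrict attention to SPT-within-each-machine orderings when lower-bounding both objectives simultaneously. This holds because reordering jobs on a machine does not affect its total load (hence the makespan) but only permutes their completion times, and among all such permutations SPT minimizes $\sum_j C_j$; so any non-SPT schedule is dominated by its SPT rearrangement in both objectives, and the finite table above already covers all SPT candidates up to swapping the two (identical) machines.
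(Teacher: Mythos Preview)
Your argument is correct and actually yields a slightly stronger bound ($7/6 \approx 1.167$) than the paper's ($(\sqrt{61}-1)/6 \approx 1.135$), so it certainly suffices for the stated $1.13$ lower bound. The case split is clean: makespan $4$ forces the unique assignment with sum $14$, and every other assignment has makespan at least $5$; the SPT-within-machine reduction is valid because reordering on a machine preserves the load (hence the makespan) while SPT minimizes the sum.

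The paper takes a different route. It uses two \emph{unrelated} machines and only three jobs, with two parameters $\mu,\delta$ governing the processing times; it then enumerates the three essentially distinct assignments, computes each one's simultaneous ratio as a function of $(\mu,\delta)$, and optimizes to obtain $(\sqrt{61}-1)/6$. Your construction is more elementary in two respects: it stays within identical machines (so the instance is easier to describe and verify), and it avoids any parameter optimization, instead exploiting the integrality gap between makespan $4$ and $5$ to force a hard dichotomy. The paper's parameterized approach, by contrast, is more flexible if one wanted to push the constant further or adapt the instance, but for the purpose of certifying $1.13$ your argument is both shorter and gives the better constant.
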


\begin{proof}
    Consider an instance with two machines (labeled $A, B$) and three jobs. Let $\mu, \delta \in [0, 1)$ be parameters we fix later. Jobs $1, 2$ both have processing time $1$ on machine $A$ and processing time $1 + \delta$ on machine $B$. Job $3$ has processing time $1 + \mu$ on machine $A$ and $2$ on machine $B$.

    Consider solutions where jobs $1, 2$ are on different machines. Then, the optimal solution (for both makespan minimization and average completion time minimization) is to place job $3$ is on machine $A$. The makespan for this solution is $2 + \mu$, and the total completion time is $1 + (2 + \mu) + (1 + \delta) = 4 + \mu + \delta$:
    \begin{align*}
        MS_1 = 2 + \mu, \quad CT_1 = 4 + \mu + \delta.
    \end{align*}

    Suppose jobs $1, 2$ are both on machine $A$ now. The optimal solution (for both makespan and total completion time) is to place job $3$ is on machine $B$. The makespan and total completion time are respectively
    \begin{align*}
        MS_2 = 2, \quad CT_2 = 5.
    \end{align*}
    Suppose jobs $1, 2$ are both on machine $B$. The optimal solution is to place job $3$ is on machine $A$. The makespan and total completion time are:
    \begin{align*}
        MS_3 = 2(1 + \delta), \quad CT_3 = 3(1 + \delta) + (1 + \mu) = 4 + \mu + 3 \delta.
    \end{align*}

    Therefore, when $\mu + \delta \le 1$, the second solution has optimal makespan $2$ and the first solution has the optimal average completion time. The simultaneous approximation ratio of the first solution is $\frac{2 + \mu}{2}$. The simultaneous approximation ratio of the second solution is $\frac{5}{4 + \mu + \delta}$. The simultaneous approximation ratio of the third solution is $\max\left(1 + \delta, \frac{4 + \mu + 3 \delta}{4 + \mu + \delta}\right)$. The best possible simultaneous approximation ratio then is
    \[
        \min\left(\frac{2 + \mu}{2}, \frac{5}{4 + \mu + \delta}, \max\left(1 + \delta, \frac{4 + \mu + 3 \delta}{4 + \mu + \delta}\right)\right).
    \]
    Maximizing this over all $\mu, \delta$ such that $0 \le \mu, \delta$ and $\mu + \delta \le 1$, we get the value $\frac{\sqrt{61} - 1}{6} > 1.13$ at $(\mu, \delta) = \left(\frac{\sqrt{61} - 7}{3}, \frac{\sqrt{61} - 7}{6}\right)$.
\end{proof}

\section{\clustering and \ufl}\label{sec: clustering-and-facility-location}

In this section, we consider \clustering and \ufl. We are given a metric space $(X, \mathrm{dist})$ on $|X| = n$ points (also called \emph{clients}) and are required to choose a subset $F \subseteq X$ of \emph{open facilities}\footnote{One can also forbid opening facilities at some points in $X$; our algorithm still works in this more general setting.}. The induced distance vector $x^F \in \R^X$ is defined as the vector of distances between point $j$ and its nearest open facility, i.e., $x^F_j = \min_{f \in F} \mathrm{dist}(j, f)$ for all $j \in X$. Given a norm $\|\cdot\|_f$ on $\R^n$, \clustering seeks to open a set $F$ of at most $k$ facilities to minimize $\|x^F\|_f$, while \ufl allows any number of facilities to open but penalizes the number of open facilities through the combined objective function $|F| + \|x^F\|_f$.

For \clustering, we consider more general \emph{bicriteria} $(\alpha, \beta)$-approximations with objective value within factor $\alpha$ of the optimum but that violate the bound on the number of open facilities by a factor $\beta$. \cite{golovin_all-norms_2008} show that any solution to \clustering that is simultaneously $O(1)$-approximate for $\smn$ must open at least $\Omega(k\log n)$ facilities, i.e., violate the size bound by factor $\beta = \Omega(\log n)$.

Fix any $\epsilon \in (0, 1]$. Using ideas similar to \iterativeordering, we give the algorithm \iterativeclustering that finds a solution with at most $O\left(\frac{k\log n}{\epsilon}\right)$ open facilities that is simultaneously $(1 + \epsilon)$-approximate for all symmetric monotonic norms, matching the result of \cite{goel_simultaneous_2006}.
In \emph{polynomial-time}, \iterativeclustering finds a solution that is $(3 + \epsilon)$-approximate, improving the previous $(6 + \epsilon)$-approximation of \cite{goel_simultaneous_2006}.

We remark that -- as pointed out to us by a reviewer from STOC 2024 -- carefully combining the rounding techniques from \cite{shmoys_approximation_1993} and the linear program for top-$k$ norm minimization from \cite{chakrabarty_approximation_2019} matches our polynomial-time bound, and gives an even better $(2 + \epsilon)$-approximation if facilities are allowed to open anywhere in $X$. Our algorithm is a natural extension of the \iterativeordering framework that emphasizes common structure across different combinatorial problems.

We also show that the above result for \clustering leads to an $O(\log n)$-approximate portfolio of size $O(\log n)$ for \ufl, the first such result for symmetric monotonic norms to our knowledge.

\subsection{\clustering}\label{sec: clustering-clustering}

We prove the following result:

\begin{theorem}\label{thm: clustering}
For \clustering, Algorithm \textup{\iterativeclustering} gives
\begin{enumerate}
    \item \label{cor: clustering-1} a simultaneous bicriteria $\left(1 + \epsilon, O\left(\frac{\log n}{\epsilon}\right)\right)$-approximation in finite time, and
    \item \label{cor: clustering-2} a simultaneous bicriteria $\left( 3 + \epsilon, O\left(\frac{\log n}{\epsilon}\right)\right)$-approximation can in polynomial-time.
\end{enumerate}
\end{theorem}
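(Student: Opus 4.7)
The plan is to design \iterativeclustering as a natural adaptation of the \iterativeordering framework from Section~\ref{sec: iterative-ordering}, where the role of ``time'' is played by distance to the nearest open facility. Fix a geometric base $\theta > 1$ to be chosen below, and at iteration $j = 0, 1, 2, \dots$ set budget $B_j = \theta^j$. Call a set $F_j \subseteq X$ of $k$ points a \emph{$(\beta, B_j)$-cover} if the number of clients within distance $\beta B_j$ of $F_j$ is at least the maximum, over all $k$-subsets $F \subseteq X$, of the number of clients within distance $B_j$ of $F$. The algorithm repeatedly finds a $(\beta, B_j)$-cover $F_j$ and appends it to the cumulative set, halting once every client is within distance $\beta B_j$ of $\bigcup_{l \le j} F_l$. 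By scaling we may assume all nonzero distances lie in $[1, n^{O(1)}]$, so the loop terminates in $O(\log_\theta n)$ iterations, each contributing $k$ new facilities.

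The approximation analysis uses the pointwise/majorization machinery already used throughout the paper. Let $\|\cdot\|_f \in \smn$ and let $F^*$ be the optimal $k$-facility set for $\|\cdot\|_f$; denote by $F$ the cumulative output. I would show that the $i$-th smallest entry of $x^F$ is at most $\beta\theta$ times the $i$-th smallest entry of $x^{F^*}$ for every rank $i$, which by Lemma~\ref{lem: majorization-norm-order-2} immediately yields $\|x^F\|_f \le \beta\theta \|x^{F^*}\|_f$. Fix $i$ and let $T$ be the $i$-th smallest entry of $x^{F^*}$, so $F^*$ covers at least $i$ clients within distance $T$. Pick $j$ minimal with $B_j = \theta^j \ge T$; in iteration $j$ the $(\beta, B_j)$-cover $F_j$ covers at least as many clients within distance $\beta B_j$ as any $k$-subset does within distance $B_j$, hence at least $i$ clients within distance $\beta \theta^j \le \beta \theta T$. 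Since facilities are only added over time, these clients remain within that distance in the final $F$, bounding the $i$-th smallest entry of $x^F$ by $\beta\theta T$. If the algorithm halted at an earlier iteration $j' < j$, the bound holds a fortiori since $\beta \theta^{j'} < \beta \theta^j \le \beta \theta T$.

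For part \ref{cor: clustering-1} I would take $\beta = 1$ and $\theta = 1+\epsilon$, implementing the subroutine by brute-force enumeration over the $\binom{n}{k}$ candidate $k$-subsets of $X$ (finite but exponential time); the resulting bound is $\beta\theta = 1+\epsilon$ with $O(\log_{1+\epsilon} n) = O(\log n / \epsilon)$ iterations and thus $O(k\log n / \epsilon)$ open facilities. For part \ref{cor: clustering-2} I would take $\beta = 3$ and $\theta = 1 + \epsilon/3$, so that $\beta\theta = 3 + \epsilon$, and the same iteration count bound holds. The main obstacle is then constructing a polynomial-time $(3, B)$-cover subroutine; I would adapt the Hochbaum--Shmoys $3$-approximation for $k$-supplier to the partial-coverage setting as follows. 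Let $B = B_j$ and let $N$ denote the (unknown) maximum number of clients any $k$-subset covers within distance $B$; we can bisect over the $O(n)$ possible values of $N$. Form the ``client conflict'' graph $H$ on those clients $c$ that admit some facility at distance $\le B$, with an edge between $c, c'$ whenever $\mathrm{dist}(c,c') \le 2B$. Find a maximum independent set $I \subseteq V(H)$ of clients (approximately, via a greedy rule on the candidate coverage) with $|I| \le k$; for each $c \in I$ open a facility within distance $B$ of $c$. A triangle-inequality argument shows that every client whose optimal facility is at distance $\le B$ lies within distance $3B$ of the opened set, so at least $N$ clients are covered within $3B$. Working out the greedy/matching selection so that $|I|$ is maximized under $|I| \le k$ subject to coverage of any OPT-covered client is the one nontrivial step; a cleaner alternative is an LP-rounding approach on the natural bipartite assignment LP for maximum $k$-supplier coverage, which attains the same $3B$ guarantee in polynomial time.
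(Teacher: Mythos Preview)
Your high-level strategy matches the paper's almost exactly: iterate over a geometric sequence of radii, at each radius call a partial-coverage subroutine that returns $k$ centers covering (within a dilated radius) as many points as the best $k$-subset does within the original radius, and take the union. The pointwise majorization argument you give for the approximation factor is also essentially the paper's.

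The genuine gap is your sentence ``By scaling we may assume all nonzero distances lie in $[1, n^{O(1)}]$.'' Scaling multiplies every distance by the same constant and therefore cannot change the aspect ratio; an arbitrary finite metric can have aspect ratio $\exp(\Omega(n))$ or worse, and then your loop would run for far more than $O(\log n/\epsilon)$ iterations, destroying the bicriteria bound on the number of opened facilities. Relatedly, your pointwise inequality $x^\uparrow_i \le \beta\theta\,(x^{F^*})^\uparrow_i$ fails outright whenever $(x^{F^*})^\uparrow_i$ is $0$ (or below your smallest budget). The paper handles both issues by choosing the starting radius as $R_0 = D\epsilon/n$, where $D$ is the $k$-center optimum, iterating only up to $D$ (so exactly $\log_{1+\epsilon}(n/\epsilon)$ rounds), and then separately bounding the contribution of clients with optimal distance below $R_0$: at iteration $0$ those clients are covered within $\alpha R_0$, and a triangle-inequality/monotonicity argument shows their total contribution to $\|x\|_f$ is at most $\alpha\epsilon\,\|x^{\OPT}\|_f$, using $\|x^{\OPT}\|_f \ge D\cdot\|(1,0,\dots,0)\|_f$. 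This is the missing idea in your proposal.

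On the polynomial-time subroutine: the paper does not reinvent it but directly invokes the $3$-approximation for $k$-center with outliers due to Charikar, Khuller, Mount, and Narasimhan (Theorem~3.1 of \cite{charikar_algorithms_2001}), which is precisely your ``$(3,B)$-cover.'' Your Hochbaum--Shmoys sketch is in the right spirit but, as you yourself note, the selection step is not worked out; citing the existing result is both cleaner and sufficient.
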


Broadly, \iterativeclustering iteratively combines solutions that each contain $k$ facilities. Each of these solutions corresponds to a radius $R$, and subroutine \texttt{PartialClustering} attempts to get the set of $k$ facilities that covers the largest number of points within radius $R$. As with \iterativeordering, radius $R$ increases exponentially across iterations.

\begin{algorithm}[t]
    \caption{\texttt{PartialClustering}($(X, \mathrm{dist}), k, R, \alpha$)}\label{alg: partial-clustering}
    \KwData{A metric space $(X, \mathrm{dist})$, integer $k \ge 1$, radius $R \ge 0$, parameter $\alpha \ge 1$}
    \KwResult{A set $C \subseteq X$ of $k$ facilities that contains at least as many points within distance $\alpha R$ as contained by any other set $C' \subseteq X$ of $k$ facilities within distance $R$, i.e.,
        \[
            \left|B(C, \alpha R)\right| \ge \max_{C' \in \binom{X}{k}} \left|B(C', R)\right|.
        \]
    }
\end{algorithm}

\begin{algorithm}[t]
    \caption{\texttt{IterativeClustering}$((X, \mathrm{dist}), k, \epsilon, \alpha)$}\label{alg: geometric-clustering}
    \KwData{A metric space $(X, \mathrm{dist})$ on $n$ points, integer $k \ge 1$, parameter $\epsilon > 0$, parameter $\alpha \ge 1$}
    \KwResult{A set $C \subseteq X$ of $O\left(\frac{k \log n}{\epsilon}\right)$ facilities}
    \DontPrintSemicolon
    $C \gets \emptyset$\;
    $R_0 = \frac{D \epsilon}{n}$, where $D$ is the $k$-center optimum for $(X, \mathrm{dist})$\;
    \For{$l = 0, 1, \dots, \log_{1 + \epsilon}(n/\epsilon)$ \label{step: geometric-clustering-loop}}
    {
        $R \gets R_0 (1 + \epsilon)^l$\;
        $C_l \gets \texttt{PartialClustering}((X, \mathrm{dist}), k, R, \alpha)$\;
        $C \gets C \cup C_l$\;
    }
    \Return{$C$}
\end{algorithm}

For polynomial-time computations, \texttt{PartialClustering} cannot be solved exactly since it generalizes the $k$-center problem. To get efficient algorithms, we allow it to output $k$ facilities that cover as many points within radius $\alpha R$ as those covered by any $k$ facilities within radius $R$. As \cite{kumar_fairness_2000} note, \cite{charikar_algorithms_2001} give an approximation algorithm for \texttt{PartialClustering} for $\alpha = 3$, which we state in a modified form:

\begin{lemma}[Theorem 3.1, \cite{charikar_algorithms_2001}]\label{thm: clustering-partial-problem}
There exists a polynomial-time algorithm that given metric $(X, \mathrm{dist})$, integer $k \ge 1$, and radius $R$, outputs $k$ facilities that cover at least as many points within radius $3R$ as those covered by any set of $k$ facilities within radius $R$. That is, subroutine \textup{\texttt{PartialClustering}} runs in polynomial-time for $\alpha = 3$.
\end{lemma}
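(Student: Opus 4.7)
My plan is to present the greedy algorithm of \cite{charikar_algorithms_2001} and sketch its $3$-approximation analysis. The algorithm initializes $U_1 = X$ and, for $i = 1, \ldots, k$, picks
\[
    v_i \in {\arg\max}_{v \in X} |B(v, R) \cap U_i|, \qquad U_{i+1} = U_i \setminus B(v_i, 3R),
\]
i.e., selection uses the tight radius $R$ but the update uses the inflated radius $3R$. Scanning all $v$ and computing ball intersections takes $O(n^2)$ time per iteration, for a total of $O(k n^2)$. The output $C = \{v_1, \ldots, v_k\}$ satisfies $|B(C, 3R)| = |X \setminus U_{k+1}|$.

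The correctness analysis compares this against any optimal solution with centers $c_1^*, \ldots, c_k^*$ and balls $A_j^* := B(c_j^*, R)$; write $A^* := \bigcup_j A_j^*$. The engine of the proof is a one-line triangle-inequality observation: if at some iteration $i$ the greedy $R$-ball $B(v_i, R)$ meets $A_j^*$ in a still-uncovered point $p$, then
\[
    \mathrm{dist}(v_i, c_j^*) \le \mathrm{dist}(v_i, p) + \mathrm{dist}(p, c_j^*) \le 2R,
\]
so $A_j^* \subseteq B(v_i, 3R)$ and the entire ball $A_j^*$ is absorbed into the covered set at step $i$. Call such $A_j^*$ \emph{captured}, and let $\sigma(j)$ be the earliest iteration at which this happens (or $\bot$). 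Then $\bigcup_{j : \sigma(j) \neq \bot} A_j^*$ lies in $X \setminus U_{k+1}$ for free.

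The remaining task is to handle uncaptured balls. For $j$ with $\sigma(j) = \bot$, by definition every iteration $i$ has $B(v_i, R) \cap A_j^* \cap U_i = \emptyset$, so either $A_j^*$ was already fully covered or the greedy's $R$-ball is disjoint from $A_j^*$ at that step. Combined with the greedy maximality rule $|B(v_i, R) \cap U_i| \ge |A_j^* \cap U_i|$, this means the harvest at step $i$ consists of at least $|A_j^* \cap U_i|$ new points \emph{outside} $A_j^*$. An exchange/double-counting argument (matching the residual $|A_j^* \cap U_{k+1}|$ of each uncaptured $A_j^*$ against the extra coverage the greedy picks up in the iterations where it avoids $A_j^*$) then yields $|X \setminus U_{k+1}| \ge |A^*|$, which is exactly the guarantee demanded by \texttt{PartialClustering} with $\alpha = 3$.

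The main obstacle I anticipate is the bookkeeping for uncaptured balls: a greedy $3R$-ball can silently absorb part of an uncaptured $A_j^*$ through its outer annulus $R < \mathrm{dist}(\cdot, v_i) \le 3R$ without ever meeting $A_j^*$'s $R$-neighborhood, so the naive disjointness argument double-counts. The cleanest fix is an inductive potential-style argument, carried iteration by iteration, that jointly tracks (i) the set of captured indices, (ii) the shrinking residual of each uncaptured $A_j^*$, and (iii) the surplus of extra points the greedy has collected outside $A^*$; one then checks that at every step the surplus grows by at least the amount by which the uncaptured residuals could have been eroded, closing the accounting.
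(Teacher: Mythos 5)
The paper does not prove this lemma; it imports it verbatim (in slightly modified form) from Charikar et al.~\cite{charikar_algorithms_2001}, so your task was to reconstruct their argument. You have the right algorithm (greedy selection by the radius-$R$ ball, removal of the radius-$3R$ ball) and the right first key step (if $B(v_i,R)$ meets an uncovered point of $A_j^*$ then $\mathrm{dist}(v_i,c_j^*)\le 2R$, so $A_j^*\subseteq B(v_i,3R)$ is captured whole). But the heart of the proof --- the counting that handles uncaptured balls --- is not actually carried out: you state it as an ``anticipated obstacle'' and gesture at an ``inductive potential-style argument,'' and the invariant you propose (surplus grows by the amount the uncaptured residuals are eroded) is not the right one. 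Erosion of uncaptured residuals through the $3R$-annulus is harmless --- it only shrinks what you must account for --- so it is not the thing to track. The real difficulty, which your per-$j$ formulation runs into, is that ``new points outside $A_j^*$'' for a fixed uncaptured $j$ need not be outside $A^*$: they may sit inside some other optimal ball, so summing your bound over $j$ double-counts against $|A^*|$.

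The correct accounting classifies iterations globally rather than per ball. Call iteration $i$ \emph{capturing} if $B(v_i,R)\cap U_i$ meets some not-yet-captured $A_j^*$ (in which case every such $A_j^*$ is absorbed into $B(v_i,3R)$ and contributes nothing to any later $U_{i'}$), and \emph{non-capturing} otherwise. In a non-capturing iteration the harvest $B(v_i,R)\cap U_i$ is disjoint from \emph{all} of $A^*$ (captured balls have empty intersection with $U_i$; uncaptured ones are avoided by definition), and by greedy maximality $|B(v_i,R)\cap U_i|\ge |A_j^*\cap U_i|\ge |A_j^*\cap U_{k+1}|$ for every uncaptured $j$. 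Since each capturing iteration captures at least one ball, the number of non-capturing iterations is at least the number of uncaptured balls, so you can match uncaptured balls injectively to distinct non-capturing iterations; the matched harvests are pairwise disjoint and lie outside $A^*$, giving $|X\setminus U_{k+1}|\;-\;|A^*\cap(X\setminus U_{k+1})|\;\ge\;\sum_{j\ \text{uncaptured}}|A_j^*\cap U_{k+1}|\;\ge\;|A^*\cap U_{k+1}|$, hence $|X\setminus U_{k+1}|\ge|A^*|$. (One hypothesis you use silently and should state: the optimal centers are candidate centers for the greedy, which holds here because facilities are points of $X$.) Without this matching step your proposal does not yet constitute a proof.
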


We give some notation: given nonempty $F \subseteq X$ and some radius $R \ge 0$, we denote by $B(F; R)$ the set of all points within distance $R$ of $F$, i.e., $B(F; R) = \{x \in X: \exists \: y \in F \text{ with } \mathrm{dist}(x, y) \le R\}$. We say that a set of facilities $F$ \emph{covers} $p$ points within radius $R$ if $|B(F; R)| \ge p$.

Let $D$ denote the $k$-center optimum for $(X, \mathrm{dist})$. By definition, there are $k$ facilities that can cover all of $X$ within radius $D$. Therefore, the largest radius we need to consider is $D$. What is the smallest radius we need to consider? Since all of our objective norms are monotonic and symmetric, points covered within very small radii do not contribute a significant amount to the norm value. Therefore, we can start at a large enough radius, which has been set to $\frac{D\epsilon}{n}$ with some foresight.

We will first prove the following claim:
\begin{claim}\label{thm: clustering-guarantee}
For parameter $\alpha \ge 1$, \textup{\texttt{IterativeClustering}}  gives a simultaneous bicriteria $\left(\alpha (1 + 2\epsilon), O\left(\frac{ \log n}{\epsilon}\right)\right)$-approximation for symmetric monotonic norms.
\end{claim}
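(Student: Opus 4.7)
My plan is to prove the claim by showing $x^C \preceq \alpha(1+2\epsilon)\, x^{F^*}$ in the majorization order for an arbitrary symmetric monotonic norm $\|\cdot\|_f$ with optimum $F^*$, and then to invoke Lemma \ref{lem: majorization-norm-order-2}. The facility count bound $|C| = O(k \log n / \epsilon)$ is immediate, since the outer loop runs $\log_{1+\epsilon}(n/\epsilon)+1 = O(\log n / \epsilon)$ times and each call to \texttt{PartialClustering} contributes at most $k$ facilities.

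The heart of the argument is a pointwise comparison of sorted coordinates. Writing $r_i := (x^{F^*})^\downarrow_i$ and $T := |\{j : x^{F^*}_j \ge R_0\}|$, for each $i \le T$ I would pick the smallest $l$ with $R_l := R_0(1+\epsilon)^l \ge r_i$ (or $l = l^* := \log_{1+\epsilon}(n/\epsilon)$ if $r_i > D$), so that $R_l \le (1+\epsilon)\, r_i$. The guarantee of \texttt{PartialClustering} then yields $|B(C; \alpha R_l)| \ge |B(C_l; \alpha R_l)| \ge |B(F^*; R_l)| \ge |B(F^*; r_i)| \ge n - i + 1$, hence $(x^C)^\downarrow_i \le \alpha R_l \le \alpha(1+\epsilon)\, r_i$. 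The edge case $r_i > D$ is handled by noting that $\max_{C' \in \binom{X}{k}} |B(C'; D)| = n$ via the $k$-center optimum, so $|B(C_{l^*}; \alpha D)| = n$ and every coordinate of $x^C$ is $\le \alpha D < \alpha r_i$. For $i > T$ (so $r_i < R_0$), the same argument applied at $l = 0$ gives $(x^C)^\downarrow_i \le \alpha R_0 = \alpha \epsilon D / n$.

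To finish, I would convert these pointwise bounds into a top-$k$ bound for every $k \in [n]$. Summing the head contributes at most $\alpha(1+\epsilon) \|x^{F^*}\|_{\mathbf{1}_{\min(k,T)}} \le \alpha(1+\epsilon)\|x^{F^*}\|_{\mathbf{1}_k}$; when $k > T$ the remaining tail contributes at most $(k - T)\alpha R_0 \le n \alpha R_0 = \alpha \epsilon D$. The main obstacle I anticipate is that this tail error is additive while the target bound is multiplicative; I would dissolve it using the $k$-center lower bound $\|x^{F^*}\|_\infty \ge D$, which forces $\|x^{F^*}\|_{\mathbf{1}_k} \ge D$ and lets the additive tail be re-expressed as $\alpha \epsilon \|x^{F^*}\|_{\mathbf{1}_k}$. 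Combining the head and tail yields $\|x^C\|_{\mathbf{1}_k} \le \alpha(1+2\epsilon)\, \|x^{F^*}\|_{\mathbf{1}_k}$ for every $k$, i.e., $x^C \preceq \alpha(1+2\epsilon)\, x^{F^*}$, and Lemma \ref{lem: majorization-norm-order-2} closes the proof.
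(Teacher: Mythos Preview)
Your proof is correct and follows essentially the same approach as the paper: both establish the pointwise bound $(x^C)^\downarrow_i \le \alpha(1+\epsilon)(x^{F^*})^\downarrow_i$ for coordinates above $R_0$, bound the remaining coordinates by $\alpha R_0$, and absorb the resulting additive error $\alpha\epsilon D$ using the $k$-center lower bound $\|x^{F^*}\|_\infty \ge D$. The only cosmetic difference is that you package the two pieces as a majorization $x^C \preceq \alpha(1+2\epsilon)\,x^{F^*}$ and invoke Lemma~\ref{lem: majorization-norm-order-2}, whereas the paper bounds $\|x^C\|_f$ directly via the triangle inequality and the inequality $\|x^{F^*}\|_f \ge D\,\|(1,0,\dots,0)\|_f$; these are equivalent formulations of the same argument.
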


\begin{proof}
    We first show that the number of facilities output by the algorithm is $O\left(\frac{k \log n}{\epsilon}\right)$. The number of iterations in the for loop is $\log_{(1 + \epsilon)}\left(\frac{n}{\epsilon}\right) = O\left(\frac{\log n}{\epsilon} + \frac{\log(1/\epsilon)}{\epsilon}\right)$. When $\epsilon > \frac{1}{n}$, this expression is $O\left(\frac{\log n}{\epsilon}\right)$. Since each iteration adds at most $k$ facilities to $C$, we are done in this case. When $\epsilon \le \frac{1}{n}$, then $\frac{k \log n}{\epsilon} \ge n$, that is, all facilities can be opened anyway.

    Fix any symmetric monotonic norm $\|\cdot\|_f$ on $\R^n$, and let $\OPT$ denote the optimal solution for this norm and $\xOPT \in \R^n$ denote the corresponding distance vector. Let the distance vector for facilities $C$ output by the algorithm be $x$. We need to show that $\|x\|_f \le \alpha(1 + 2\epsilon) \|\xOPT\|_f$.

    By definition, $(\xOPT)^\uparrow_1 \le (\xOPT)^\uparrow_2 \le \dots \le (\xOPT)^\uparrow_n$. Let $j^*$ be the smallest index such that $(\xOPT)^\uparrow_{j^*} > R_0 = \frac{D\epsilon}{n}$. Since $\|\cdot\|_f$ is symmetric, we have $\|x^\uparrow\|_f = \|x\|_f$ and $\|(\xOPT)^\uparrow\|_f = \|\xOPT\|_f$. Our twofold strategy is to show that: \begin{enumerate}
                                                                                                                                                                                                                                                                                                                                                              \item for all $j \ge j^*$, \label{todo: clustering-first-condition} \begin{equation}\label{eqn: clustering-pointwise-bound}
                                                                                                                                                                                                                                                                                                                                                              (x)^\uparrow_j \le \alpha (1 + \epsilon) (\xOPT)^\uparrow_j,
                                                                                                                                                                                                                                                                                                                                                              \end{equation}
                                                                                                                                                                                                                                                                                                                                                              \item the contribution of $x^\uparrow_1, \dots, x^\uparrow_{j^* - 1}$ to $\|x\|_f$ is small; specifically, \label{todo: clustering-second-condition}
                                                                                                                                                                                                                                                                                                                                                              \begin{equation}\label{eqn: clustering-bound-on-small-indices}
                                                                                                                                                                                                                                                                                                                                                              \left\|\left(x^\uparrow_1, \dots, x^\uparrow_{j^* - 1}, 0, \dots, 0 \right)\right\|_f \le \alpha \epsilon \|\xOPT\|_f.
                                                                                                                                                                                                                                                                                                                                                              \end{equation}
    \end{enumerate}

    Consider the first part. We have $R_0 (1 + \epsilon)^{\log_{1 + \epsilon} (n/\epsilon)} = R_0 \frac{n}{\epsilon} = D$. That is, in the final iteration of the for loop, $R = D$. Therefore, by definition of $D$ and \texttt{PartialClustering}, $C_l$ in this iteration covers all of $X$ within radius $\alpha D$. That is, $\|x\|_\infty \le \alpha D$ since $C_l \subseteq C$.

    fix some $j \ge j^*$, and let $l \ge 0$ be the smallest integer such that $(\xOPT)^\uparrow_j \le R_0 (1 + \epsilon)^l$. If $l \ge 1 + \log_{1 + \epsilon}(n/\epsilon)$, then $(\xOPT)^\uparrow_j > R_0 (1 + \epsilon)^{l - 1} = D$. Since $\|x\|_\infty \le \alpha D$, inequality (\ref{eqn: clustering-pointwise-bound}) holds in this case.

    Otherwise, $l \le \log_{1 + \epsilon}(n/\epsilon)$. The $k$ facilities in $\OPT$ cover at least $j$ points within radius $R = R_0 (1 + \epsilon)^l$. By definition of \texttt{PartialClustering}, in iteration $l$ of the for loop, $C_l$ covers at least $j$ points within radius $\alpha R$. Since $C_l \subseteq C$, $C$ also covers at least $j$ points within radius $\alpha R$, so that $x^\uparrow_j \le \alpha R = R_0(1 + \epsilon)^l$. By definition of $l$, $(\xOPT)^\uparrow > R_0 (1 + \epsilon)^{l - 1}$, and so
    \begin{equation*}
        x^\uparrow_j \le \alpha R_0(1 + \epsilon)^l \le \alpha (1 + \epsilon) (\xOPT)^\uparrow_j.
    \end{equation*}

    We move to (\ref{eqn: clustering-bound-on-small-indices}). By definition of $j^*$, $\OPT$ covers at least $j^* - 1$ points within radius $R_0$. In iteration $0$, by definition of \texttt{PartialClustering}, $C_0$ (and therefore $C$) covers at least $(j^* - 1)$ points within radius $\alpha R_0$. That is, $x^\uparrow_{j^* - 1} \le \alpha R_0$.

    Denote $(1, 0, \dots, 0) = \mathbf{e}$. Since $\|\cdot\|_f$ is monotonic and $D$ is the $k$ center optimum, $\|\xOPT\|_f \ge \left(\|\xOPT\|_\infty, 0, \dots, 0\right) \|\mathbf{e}\|_f \ge D \|\mathbf{e}\|_f$. Therefore,
    \begin{align*}
        \left\| \left(x^\uparrow_1, \dots, x^\uparrow_{j^* - 1}, 0, \dots, 0 \right) \right\|_f &\le \sum_{j \in [j^* - 1]} x^\uparrow_j \|\mathbf{e}\|_f & (\text{triangle inequality})\\
        &\le \sum_{j \in [j^* - 1]} \alpha R_0 \|\mathbf{e}\|_f & (x^\uparrow_{j^* - 1} \le \alpha R) \\
        &< n \alpha \frac{D\epsilon}{n} \|\mathbf{e}\|_f & (j^* \le n)\\
        &\le \alpha \epsilon \|\xOPT\|_f. & (\|\xOPT\|_f \ge D \|\mathbf{e}\|_f)
    \end{align*}

    Together, inequalities (\ref{eqn: clustering-pointwise-bound}), (\ref{eqn: clustering-bound-on-small-indices}) imply that
    \begin{align*}
        \|x\|_f &\le \left\| \left(x_1^\uparrow, \dots, x^\uparrow_{j^* - 1}, 0, \dots, 0\right) \right\|_f + \left\| \left(0, \dots, 0, x^\uparrow_{j^*}, \dots, x^\uparrow_n\right) \right\|_f & (\text{triangle inequality})\\
        &\le \alpha \epsilon \|\xOPT\|_f + \alpha (1 + \epsilon) \left\| \left(0, \dots, 0, (\xOPT)^\uparrow_{j^*}, \dots, (\xOPT)^\uparrow_n\right) \right\|_f &(\text{inequalities } (\ref{eqn: clustering-pointwise-bound}), (\ref{eqn: clustering-bound-on-small-indices}))\\
        &\le \alpha \epsilon \|\xOPT\|_f + \alpha (1 + \epsilon) \|\xOPT\|_f = \alpha (1 + 2\epsilon) \|\xOPT\|_f. & (\|\cdot\|_f \text{ is symmetric monotonic})
    \end{align*}
\end{proof}

With this result in hand, our main theorem is simple to derive: we choose $\alpha = 1$ in the claim with $\epsilon/2$ as the parameter for the existence result. We choose $\alpha = 3$ in the claim with $\epsilon/6$ as the parameter for the polynomial-time result; Lemma \ref{thm: clustering-partial-problem} guarantees that the algorithm is polynomial-time.

\subsection{\textsc{Uncapacitated-Facility-Location}}\label{sec: facility-location}

First, we note that a single solution cannot be better than $\Omega(\sqrt{n})$-approximate for even the $L_1$ and $L_\infty$ norms: suppose the metric is a star metric with $n$ leaves. The distance from the center to each leaf is $\sqrt{n}$. Then the optimal $L_1$ solution is to open each facility, and the cost of this solution is $n + 1$. The optimal $L_\infty$ solution is to open just one facility at the center, the cost of this solution is $1 + \sqrt{n}$. Now, any solution that opens fewer than $n/2$ facilities has cost $\ge n/2 + (n/2) \sqrt{n} = \Omega(n \sqrt{n})$ for the $L_1$ norm and therefore is an $\Omega(\sqrt{n})$-approximation. Any solution that opens $\ge n/2$ facilities is an $\Omega(\sqrt{n})$-approximation for the $L_\infty$ norm. A similar example was noted for the $k$-clustering variant in \cite{goel_simultaneous_2006}.

This motivates us to seek larger portfolios and get a smaller approximation. The main theorem of this section gives an $O(\log n)$-approximate portfolio of size $O(\log n)$ for \ufl:

\begin{theorem}\label{cor: facility-location}
There exists a polynomial-time algorithm that given any instance of \ufl on $n$ points, outputs an $O(\log n)$-approximate portfolio of size $O(\log n)$ for symmetric monotonic norms.
\end{theorem}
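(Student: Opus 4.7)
The plan is to reduce \ufl to $k$-\prob{Clustering} by running the bicriteria algorithm of Theorem \ref{thm: clustering} for a geometric sequence of facility budgets. Concretely, for every $k \in K := \{2^0, 2^1, \ldots, 2^{\lceil \log_2 n \rceil}\}$, invoke \iterativeclustering with $\alpha = 3$ and a constant error parameter $\epsilon = 1$ to obtain a facility set $F_k \subseteq X$ with $|F_k| = O(k \log n)$ that is, simultaneously for every symmetric monotonic norm, a $4$-approximation to the $k$-clustering optimum. Output the portfolio $\{F_k : k \in K\}$; by construction it has size $O(\log n)$ and is computed in polynomial time.

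To verify the approximation, fix any symmetric monotonic norm $\|\cdot\|_f$ and let $F^*$ be the optimal \ufl solution for $\|\cdot\|_f$, with $k^* := |F^*|$ and $\OPT := k^* + \|x^{F^*}\|_f$. Choose the smallest $k \in K$ with $k \geq k^*$, so $k^* \leq k \leq 2k^*$. Because $F^*$ opens at most $k$ facilities, it is feasible for $k$-\prob{Clustering}, so the $k$-clustering optimum under $\|\cdot\|_f$ is at most $\|x^{F^*}\|_f$. Theorem \ref{thm: clustering} then guarantees
\[
\|x^{F_k}\|_f \leq 4 \|x^{F^*}\|_f \quad \text{and} \quad |F_k| = O(k \log n) = O(k^* \log n).
\]
Adding the two contributions to the \ufl objective gives
\[
|F_k| + \|x^{F_k}\|_f \leq O(\log n) \cdot k^* + 4\|x^{F^*}\|_f \leq O(\log n) \cdot \OPT,
\]
which is the claimed $O(\log n)$-approximation for $\|\cdot\|_f$. (The boundary case $k^* = 0$ is handled trivially; if needed we may prepend the empty facility set as an extra portfolio element without changing the $O(\log n)$ size.)

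There is no real obstacle once Theorem \ref{thm: clustering} is available: the proof is a black-box composition whose only substance is the observation that a single geometric grid in $k$ simultaneously controls both parts of the \ufl cost. The facility term is kept within $O(\log n) k^*$ because the bicriteria guarantee loses at most a $\log n$ factor on $k \leq 2 k^*$; the distance term is kept within a constant of $\|x^{F^*}\|_f$ because every $k \geq k^*$ makes $F^*$ itself a feasible $k$-clustering, so the $4$-approximation is measured against a benchmark no worse than $\|x^{F^*}\|_f$. The only thing to double-check during the write-up is that the geometric enumeration covers all $k^* \in \{1, \ldots, n\}$ to within a factor of $2$, which is immediate from the choice of $K$.
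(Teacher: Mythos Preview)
Your proposal is correct and follows essentially the same approach as the paper: both run the polynomial-time bicriteria $(3+\epsilon,O(\log n/\epsilon))$-approximation of Theorem~\ref{thm: clustering} with $\epsilon=1$ for the geometric sequence $k=2^0,2^1,\ldots$, then compare the portfolio element for the smallest $k\ge k^*$ against the optimal \ufl solution $F^*$ viewed as a feasible $k$-clustering. The only cosmetic difference is that the paper pads $F^*$ with extra facilities to reach exactly $2^l$ (noting this only decreases distances), whereas you directly observe that $F^*$ is feasible for $k$-clustering since $|F^*|\le k$; these are the same argument.
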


\begin{proof}
    Assume without loss of generality that the number of points $n$ is a power of $2$. Choose solutions corresponding to $k = 2^0, 2^1, 2^2, \dots, 2^{\log_2 n}$ with $\epsilon = 1$ in Theorem \ref{thm: clustering} part 2. There are $O(\log n)$ of these, and the theorem asserts that they can be found in polynomial time. We claim that these form an $O(\log n)$-approximate portfolio for $\smn$.

    Fix a norm $\|\cdot\|_f \in \smn$, and suppose the optimal solution $\OPT$ for this norm opens $k^* \in [n]$ facilities. Let $l$ be the unique integer such that $2^{l - 1} < k^* \le 2^l$, i.e., $l = \lceil \log_2 k^* \rceil$. We show that the solution corresponding to $k = 2^l$ in our portfolio is an $O(\log n)$-approximation for $\|\cdot\|_f$. Add arbitrary $2^l - k^*$ facilities to $\OPT$; this only decreases the induced distance vector $\xOPT$. For this new set of facilities, we have the guarantee from Theorem \ref{thm: clustering} that $\|x\|_f \le 4 \|\xOPT\|_f$. Therefore, the objective value of the portfolio solution is
    \[
        O(\log n) \cdot 2^l + \|x\|_f = O(\log n) \left(k^* +  \|\xOPT\|_f\right) = O(\log n) \cdot \OPT. \qedhere
    \]
\end{proof}

\end{document}